\documentclass[10pt, twocolumn, twoside]{IEEEtran}
\usepackage[ruled,vlined]{algorithm2e}
\usepackage{cite}
\usepackage{graphicx}
\usepackage{psfrag}
\usepackage{subfigure}
\usepackage{url}
\usepackage{amsmath}
\usepackage{array}
\usepackage{amssymb}
\usepackage{amsfonts}
\usepackage{graphicx}
\usepackage{epstopdf}

\newtheorem{lemma}{Lemma}

\newtheorem{definition}{Definition}

\newtheorem{example}{Example}

\usepackage{textcomp}
\usepackage{multirow}
%%%%%%%%%%%%%%%%%%%%%%%%%%%%%%%%%%%%%%%%%%%%%%%%%%%%%%%%%%%%%%%%%%%%%%%
\usepackage{threeparttable}
\input{epsf.sty}

\title{Wireless Network-Coded Accumulate-Compute and Forward Two-Way Relaying}
% \small
%\author[shortname]{author1 \inst{1} \and author2 \inst{2}}
%\institute[shortinst]{\inst{1} affiliation for author1 \and %
%                      \inst{2} affiliation for author2}

%\begin{document}

%\author{Srishti Shukla,\textsuperscript{1} Vijayvaradharaj T Muralidharan\textsuperscript{2} and B. Sundar Rajan \textsuperscript{2}}
\begin{document}

%\textsuperscript{1}IISc Mathematics Initiative (IMI), Dept. of Mathematics and Dept. of ECE, IISc Bangalore\\
%\textsuperscript{2} Dept. of ECE, IISc Bangalore
\author{
\authorblockN{Srishti Shukla \textsuperscript{$\dagger$}, Vijayvaradharaj T Muralidharan\textsuperscript{$\#$}  and B. Sundar Rajan\textsuperscript{$\dagger$}}\\
\authorblockA{Email: {$\lbrace$srishti, tmvijay, bsrajan$\rbrace$} @ece.iisc.ernet.in\\
\textsuperscript{$\dagger$}IISc Mathematics Initiative (IMI), Dept. of Mathematics and Dept. of Electrical Comm. Engg., IISc, Bangalore\\
\textsuperscript{$\#$} Dept. of Electrical Comm. Engg., IISc, Bangalore
}
}

\maketitle
%\pagestyle{plain}	
%%%%%%%%
\begin{abstract}
The design of modulation schemes for the physical layer network-coded two way wireless relaying scenario is considered. It was observed by Koike-Akino et al. \cite{KoPoTa} for the two way relaying scenario, that adaptively changing the network coding map used at the relay according to the channel conditions greatly reduces the impact of multiple access interference which occurs at the relay during the MA Phase and all these network coding maps should satisfy a requirement called \textit{exclusive law}. We extend this approach to an Accumulate-Compute and Forward protocol which employs two phases: Multiple Access (MA) phase consisting of two channel uses with independent messages in each channel use, and Broadcast (BC) phase having one channel use. Assuming that the two users transmit points from the same 4-PSK constellation, every such network coding map that satisfies the exclusive law can be represented by a Latin Square with side 16, and conversely, this relationship can be used to get the network coding maps satisfying the exclusive law. Two methods of obtaining this network coding map to be used at the relay are discussed. Using the structural properties of the Latin Squares for a given set of parameters, the problem of finding all the required maps is reduced to finding a small set of maps. Having obtained all the Latin Squares, the set of all possible channel realizations is quantized, depending on which one of the Latin Squares obtained optimizes the performance. The quantization thus obtained, is shown to be the same as the one obtained in \cite{MNR} for the 2-stage bidirectional relaying.   
%This is achieved using the notions of isotopic and transposed Latin Squares. Further, the channel conditions for which the bit-wise XOR will perform well is analytically obtained.
\end{abstract}

\section{Background}

The concept of physical layer network coding has attracted a lot of attention in recent times. The idea of physical layer network coding for the two way relay channel was first introduced in \cite{ZhLiLa}, where the multiple access interference occurring at the relay was exploited so that the communication between the end nodes can be done using a two stage protocol. Information theoretic studies for the physical layer network coding scenario were reported in \cite{KiMiTa}, \cite{PoYo}. The design principles governing the choice of modulation schemes to be used at the nodes for uncoded transmission were studied in \cite{KoPoTa}. An extension for the case when the nodes use convolutional codes was done in \cite{KoPoTa_conv}. A multi-level coding scheme for the two-way relaying was proposed in \cite{HeN}. \\

We consider the two-way wireless relaying scenario shown in Fig. 1, where two-way data transfer takes place among the nodes A and B with the help of the relay R. It is assumed that the two nodes operate in half-duplex mode, i.e., they cannot transmit and receive at the same time in the same frequency band. The relaying protocol consists of two phases, \textit{multiple access} (MA) phase, consisting of two channel uses during which A and B transmit to R twice, two independent messages in the two channel uses, with points from 4-PSK constellation, and \textit{broadcast} (BC) phase, in which R transmits to A and B. The relay node R accumulates the information sent by the user nodes in the first and second channel use of the MA phase, and transmits in the BC phase a message that contains information about all the four messages received by it in the MA phase. Network Coding is employed at R in such a way that A(/B) can decode the two messages transmitted by B(/A), given that A(/B) knows its own messages. We call this strategy accumulate-compute and forward (ACF) protocol.\\

It was observed in \cite{KoPoTa} and \cite{MNR} for 4-PSK, that for uncoded transmission, the network coding map used at the relay needs to be changed adaptively according to the channel fade coefficient, in order to minimize the impact of multiple access interference. In other words, the set of all possible channel realizations is quantized into a finite number of regions, with a specific network coding map giving the best performance in a particular region. It is shown in \cite{NMR} for any choice of signal sets of equal cardinality used at the two users, that every such network coding map that satisfies the \textit{exclusive law} is representable as a Latin Square and conversely, this relationship can be used to get the network coding maps satisfying the exclusive law. \\

\begin{definition}
A Latin Square of order $M$ is an $M \times M$ array in which each cell contains a symbol from a set of t different symbols such that each symbol occurs at most once in each row and column \cite{Rod}.\\
\end{definition}

\begin{figure}[tp]
\center
\includegraphics[height=15mm]{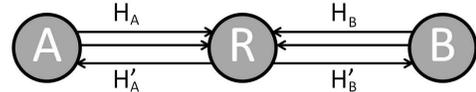}
\centering
{\caption{A two-way ACF relay channel}}
\end{figure}

%In our physical layer network coding strategy, the relay stores the information it receives during the first channel use, and combines it along with the independent information it receives during the second channel use of the MA phase, and during BC phase, it transmits a combined function of all the received information. 
  Similar to the ACF protocol, a store-and-forward protocol has been earlier studied in \cite{LXT}, for the two-way relaying channel. In \cite{LXT}, the authors derive an upper bound on the ergodic sum-capacity for the two-way relaying scenario when delay tends to infinity, and propose two alternative awaiting and broadcast (AAB) schemes which approach the new upper bound at high SNR. Using numerical results, they show that the proposed AAB schemes significantly outperforms the traditional physical layer network coding methods without delay in terms of ergodic maximum sum rates. However, modulation and physical layer network coding have not been addressed in \cite{LXT}.\\

\begin{figure}[tp]
\center
\includegraphics[height=40mm]{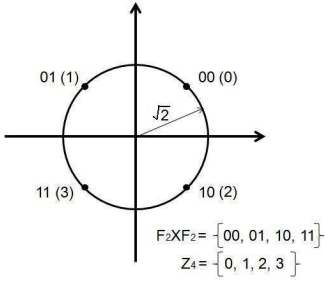}
\caption{4-PSK constellation}
\end{figure}

The remaining content is organized as follows: Section II discusses the basic concepts, definitions and a summary of the contributions of this paper. Section III demonstrates the network code obtained using Cartesian Product that is utilized at the relay for two-way ACF relaying which removes the fade states associated with the channels. In Section IV, we show how this network code can be obtained using Singularity Removal Constraints. Section V gives results based on structural properties of Latin Squares. In Section VI the complex plane is quantized depending on which one of the obtained Latin Squares maximizes the minimum cluster distance and Section VII gives the simulation results that demonstrate the improvement in the performance using the suggested scheme. Section VIII concludes the paper.

\section{Preliminaries}
Let $\mathcal{S}$ denote the symmetric 4-PSK constellation $\left\{\pm 1\pm i\right\}$ as shown in Fig. 2, used at A and B. Assume that A(/B) wants to send two 2-bit binary tuples to B(/A). Let $ \mu : \mathbb{F}^{2}_{2} \rightarrow \mathcal{S} $ denote the mapping from bits to complex symbols used at A and B where $\mathbb{F}_{2}=\left\{0,1\right\}$. Let $ x_{A_{1}}=\mu\left(s_{A_{1}}\right), x_{B_{1}}=\mu\left(s_{B_{1}}\right) \in \mathcal{S}$ denote the complex symbols transmitted by A and B at the first channel use respectively, and $ x_{A_{2}}=\mu\left(s_{A_{2}}\right), x_{B_{2}}=\mu\left(s_{B_{2}}\right) \in \mathcal{S}$ denote the complex symbols transmitted by A and B at the second channel use respectively, where $s_{A_{1}}, s_{B_{1}}, s_{A_{2}}, s_{B_{2}} \in \mathbb{F}^{2}_{2}.$\\

\noindent \textit{Multiple Access (MA) Phase:}\\
\indent  It is assumed that the channel state information is not available at the transmitting nodes A and B during the MA phase. The received signal at R at first channel use is given by 
\begin{equation}
\label{yr1}
Y_{R_{1}}=H_{A}x_{A_{1}}+H_{B}x_{B_{1}}+Z_{R_{1}}
\end{equation}
and the received signal at R at the second channel use,
\begin{equation}
\label{yr2}
Y_{R_{2}}=H_{A}x_{A_{2}}+H_{B}x_{B_{2}}+Z_{R_{2}}
\end{equation} 
where $H_{A}$ and $H_{B}$ are the fading coefficients associated with the A-R and B-R link respectively. Note that we are taking $H_{A}$ and $H_{B}$ to be the same for the two channel uses. The additive noise $Z_{R_{1}}$ and $Z_{R_{2}}$ are assumed to be $\mathcal{CN}\left(0,\sigma^2 \right)$, where $\mathcal{CN}\left(0,\sigma^2 \right)$ denotes the circularly symmetric complex Gaussian random variable with variance $\sigma^2$. We assume a block fading scenario, with $z=\gamma e^{j\theta}=H_{B}/H_{A}$, where $\gamma \in \mathbb{R}^+$ and $-\pi \leq \theta \leq \pi$, is referred to as the \textit{fade state} for the first and second transmission by A and B at the first and second channel use, and for simplicity can also be denoted by $\left(\gamma, \theta\right)$. Also, it is assumed that $z$ is distributed according to a continuous probability distribution. \\

\begin{figure*}
\footnotesize
\begin{align}
\label{dist}
&
d^{2}_{min}\left(\gamma e^{j\theta}\right)=\hspace{-0.5 cm}\min_{\substack {{((x_{A_{1}},x_{B_{1}}),(x_{A_{2}},x_{B_{2}})),((x'_{A_{1}},x'_{B_{1}}),(x'_{A_{2}},x'_{B_{2}})) \in \mathcal{S}^{4},} \\ {((x_{A_{1}},x_{B_{1}}),(x_{A_{2}},x_{B_{2}})) \neq ((x'_{A_{1}},x'_{B_{1}}),(x'_{A_{2}},x'_{B_{2}}))}}}\hspace{-0.1 cm} \left\{ \left| \left( x_{A_{1}}-x'_{A_{1}}\right)+\gamma e^{j \theta} \left(x_{B_{1}}-x'_{B_{1}}\right) \right|^{2} + \left| \left( x_{A_{2}}-x'_{A_{2}}\right)+\gamma e^{j \theta} \left(x_{B_{2}}-x'_{B_{2}}\right) \right|^{2} \right\}\\
\hline
\label{mle}
&
\left((\hat{x}_{A_{1}}, \hat{x}_{B_{1}}),(\hat{x}_{A_{2}}, \hat{x}_{B_{2}})\right)= \arg \min_{\substack {((x'_{A_{1}}, x'_{B_{1}}),(x'_{A_{2}}, x'_{B_{2}})) \in \mathcal{S}^{4}}} \left\{ \left|Y_{R_{1}} - H_{A} x'_{A_{1}}-H_{B} x'_{B_{1}}\right|^{2} + \left|Y_{R_{2}} - H_{A} x_{A'_{2}}-H_{B} x'_{B_{2}}\right|^{2}\right\}\\
\hline
\label{mel1}
&
\mathcal{M}^{\gamma, \theta}\left(\left(x_{A_{1}},x_{A_{2}}\right),\left(x_{B_{1}},x_{B_{2}}\right)\right) \neq \mathcal{M}^{\gamma, \theta}\left(\left(x'_{A_{1}},x'_{A_{2}}\right),\left(x_{B_{1}},x_{B_{2}}\right)\right), \ whenever \left(x_{A_{1}},x_{A_{2}}\right) \neq \left(x'_{A_{1}},x'_{A_{2}}\right) ~ \forall x_{B_{1}}, x_{B_{2}} \in \mathcal{S} \\
%\vspace{1 cm}
\hline
\label{mel2}
&
\mathcal{M}^{\gamma, \theta}\left(\left(x_{A_{1}},x_{A_{2}}\right),\left(x_{B_{1}},x_{B_{2}}\right)\right) \neq \mathcal{M}^{\gamma, \theta}\left(\left(x_{A_{1}},x_{A_{2}}\right),\left(x'_{B_{1}},x'_{B_{2}}\right)\right), \ whenever \left(x_{B_{1}},x_{B_{2}}\right) \neq \left(x'_{B_{1}},x'_{B_{2}}\right) ~ \forall x_{A_{1}}, x_{A_{2}} \in \mathcal{S} \\
\hline
%\vspace{1 cm}
\label{cl1}
&
\left(d_{min}^{\mathcal{L}_{i},\mathcal{L}_{j}}\left(\gamma e^{j \theta}\right)\right)^{2}=\hspace{-0.2 cm}\min_{\substack {{((x_{A_{1}},x_{B_{1}}),(x_{A_{2}},x_{B_{2}})) \in \mathcal{L}_{i}},\\ ((x'_{A_{1}},x'_{B_{1}}),(x'_{A_{2}},x'_{B_{2}})) \in \mathcal{L}_{j}}} \hspace{-0.2 cm} \left\{ \left| \left( x_{A_{1}}-x'_{A_{1}}\right)+\gamma e^{j \theta} \left(x_{B_{1}}-x'_{B_{1}}\right) \right|^{2} + \left| \left( x_{A_{2}}-x'_{A_{2}}\right)+\gamma e^{j \theta} \left(x_{B_{2}}-x'_{B_{2}}\right) \right|^{2} \right\}\\
\hline
\label{cl2}
&
d^{2}_{min}\left(\mathcal{C}^{\gamma, \theta}\right)=\hspace{-0.6 cm}\min_{\substack {{((x_{A_{1}},x_{B_{1}}),(x_{A_{2}},x_{B_{2}})),((x'_{A_{1}},x'_{B_{1}}),(x'_{A_{2}},x'_{B_{2}})) \in \mathcal{S}^{4},} \\ {\mathcal{M}^{\gamma, \theta}((x_{A_{1}},x_{B_{1}}),(x_{A_{2}},x_{B_{2}})) \neq \mathcal{M}^{\gamma, \theta}((x'_{A_{1}},x'_{B_{1}}),(x'_{A_{2}},x'_{B_{2}}))}}}\hspace{-0.6 cm} \left\{ \left| \left( x_{A_{1}}-x'_{A_{1}}\right)+\gamma e^{j \theta} \left(x_{B_{1}}-x'_{B_{1}}\right) \right|^{2} + \left| \left( x_{A_{2}}-x'_{A_{2}}\right)+\gamma e^{j \theta} \left(x_{B_{2}}-x'_{B_{2}}\right) \right|^{2} \right\}\\
\hline
\label{cl3}
&
d^{2}_{min}\left(\mathcal{C}^{h}, \gamma e^{j \theta}\right)=\hspace{-0.7 cm}\min_{\substack {{((x_{A_{1}},x_{B_{1}}),(x_{A_{2}},x_{B_{2}})),((x'_{A_{1}},x'_{B_{1}}),(x'_{A_{2}},x'_{B_{2}})) \in \mathcal{S}^{4},} \\ {\mathcal{M}^{h}((x_{A_{1}},x_{B_{1}}),(x_{A_{2}},x_{B_{2}})) \neq \mathcal{M}^{h}((x'_{A_{1}},x'_{B_{1}}),(x'_{A_{2}},x'_{B_{2}}))}}}\hspace{-0.4 cm} \left\{ \left| \left( x_{A_{1}}-x'_{A_{1}}\right)+ \gamma e^{j \theta} \left(x_{B_{1}}-x'_{B_{1}}\right) \right|^{2} + \left| \left( x_{A_{2}}-x'_{A_{2}}\right)+ \gamma e^{j \theta} \left(x_{B_{2}}-x'_{B_{2}}\right) \right|^{2} \right\}.\\
\hline
\nonumber
\end{align}
\end{figure*}

Let $ \mathcal{S}_{R} \left( \gamma, \theta \right)$ denote the effective constellation seen at the relay during the MA phase, i.e., 
$$ \mathcal{S}_{R} \left( \gamma, \theta \right) = \left\{(x_{i} + \gamma e^{j \theta}y_{i}, x_{j} + \gamma e^{j \theta}y_{j})| x_{i}, y_{i}, x_j, y_j \in \mathcal{S}\right\}. $$
The effective constellation remains the same over the two channel uses, since we assume $H_{A}$ and $H_{B}$ and hence the ratio $ H_{B}/H_{A}= \gamma e^{j \theta}$ to be the same during the two channel uses.\\

Let $d_{min}\left(\gamma e^{j \theta}\right)$ denote the minimum distance between the points in the constellation $ \mathcal{S}_{R} \left( \gamma, \theta \right) $ during MA phase, as given by (\ref{dist}) on the next page. From (\ref{dist}), it is clear that there exist values of $\gamma e^{j \theta}$, for which $d_{min}\left(\gamma e^{j \theta}\right)=0$. Let, $$\mathcal{H}=\left\{\gamma e^{j \theta} \in \mathbb{C} | d_{min}\left(\gamma e^{j \theta}\right)=0 \right\}.$$ The elements of $\mathcal{H}$ are called singular fade states. For singular fade states, $\left|\mathcal{S}_{R} \left( \gamma, \theta \right)\right|< 4^{4}.$\\

\begin{definition}
A fade state $\gamma e^{j \theta}$ is defined to be a \textit{singular fade state} for the \textit{ACF two-way relaying}, if the cardinality of the signal set $ \mathcal{S}_{R} \left( \gamma, \theta \right)$ is less than $4^{4}$. Let $\mathcal{H} $ denote the set of singular fade states for the two-way ACF relaying. \\
\end{definition}

Let $\left(\hat{x}_{A_{1}}, \hat{x}_{B_{1}}\right) \text{~and~} \left(\hat{x}_{A_{2}}, \hat{x}_{B_{2}}\right) \in \mathcal{S}^{2}$ denote the Maximum Likelihood (ML) estimate of $\left(x_{A_{1}}, x_{B_{1}}\right) \text{~and~} \left(x_{A_{2}}, x_{B_{2}}\right)$ at R based on the received complex numbers $Y_{R_{1}}$ and $Y_{R_{2}}$ at the two channel uses, as given in (\ref{mle}). \\

%These equations can also be written as
%
%$$ \hat{X}=\arg \min_{\substack {\left(x'_{A_{1}}, x'_{B_{1}}\right) \in \mathcal{S}^{2} \\ \left(x'_{A_{2}}, x'_{B_{2}}\right) \in \mathcal{S}^{2}}}\left\|Y - HX\right\| $$
%
%
%where,
%\begin{align}
%\nonumber
%& X=\left[\left(\hat{x}_{A_{1}}, \hat{x}_{B_{1}}\right) \ \ \left(\hat{x}_{A_{2}}, \hat{x}_{B_{2}}\right)\right]\\
%\nonumber
%& Y=\left[Y_{R_{1}}\  Y_{R_{2}}\right]\\
%\nonumber
%\vspace{0.1cm}
%& H=\left[H_{A}\  H_{B}\right]\\
%\nonumber
%\vspace{0.1cm}
%& X=
%\left[ {\begin{array}{cc}
%\vspace{0.15cm}
%x'_{A_{1}} &  x'_{A_{2}}\\
%\vspace{0.15cm}
%x'_{B_{1}} &  x'_{B_{2}}\\
%\end{array} } \right]
%\nonumber
%\end{align}

\noindent \textit{Broadcast (BC) Phase:}\\
\indent Depending on the value of $\gamma e^{j \theta}$, R chooses a map $\mathcal{M}^{\gamma, \theta} : \mathcal{S}^4 \rightarrow \mathcal{S}' $ where $\mathcal{S}^{'}$ is a complex signal set of size between $4^{2}$ and $4^{4}$ used by R during the \textit{BC} phase.\\

 The received signals at A and B during the BC phase are respectively given by,
\begin{equation}
Y_{A}=H'_{A}X_{R}+Z_{A} \text{~and~} Y_{B}=H'_{B}X_{R}+Z_{B}
\end{equation}

\noindent where $X_{R}=\mathcal{M}^{\gamma, \theta} \left(\left(\hat{x}_{A_{1}},\hat{x}_{B_{1}}\right), \left(\hat{x}_{A_{2}},\hat{x}_{B_{2}}\right)\right) \in \mathcal{S}'$ is the complex number transmitted by R. The fading coefficients corresponding to the R-A and R-B links are given by $H'_{A}$ and $H'_{B}$ respectively and the additive noises $Z_{A}$ and $Z_{B}$ are  $\mathcal{CN}\left(0,\sigma^{2}\right)$. \\

 The elements in $\mathcal{S}^4 $ which are mapped to the same signal point in $\mathcal{S}'$ by the map $\mathcal{M}^{\gamma, \theta}$ are said to form a cluster. Let $\left\{\mathcal{L}_{1}, \mathcal{L}_{2},.., \mathcal{L}_{l}\right\}$ denote the set of all such clusters. The formation of clusters is called clustering, denoted by $\mathcal{C}^{\gamma e^{j \theta}}$.\\

In order to ensure that A(/B) is able to decode B's(/A's) messages, the clustering $\mathcal{C}^{\gamma e^{j \theta}}$ should satisfy the exclusive law, as given in (\ref{mel1}), (\ref{mel2}) above.\\

\begin{definition}
The cluster distance between a pair of clusters $\mathcal{L}_i$ and $\mathcal{L}_j$ is the minimum among all the distances calculated between the points $\left(\left(x_{A_{1}},x_{B_{1}}\right), \left(x_{A_{2}},x_{B_{2}}\right)\right) \in \mathcal{L}_{i}$ and $\left(\left(x'_{A_{1}},x'_{B_{1}}\right), \left(x'_{A_{2}},x'_{B_{2}}\right)\right) \in \mathcal{L}_{j}$ in the effective constellation used by the relay node R, as given in \eqref{cl1} above.\\

\end{definition}

\begin{definition}
The \textit{minimum cluster distance} of the clustering $\mathcal{C}^{\gamma e^{j \theta}}$ is the minimum among all the cluster distances, as given in \eqref{cl2} above.\\
\end{definition}

The minimum cluster distance determines the performance during the MA phase of relaying. The performance during the BC phase is determined by the minimum distance of the signal set $\mathcal{S}^{'}$. For values of $ \gamma e^{j \theta}$ in the neighborhood of the singular fade states, the value of $d_{min}\left(\mathcal{C}^{\gamma e^{j \theta}}\right)$ is greatly reduced, a phenomenon referred to as \textit{distance shortening} [4]. To avoid distance shortening, for each singular fade state, a clustering needs to be chosen such that the minimum cluster distance is non zero. \\

A clustering $\mathcal{C}^h$ is said to remove singular fade state $h \in \mathcal{H}, $ if $d_{min}\left(\mathcal{C}^h\right)>0$. For a singular fade state $h \in \mathcal{H} $, let $\mathcal{C}^{h} $ denote the clustering which removes the singular fade state $h$ (if there are multiple clusterings which remove the same singular fade state $h$, choose any of the clusterings). Let $\mathcal{C^{H}}=\left\{\mathcal{C}^{h} :h\in \mathcal{H}\right\} $ denote the set of all such clusterings. \\

\begin{definition}
The minimum cluster distance of the clustering $\mathcal{C}^{h}, h\in \mathcal{H}$ at the fade state $\gamma e^{j \theta}$ which is not necessarily a singular fade state, denoted by $d_{min}\left(\mathcal{C}^{h},\gamma e^{j \theta}\right)$, is as given in (\ref{cl3}). Note that if $\gamma e^{j\theta} =h \in \mathcal{H},$ $d_{min}\left(\mathcal{C}^{h},h\right),$ reduces to $d_{min}\left(\mathcal{C}^{h}\right)$ given in \eqref{cl2}.
\end{definition}

%For $h \notin \mathcal{H}, $ the clustering $\mathcal{C}^{h} $ is chosen to be $\mathcal{C}^{\left[\gamma e^{j \theta}\right]}$, which satisfies $ d_{min}\left(\mathcal{C}^{\left[\gamma e^{j \theta}\right]},h\right) \geq d_{min}\left(\mathcal{C}^{\left[\gamma' e^{j \theta'}\right]},h\right), \forall \gamma e^{j \theta} \neq \gamma' e^{j \theta'} \in \mathcal{H}.$ 
In general, the channel fade state $\gamma e^{j \theta}$ need not be a singular fade state. In such a scenario, among all the clusterings which remove the singular fade states, the one which has the maximum value of the minimum cluster distance at $\gamma e^{j\theta}$ is chosen by the relay R. In other words, for $\gamma e^{j\theta} \notin \mathcal{H}, $ the clustering $\mathcal{C}^{\gamma,{\theta}} $ is chosen to be $\mathcal{C}^{h}$ by the relay R, which satisfies $ d_{min}\left(\mathcal{C}^{h},\gamma e^{j\theta}\right) \geq d_{min}\left(\mathcal{C}^{h'},\gamma e^{j\theta}\right), \forall h \neq h' \in \mathcal{H}.$ Since the clusterings which remove the singular fade states are known to all the three nodes and are finite in number, the clustering used for a particular realization of the fade state can be indicated by R to A and B using overhead bits.

In \cite{NMR}, such clusterings that remove singular fade states for two-way 2-stage relaying scenario were obtained with the help of Latin Squares.\\

The contributions of this paper are as follows:
\begin{itemize}
\item It is shown that if the users A and B transmit points from the same 4-PSK constellation, the clusterings proposed in \cite{NMR} for two user case can be utilized to get clusterings for this case for removing the singular fade states containing either 16 or 25 points by introducing the notion of Cartesian Product of Clusters. In other words, the $16 \times 16$ Latin Squares representing the ACF relaying can be obtained with the help of $4 \times 4$ Latin Squares representing the clusterings for two-way 2-stage relaying as given in \cite{NMR}.
%\item It is also shown that the $16 \times 16$ Latin Squares representing the ACF relaying can be obtained with the help of $4 \times 4$ Latin Squares representing the clusterings for two-way 2-stage relaying as given in \cite{NMR}.
\item Another clustering is proposed for the ACF protocol in the two-way relay channel called Direct Clustering. This clustering also removes the singular fade states and reduces the number of clusters for some cases. Using this clustering, the size of the resulting constellation used by the relay node R in the BC phase is reduced to 20 for a category of cases, as compared to the Cartesian Product approach which results in the constellation size being 25 for these cases.
\item The quantization of the complex plane that contains all the possible fade states, depending on which one of the obtained clusterings maximizes the minimum cluster distance, is proven to be the same as for the two-way 2-stage relaying scenario as done in \cite{MNR}.
\item Simulation results indicate that at high SNR, the schemes based on the ACF protocol performs better than the schemes proposed in \cite{KoPoTa},\cite{NMR} based on two-stage two way relaying. With 4-PSK signal set used at the end nodes, the ACF protocol achieves a maximum sum throughput of 8/3 bits/s/Hz, whereas it is 2 bits/s/Hz for the schemes based on 2-stage two way relaying.\\
\end{itemize}

\section{Exclusive Law and Latin Squares}
The nodes A and B transmit symbols from the same constellation, viz., 4-PSK. Our aim is to find the map that the relay node R should use in order to cluster the $4^4$ possibilities of $\left(\left(x_{A_{1}},x_{B_{1}}\right),\left(x_{A_{2}},x_{B_{2}}\right)\right)$ such that the exclusive law given by (\ref{mel1}), (\ref{mel2}) is satisfied. Consider the $16 \times 16$ array consisting of the 16 possibilities of $\left(x_{A_{1}},x_{A_{2}}\right) $ along the rows and the 16 possibilities of $\left(x_{B_{1}},x_{B_{2}}\right) $ along the columns. We fill this array with elements from $\mathcal{L}=\left\{ \mathcal{L}_{1}, \mathcal{L}_{2}, ..., \mathcal{L}_{t}\right\}$ where each symbol denotes a unique cluster. The constellation size used by the relay in the BC phase, or the number of clusters has to be at least 16, since each user needs 4 bit information corresponding to the two messages sent by the other user implying $t \geq 16$. In order to keep (\ref{mel1}) and (\ref{mel2}) satisfied, a symbol from $\mathcal{L}$ can occur at most once in each row and column. So the $16 \times 16$ array having $\left(x_{A_{1}},x_{A_{2}}\right) $ along the rows and  $\left(x_{B_{1}},x_{B_{2}}\right) $ along the columns must be a Latin Square of side 16 (Definition 1). The equivalence between the network code used by the relay in two-way relaying scenario and Latin Squares has been previously discussed in [7]. The clusters are obtained by putting together all those $\left(\left(x_{A_{1}},x_{B_{1}}\right),\left(x_{A_{2}},x_{B_{2}}\right)\right)$ for which the corresponding entry $\left(\left(x_{A_{1}},x_{A_{2}}\right),\left(x_{B_{1}},x_{B_{2}}\right)\right)$ in the array is the same. \\

From above, we can say that all the relay clusterings that satisfy the mutually exclusive law forms Latin Squares of order 16 with entries from $\mathcal{L}$ with $t \geq 16$, when the end nodes use PSK constellations of size 4. It therefore suffices to consider the network code used by the relay node in the BC phase to be a $16 \times 16$ array with rows(/columns) indexed by the 2-tuple consisting of the symbols sent by A(/B) during the first and second channel use. The cells of the array must be filled with elements of $\mathcal{L}$ in such a way, that the resulting array is a Latin Square of order 16 and $t \geq 16$. \\

\noindent \textit{\textbf{Removing Singular fade states and Constrained Latin Squares}}

The relay can manage with constellations of size 16 in BC phase, but it is observed that in some cases relay may not be able to remove the singular fade states and results in severe performance degradation in the MA phase. As stated in Section II, that a clustering $\mathcal{C}^{h}$ is said to remove singular fade state $h \in \mathcal{H}, $ if $d_{min}\left(\mathcal{C}^h\right)>0$. Removing singular fade states for a two-way ACF relay channel can also be defined as follows:\\

\begin{definition}
A clustering $\mathcal{C}^{h}$ is said to \textit{remove the singular fade state} $h \in \mathcal{H}$, if any two possibilities of the messages sent by the users $\left(\left(x_{A_{1}},x_{B_{1}}\right),\left(x_{A_{2}},x_{B_{2}}\right)\right), \left(\left(x'_{A_{1}},x'_{B_{1}}\right),\left(x'_{A_{2}},x'_{B_{2}}\right)\right) \in \mathcal{S}^{4}$ that satisfy\\
$$h= \frac{x'_{A_{1}}-x_{A_{1}}}{x_{B_{1}}-x'_{B_{1}}}= \frac{x'_{A_{2}}-x_{A_{2}}}{x_{B_{2}}-x'_{B_{2}}}, $$
are placed together in the same cluster by $\mathcal{C}^{h}$. \\
\end{definition}

\begin{definition}
A set $\left\{\left(\left(x_{A_{1}},x_{B_{1}}\right),\left(x_{A_{2}},x_{B_{2}}\right)\right)\right\} \subseteq \mathcal{S}^4$ consisting of all the possibilities of $((x_{A_{1}},x_{B_{1}}),(x_{A_{2}},x_{B_{2}}))$ that must be placed in the same cluster of the clustering used at relay node R in the BC phase in order to remove the singular fade state $h$ is referred to as a \textit{Singularity Removal Constraint} for the singular fade state $h$ in two-way ACF relaying scenario.\\
\end{definition}

As given in \cite{KoPoTa}, a complex number $\gamma e^{j \theta}$ is defined to be a \textit{singular fade state} for the \textit{two-way 2-stage relaying scenario}, if $ x_{A}+ \gamma e^{j \theta} x_{B} = x'_{A}+ \gamma e^{j \theta} x'_{B} \text{~for some~} (x_A,x_B),(x'_A,x'_B) \in \mathcal{S}^{2}$ and the set $\left\{(x_A,x_B) ~|~ (x_A,x_B) \in \mathcal{S}^{2}\right\}$ consisting of all the possibilities of $(x_A,x_B) \in \mathcal{S}^{2}$ that must be placed in the same cluster of the clustering that removes the fade state $\gamma e^{j \theta}$ for the two-way 2-stage relaying scenario is the corresponding set of singularity removal constraint for the singular fade state $\gamma e^{j \theta}$. As we show in the following lemma, the singular fade state for the ACF two-way relaying are the same as the singular fade state for the two-way 2-stage relaying scenario.\\

\begin{lemma} The singular fade states for the ACF two-way relaying scenario are the same as the 12 singular fade states for two-way 2-stage relaying scenario as computed in \cite{KoPoTa}. 
\end{lemma}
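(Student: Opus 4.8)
The plan is to exploit the fact that the effective relay constellation for the ACF protocol is the Cartesian product of two identical copies of the single-channel-use constellation of the 2-stage scheme. Writing $\mathcal{S}_R^{(1)}(\gamma,\theta) = \{x + \gamma e^{j\theta} y \mid x,y \in \mathcal{S}\}$ for the effective constellation seen by the relay in a single channel use of 2-stage relaying, the defining expression for $\mathcal{S}_R(\gamma,\theta)$ immediately gives $\mathcal{S}_R(\gamma,\theta) = \mathcal{S}_R^{(1)}(\gamma,\theta) \times \mathcal{S}_R^{(1)}(\gamma,\theta)$, since the two coordinates range independently over the same set. Consequently $|\mathcal{S}_R(\gamma,\theta)| = |\mathcal{S}_R^{(1)}(\gamma,\theta)|^2$, so the cardinality deficiency of the ACF constellation is entirely governed by that of the single-use one.

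First I would convert this factorization into the desired set equality. By Definition 2, $\gamma e^{j\theta}$ is ACF-singular exactly when $|\mathcal{S}_R(\gamma,\theta)| < 4^4 = (4^2)^2$, which by the displayed product structure is equivalent to $|\mathcal{S}_R^{(1)}(\gamma,\theta)| < 4^2$, i.e. to the single-use constellation collapsing. But the latter is precisely the condition $x_A + \gamma e^{j\theta} x_B = x'_A + \gamma e^{j\theta} x'_B$ for some $(x_A,x_B) \neq (x'_A,x'_B)$, which is the definition of a singular fade state for 2-stage relaying. This yields $\mathcal{H} = \mathcal{H}_{\mathrm{2\text{-}stage}}$ in one line.

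For a more hands-on verification matching the distance formulation in (\ref{dist}), I would argue directly on the zero-distance condition. A fade state $h = \gamma e^{j\theta}$ is ACF-singular iff $d^{2}_{min}(h) = 0$, and since (\ref{dist}) is a sum of two nonnegative terms, this vanishes iff both $(x_{A_{1}}-x'_{A_{1}}) + h(x_{B_{1}}-x'_{B_{1}}) = 0$ and $(x_{A_{2}}-x'_{A_{2}}) + h(x_{B_{2}}-x'_{B_{2}}) = 0$ hold for two distinct $4$-tuples. Since the $4$-tuples differ, at least one channel use carries a genuine difference; because $\gamma > 0$ forces $h \neq 0$, a single channel use cannot collapse with only one of its two user symbols changing, so that channel use furnishes a pair $(x_{A},x_{B}) \neq (x'_{A},x'_{B})$ with $x_A + h x_B = x'_A + h x'_B$, proving $h$ is 2-stage singular. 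Conversely, given any 2-stage collision $(x_A,x_B),(x'_A,x'_B)$ at $h$, the distinct $4$-tuples $((x_A,x_B),(x_A,x_B))$ and $((x'_A,x'_B),(x_A,x_B))$ collide in the first coordinate and coincide in the second, so $d^{2}_{min}(h) = 0$ and $h$ is ACF-singular.

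The count of exactly $12$ then follows by invoking the computation of \cite{KoPoTa} for the 2-stage 4-PSK case. I do not expect a genuine obstacle here; the only point demanding care arises in showing that an ACF singularity forces a 2-stage one, where one must rule out a degenerate collapse in which a single user symbol changes while the effective point stays fixed. This is exactly where $\gamma > 0$, hence $h \neq 0$, is used, and it is what makes every single-channel-use collapse a bona fide 2-stage singularity rather than an artifact.
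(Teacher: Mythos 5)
Your proposal is correct, and it actually contains two complete arguments. The second, distance-based one is essentially the paper's own proof: the paper likewise notes that an ACF collision means both per-channel-use collision equations hold for two distinct $4$-tuples (hence a 2-stage collision), and for the converse it pads a 2-stage collision $(x_A,x_B),(x'_A,x'_B)$ with a common pair in the other channel use, so that $\left\{((x_A,x_B),(x,y)),((x'_A,x'_B),(x,y))\right\}$ sits inside a singularity removal constraint. If anything you are more careful than the paper there: the paper's forward direction just says ``by definition'' without observing that at least one channel use must carry distinct pairs; note, however, that your $h\neq 0$ remark is not logically needed at that point, since a collision between \emph{distinct} pairs is already the 2-stage singularity condition no matter which of the two symbols differ (ruling out the one-symbol-change collapses only matters when one wants to write $h$ as the ratio $-(x_A-x'_A)/(x_B-x'_B)$, which feeds the count of $12$ and is anyway imported from \cite{KoPoTa}). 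Your first argument --- the factorization $\mathcal{S}_R(\gamma,\theta)=\mathcal{S}_R^{(1)}(\gamma,\theta)\times\mathcal{S}_R^{(1)}(\gamma,\theta)$, hence $\vert\mathcal{S}_R(\gamma,\theta)\vert=\vert\mathcal{S}_R^{(1)}(\gamma,\theta)\vert^2$, so $\vert\mathcal{S}_R\vert<4^4$ iff $\vert\mathcal{S}_R^{(1)}\vert<4^2$ --- does not appear in the paper and is the cleaner route: it turns the cardinality definition of an ACF singular fade state directly into the single-use condition in one line, with no case analysis, and it makes transparent that the equivalence holds for any signal set and would extend verbatim to any number of accumulated channel uses.
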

\begin{proof} Let $\gamma e^{j \theta}$ be a singular fade state for the ACF two-way relaying scenario. By definition, $ \exists ((x_{A_{1}},x_{B_{1}}),(x_{A_{2}},x_{B_{2}})),((x'_{A_{1}},x'_{B_{1}}),(x'_{A_{2}},x'_{B_{2}})) \in \mathcal{S}^{4} $ such that,
\begin{align}
\nonumber
&x_{A_{1}}+ \gamma e^{j \theta} x_{B_{1}} = x'_{A_{1}}+ \gamma e^{j \theta} x'_{B_{1}}, \text{~and} \\
\nonumber
&x_{A_{2}}+ \gamma e^{j \theta} x_{B_{2}} = x'_{A_{2}}+ \gamma e^{j \theta} x'_{B_{2}}. 
\end{align} 
where $x_{A_{1}},x_{B_{1}},x_{A_{2}},x_{B_{2}},x'_{A_{1}},x'_{B_{1}},x'_{A_{2}},x'_{B_{2}} \in \mathcal{S}$. \\

Then, by definition, $\gamma e^{j \theta}$ must be a singular fade state for two-user 2-stage relaying.\\

Conversely, let $\gamma e^{j \theta}$ be a singular fade state for two-user 2-stage relaying scenario. Then, $\exists (x_{A},x_{B}),~ (x'_{A},x'_{B}) \in \mathcal{S}^{2}$ such that,
$$ x_{A}+ \gamma e^{j \theta} x_{B} = x'_{A}+ \gamma e^{j \theta} x'_{B}.$$
Then, since for any $ (x,y) \in \mathcal{S},~ x+ \gamma e^{j \theta} y= x+ \gamma e^{j \theta} y, ~ \left\{ ((x_{A},x_{B}),(x,y)),~((x'_{A},x'_{B}),(x,y)) \right\}$ is a subset of a singularity removal constraint and $\gamma e^{j \theta}$ is a singular fade state for the ACF two-way relaying scenario.\\

Thus, the singular fade states for the ACF two-way relaying scenario and the singular fade states for the two-way 2-stage relaying scenario are the same.
\end{proof}

\vspace{0.5cm}
Let $\gamma e^{j \theta}$ be a fade state for the two-way ACF relaying scenario. Then $\gamma e^{j \theta}$ can be viewed as a fade state for the first and second channel use in the MA phase as shown in Lemma 1. In \cite{KoPoTa} and \cite{MNR}, it is shown that for the two-way 2-stage relaying, the $4^{2}$ possible pairs of symbols from 4-PSK constellation sent by the two users in the MA phase, can be clustered into a clustering dependent on a singular fade coefficient, of size 4 or 5 in a manner so as to remove this singular fade coefficient. In the case of two-way ACF relaying, at the end of MA phase, relay receives two complex numbers, given by (\ref{yr1}) and (\ref{yr2}). Instead of R transmitting a point from the $4^{4}$ point constellation resulting from all the possibilities of $\left(\left(x_{A_{1}}, x_{B_{1}}\right),\left(x_{A_{2}}, x_{B_{2}}\right)\right)$ the relay R can choose to group these  possibilities into clusters represented by a smaller constellation. One such clustering for the case when $\gamma e^{j \theta} $ can be obtained by utilizing the clustering provided in [7] for the two-way 2-stage problem in order to remove this fade state. Let $\mathcal{C}^{\left[h\right]}$ denote the clustering for the physical network coded two-way relaying scenario that removes the singular fade state $h \in \mathbb{C}$ for the two-way 2-stage relaying case as given in [7]. \\

\begin{definition} We define the \textit{Cartesian Product} of a clustering $\mathcal{C}^{\left[h\right]}=\left\{l_{1}, l_{2},...,l_{m}\right\}$ with itself denoted by $\mathcal{D}^{\left[h\right]}$, where for $i=1,2,...,m$;
$$l_{i}=\left\{\left(x_{i_{1}},y_{i_{1}}\right), \left(x_{i_{2}},y_{i_{2}}\right), ..., \left(x_{i_{s_{i}}},y_{i_{s_{i}}}\right)\right\}$$
with $x_{i_{p}}, y_{i_{p}} \in \mathbb{Z}_{4} ~ \forall~p=1,2,...,s_{i} $ as follows:
$$\mathcal{D}^{\left[h\right]}=\left\{\mathcal{C}^{\left\{l_{1},l_{1}\right\}},...,\mathcal{C}^{\left\{l_{1},l_{m}\right\}},...,\mathcal{C}^{\left\{l_{m},l_{1}\right\}},...,\mathcal{C}^{\left\{l_{m},l_{m}\right\}} \right\} $$
where,
{\footnotesize
\begin{align}
\nonumber
&\mathcal{C}^{\left\{l_{i},l_{j}\right\}}=\left\{\left((x_{i_{p}},y_{i_{p}}),(x_{j_{q}},y_{j_{q}})\right) | ~ p=1,2,..,s_{i} \text{~and~} q=1,2,..,s_{j} \right\}.
%\nonumber
\end{align}
}
\end{definition}
\vspace{0.5cm}
\begin{lemma} Let $\gamma e^{j \theta} \in \mathcal{H}$. The clustering obtained by taking the Cartesian Product $\mathcal{D}^{\left[\gamma e^{j \theta}\right]}$ of $\mathcal{C}^{\left[\gamma e^{j \theta}\right]}$ with itself removes the singular fade state $\gamma e^{j \theta}$ for the two-way ACF relaying scenario.
\end{lemma}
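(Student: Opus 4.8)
The plan is to avoid the distance expression (\ref{cl2}) in favor of the combinatorial characterization of removal supplied by Definition~6, and to record first why these agree. The minimum cluster distance $d_{min}\!\left(\mathcal{D}^{\left[\gamma e^{j\theta}\right]}\right)$ in (\ref{cl2}) vanishes precisely when two tuples $\left((x_{A_1},x_{B_1}),(x_{A_2},x_{B_2})\right)$ and $\left((x'_{A_1},x'_{B_1}),(x'_{A_2},x'_{B_2})\right)$ lying in \emph{different} clusters make both squared terms zero, i.e.
\begin{align}
\nonumber
&(x_{A_1}-x'_{A_1})+\gamma e^{j\theta}(x_{B_1}-x'_{B_1})=0, \\
\nonumber
&(x_{A_2}-x'_{A_2})+\gamma e^{j\theta}(x_{B_2}-x'_{B_2})=0.
\end{align}
Rearranging (and treating the case of a coinciding pair separately, where the denominator would vanish), these two equalities are exactly the relation $\gamma e^{j\theta}=\frac{x'_{A_1}-x_{A_1}}{x_{B_1}-x'_{B_1}}=\frac{x'_{A_2}-x_{A_2}}{x_{B_2}-x'_{B_2}}$ of Definition~6. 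Hence it suffices to show that $\mathcal{D}^{\left[\gamma e^{j\theta}\right]}$ places \emph{every} such pair of tuples in one common cluster, which rules out any zero-distance pair straddling two clusters and forces $d_{min}\!\left(\mathcal{D}^{\left[\gamma e^{j\theta}\right]}\right)>0$.

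The central observation is that this ACF singularity condition \emph{decouples} across the two channel uses. Given two tuples satisfying the displayed relations, the first reads $x_{A_1}+\gamma e^{j\theta}x_{B_1}=x'_{A_1}+\gamma e^{j\theta}x'_{B_1}$, which is precisely the two-way $2$-stage singularity relation for the pairs $(x_{A_1},x_{B_1})$ and $(x'_{A_1},x'_{B_1})$ at $\gamma e^{j\theta}$; likewise the second is the $2$-stage relation for $(x_{A_2},x_{B_2})$ and $(x'_{A_2},x'_{B_2})$. Since $\gamma e^{j\theta}\in\mathcal{H}$ and $\mathcal{C}^{\left[\gamma e^{j\theta}\right]}$ is by construction a clustering that removes this fade state for the $2$-stage problem, the first pair must lie in a common cluster $l_i$ of $\mathcal{C}^{\left[\gamma e^{j\theta}\right]}$ and the second pair in a common cluster $l_j$ (the sub-case where a pair coincides is immediate, as identical points are trivially co-clustered).

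It then remains to read off the conclusion from the Cartesian product construction (Definition~8): the cluster $\mathcal{C}^{\left\{l_i,l_j\right\}}$ collects exactly those $((a,b),(c,d))$ with $(a,b)\in l_i$ and $(c,d)\in l_j$, and because the $l_i$ partition $\mathcal{S}^2$, the products partition $\mathcal{S}^4$. Since $(x_{A_1},x_{B_1}),(x'_{A_1},x'_{B_1})\in l_i$ and $(x_{A_2},x_{B_2}),(x'_{A_2},x'_{B_2})\in l_j$, both tuples lie in the single cluster $\mathcal{C}^{\left\{l_i,l_j\right\}}$ of $\mathcal{D}^{\left[\gamma e^{j\theta}\right]}$, as required. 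I expect the only genuine subtlety to be the opening equivalence between the distance-zero description in (\ref{cl2}) and the ratio form of Definition~6, together with the bookkeeping for the degenerate zero-denominator case; once that is in place, the separability of $x_A+\gamma e^{j\theta}x_B$ across the two channel uses and the product structure of $\mathcal{D}^{\left[\gamma e^{j\theta}\right]}$ make the remainder essentially immediate.
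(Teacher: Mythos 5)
Your proof is correct and follows essentially the same route as the paper's: decouple the ACF singularity relation $\gamma e^{j\theta}=\frac{x'_{A_1}-x_{A_1}}{x_{B_1}-x'_{B_1}}=\frac{x'_{A_2}-x_{A_2}}{x_{B_2}-x'_{B_2}}$ into two independent two-way 2-stage relations, invoke the fact that $\mathcal{C}^{\left[\gamma e^{j\theta}\right]}$ removes the fade state to place each of the two pairs in common clusters $l_i$ and $l_j$, and conclude from the product structure that both tuples lie in the single cluster $\mathcal{C}^{\left\{l_i,l_j\right\}}$ of $\mathcal{D}^{\left[\gamma e^{j\theta}\right]}$. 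Your opening equivalence between the distance-based criterion $d_{min}\left(\mathcal{C}^h\right)>0$ and the ratio form of Definition 6, together with the explicit treatment of the coinciding-pair (zero-denominator) case, is somewhat more careful bookkeeping than the paper provides (it works directly from Definitions 5 and 6), but the substance of the argument is identical.
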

\begin{proof} Let $\mathcal{C}^{\left[\gamma e^{j \theta}\right]}=\left\{l_1, l_2,...,l_m\right\}$,where for $i=1,2,...,m$,
$$ l_i=\left\{(x_{i_{1}},y_{i_{1}}),(x_{i_{2}},y_{i_{2}}),...,(x_{i_{s_{i}}},y_{i_{s_{i}}})\right\}$$
Then,
$$\mathcal{D}^{\left[\gamma e^{j \theta} \right]}=\left\{\mathcal{C}^{\left\{l_{1},l_{1}\right\}},...,\mathcal{C}^{\left\{l_{1},l_{m}\right\}},...,\mathcal{C}^{\left\{l_{m},l_{1}\right\}},...,\mathcal{C}^{\left\{l_{m},l_{m}\right\}} \right\} $$
where,
{\footnotesize
\begin{align}
\nonumber
&\mathcal{C}^{\left\{l_{i},l_{j}\right\}}=\left\{\left((x_{i_{p}},y_{i_{p}}),(x_{j_{q}},y_{j_{q}})\right) | ~ p=1,2,..,s_{i} \text{~and~} q=1,2,..,s_{j} \right\}.
\end{align}
}
By definition, a singularity removal constraint for the fade state $\gamma e^{j \theta}$ in the case of two-way ACF relaying scenario is a set $\left\{((x_{i},y_{i}),(x'_{i},y'_{i}))~|~i=1,2,...,t\right\}$ such that $\forall 1\leq i_{1},i_{2} \leq t$,
$$ \gamma e^{j \theta}=\frac{x_{i_{2}}-x_{i_{1}}}{y_{i_{1}}-y_{i_{2}}}=\frac{x'_{i_{2}}-x'_{i_{1}}}{y'_{i_{1}}-y'_{i_{2}}} $$

Now, $ \gamma e^{j \theta}=\frac{x_{i_{2}}-x_{i_{1}}}{y_{i_{1}}-y_{i_{2}}} \Rightarrow (x_{1},y_{1}) \text{~and~} (x_{2},y_{2})$ must belong to the same cluster, say $l_i$ in $\mathcal{C}^{\left[\gamma e^{j \theta}\right]}$ for it to remove the fade state $\gamma e^{j \theta}$. Similarly,
$ \gamma e^{j \theta}=\frac{x'_{i_{2}}-x'_{i_{1}}}{y'_{i_{1}}-y'_{i_{2}}} \Rightarrow (x'_{1},y'_{1}) \text{~and~} (x'_{2},y'_{2})$ must belong to the same cluster, say $l_j$ in $\mathcal{C}^{\left[\gamma e^{j \theta}\right]}$. This holds $\forall 1\leq i_{1},i_{2} \leq t$. Thus, the singularity removal constraint is,

{
\vspace{-0.2cm}
\begin{align}
\nonumber
&\left\{((x_{i},y_{i}),(x'_{i},y'_{1}))~|~i=1,2,...,t\right\} \subseteq \mathcal{C}^{\left\{l_{i},l_{j}\right\}},
\end{align}
}for some $1\leq i,j \leq n$.
Therefore, the clustering $\mathcal{D}^{\left[\gamma e^{j \theta}\right]}$ removes the singular fade state $\gamma e^{j \theta}$.
\end{proof}
\vspace{0.5cm}

It was studied in [4] by computer search, and then in [7] analytically, that there are 12 possible singular fade states in the complex plane for the case when two users transmit points from 4-PSK constellation in the MA phase. Out of these 12 fade states, 4 lie on the unit circle, 4 lie on a circle of radius $\sqrt{2}$ and 4 lie on a circle of radius 1/$\sqrt{2}$. The size of the constellation used at R in the BC phase for these cases is either 4 or 5. For the two-user ACF scenario we are dealing with, we have two channel uses in the MA phase, with A and B transmitting a message each in the first channel and second channel uses, with the two messages sent by each user in the first and second channel use being possibly different. Keeping in mind the three classes of fade states depending on the radius of the circle it lies on, we consider the following three cases:\\\\
\textit{Case 1:} $\gamma e^{j \theta} $ lies on the unit circle.\\\\
\textit{Case 2:} $\gamma e^{j \theta} $ lies on the circle of radius $ 1/\sqrt{2}$.\\\\
\textit{Case 3:} $\gamma e^{j \theta} $ lies on the circle of radius $ \sqrt{2}$.\\\\
\textbf{\textit{Case 1:} $\gamma e^{j \theta} $ lies on the unit circle.}\\

Since both user nodes A and B require $4$ bits of information from the other user, the size of the constellation that R uses will be at least $2^4=16$. The Cartesian Product of $\mathcal{C}^{\left[\gamma e^{j \theta}\right]}$ with itself consists of $16$ clusters since the clustering $\mathcal{C}^{\left[\gamma e^{j \theta}\right]}$ has $4$ clusters. We illustrate this case with the help of the following example, the remaining instances of fade states that lie on the unit circle can be obtained using this example, as we show in Lemma 6 in Section V.\\

\begin{example} Let the fade state $\gamma e^{j \theta}= j$. The clustering $\mathcal{C}^{\left[j\right]}$ for the case as given in \cite{NMR} is given by,
\begin{align}
\nonumber
\mathcal{C}^{\left[j\right]}=&\left\{l_{1},l_{2},l_{3},l_{4}\right\} 
\nonumber
\text{where,}\\
\nonumber
&l_{1}=\left\{\left(0,0\right),\left(1,2\right),\left(2,1\right),\left(3,3\right)\right\}\\
\nonumber
&l_{2}=\left\{\left(0,3\right),\left(1,1\right),\left(2,2\right),\left(3,0\right)\right\}\\
\nonumber
&l_{3}=\left\{\left(0,1\right),\left(1,3\right),\left(2,0\right),\left(3,2\right)\right\}\\
\nonumber
&l_{4}=\left\{\left(0,2\right),\left(1,0\right),\left(2,3\right),\left(3,1\right)\right\}.
\vspace{-0.8cm}
\nonumber
\end{align}
The Cartesian Product of the above clustering given by $\mathcal{D}^{\left[j\right]}=\left\{ \mathcal{C}^{\left\{l_{i},l_{j}\right\}}~ |~ i,j=1,2,3,4\right\}$ contains exactly $16$ clusters:
{\footnotesize
\begin{align}
\nonumber
\vspace{-0.5cm}
\mathcal{C}^{\left\{l_{1},l_{1}\right\}}=&\left\{((0,0),(0,0)), ((0,0),(1,2)), ((0,0),(2,1)), ((0,0),(3,3)), \right.\\
\nonumber 
& \left. ((1,2),(0,0)), ((1,2),(1,2)), ((1,2),(2,1)), ((1,2),(3,3)),\right.\\
\nonumber 
& \left. ((2,1),(0,0)), ((2,1),(1,2)), ((2,1),(2,1)), ((2,1),(3,3)),\right.\\
\nonumber 
& \left.  ((3,3),(0,0)), ((3,3),(1,2)), ((3,3),(2,1)), ((3,3),(3,3))\right\}\\
\nonumber
\mathcal{C}^{\left\{l_{1},l_{2}\right\}}=&\left\{((0,0),(0,3)), ((0,0),(1,1)), ((0,0),(2,2)), ((0,0),(3,0)), \right.\\
\nonumber 
& \left. ((1,2),(0,3)), ((1,2),(1,1)), ((1,2),(2,2)), ((1,2),(3,0)),\right.\\
\nonumber 
& \left. ((2,1),(0,3)), ((2,1),(1,1)), ((2,1),(2,2)), ((2,1),(3,0)),\right.\\
\nonumber 
& \left. ((3,3),(0,3)), ((3,3),(1,1)), ((3,3),(2,2)), ((3,3),(3,0))\right\}\\
\nonumber
\mathcal{C}^{\left\{l_{1},l_{3}\right\}}=&\left\{((0,0),(0,1)), ((0,0),(1,3)), ((0,0),(2,0)), ((0,0),(3,2)), \right.\\
\nonumber 
& \left. ((1,2),(0,1)), ((1,2),(1,3)), ((1,2),(2,0)), ((1,2),(3,2)),\right.\\
\nonumber 
& \left. ((2,1),(0,1)), ((2,1),(1,3)), ((2,1),(2,0)), ((2,1),(3,2)),\right.\\
\nonumber 
& \left. ((3,3),(0,1)), ((3,3),(1,3)), ((3,3),(2,0)), ((3,3),(3,2))\right\}\\
\nonumber
\mathcal{C}^{\left\{l_{1},l_{4}\right\}}=&\left\{((0,0),(0,2)), ((0,0),(1,0)), ((0,0),(2,3)), ((0,0),(3,1)), \right.\\
\nonumber 
& \left. ((1,2),(0,2)), ((1,2),(1,0)), ((1,2),(2,3)), ((1,2),(3,1)),\right.\\
\nonumber 
& \left. ((2,1),(0,2)), ((2,1),(1,0)), ((2,1),(2,3)), ((2,1),(3,1)),\right.\\
\nonumber 
& \left. ((3,3),(0,2)), ((3,3),(1,0)), ((3,3),(2,3)), ((3,3),(3,1))\right\}\\
\nonumber
\mathcal{C}^{\left\{l_{2},l_{1}\right\}}=&\left\{((0,3),(0,0)), ((0,3),(1,2)), ((0,3),(2,1)), ((0,3),(3,3)), \right.\\
\nonumber 
& \left. ((1,1),(0,0)), ((1,1),(1,2)), ((1,1),(2,1)), ((1,1),(3,3)),\right.\\
\nonumber 
& \left. ((2,2),(0,0)), ((2,2),(1,2)), ((2,2),(2,1)), ((2,2),(3,3)),\right.\\
\nonumber 
& \left. ((3,0),(0,0)), ((3,0),(1,2)), ((3,0),(2,1)), ((3,0),(3,3))\right\}\\
\nonumber
\mathcal{C}^{\left\{l_{2},l_{2}\right\}}=&\left\{((0,3),(0,3)), ((0,3),(1,1)), ((0,3),(2,2)), ((0,3),(3,0)), \right.\\
\nonumber 
& \left. ((1,1),(0,3)), ((1,1),(1,1)), ((1,1),(2,2)), ((1,1),(3,0)),\right.\\
\nonumber 
& \left. ((2,2),(0,3)), ((2,2),(1,1)), ((2,2),(2,2)), ((2,2),(3,0)),\right.\\
\nonumber 
& \left. ((3,0),(0,3)), ((3,0),(1,1)), ((3,0),(2,2)), ((3,0),(3,0))\right\}\\
\nonumber
\mathcal{C}^{\left\{l_{2},l_{3}\right\}}=&\left\{((0,3),(0,1)), ((0,3),(1,3)), ((0,3),(2,0)), ((0,3),(3,2)), \right.\\
\nonumber 
& \left. ((1,1),(0,1)), ((1,1),(1,3)), ((1,1),(2,0)), ((1,1),(3,2)),\right.\\
\nonumber 
& \left. ((2,2),(0,1)), ((2,2),(1,3)), ((2,2),(2,0)), ((2,2),(3,2)),\right.\\
\nonumber 
& \left. ((3,0),(0,1)), ((3,0),(1,3)), ((3,0),(2,0)), ((3,0),(3,2))\right\}\\
\nonumber
\mathcal{C}^{\left\{l_{2},l_{4}\right\}}=&\left\{((0,3),(0,2)), ((0,3),(1,0)), ((0,3),(2,3)), ((0,3),(3,1)), \right.\\
\nonumber 
& \left. ((1,1),(0,2)), ((1,1),(1,0)), ((1,1),(2,3)), ((1,1),(3,1)),\right.\\
\nonumber 
& \left. ((2,2),(0,2)), ((2,2),(1,0)), ((2,2),(2,3)), ((2,2),(3,1)),\right.\\
\nonumber 
& \left. ((3,0),(0,2)), ((3,0),(1,0)), ((3,0),(2,3)), ((3,0),(3,1))\right\}\\
\nonumber
\mathcal{C}^{\left\{l_{3},l_{1}\right\}}=&\left\{((0,1),(0,0)), ((0,1),(1,2)), ((0,1),(2,1)), ((0,1),(3,3)), \right.\\
\nonumber 
& \left. ((1,3),(0,0)), ((1,3),(1,2)), ((1,3),(2,1)), ((1,3),(3,3)),\right.\\
\nonumber 
& \left. ((2,0),(0,0)), ((2,0),(1,2)), ((2,0),(2,1)), ((2,0),(3,3)),\right.\\
\nonumber 
& \left. ((3,2),(0,0)), ((3,2),(1,2)), ((3,2),(2,1)), ((3,2),(3,3))\right\}\\
\nonumber
\mathcal{C}^{\left\{l_{3},l_{2}\right\}}=&\left\{((0,1),(0,3)), ((0,1),(1,1)), ((0,1),(2,2)), ((0,1),(3,0)), \right.\\
\nonumber 
& \left. ((1,3),(0,3)), ((1,3),(1,1)), ((1,3),(2,2)), ((1,3),(3,0)),\right.\\
\nonumber 
& \left. ((2,0),(0,3)), ((2,0),(1,1)), ((2,0),(2,2)), ((2,0),(3,0)),\right.\\
\nonumber 
& \left. ((3,2),(0,3)), ((3,2),(1,1)), ((3,2),(2,2)), ((3,2),(3,0))\right\}\\
\nonumber
\mathcal{C}^{\left\{l_{3},l_{3}\right\}}=&\left\{((0,1),(0,1)), ((0,1),(1,3)), ((0,1),(2,0)), ((0,1),(3,2)), \right.\\
\nonumber 
& \left. ((1,3),(0,1)), ((1,3),(1,3)), ((1,3),(2,0)), ((1,3),(3,2)),\right.\\
\nonumber 
& \left. ((2,0),(0,1)), ((2,0),(1,3)), ((2,0),(2,0)), ((2,0),(3,2)),\right.\\
\nonumber 
& \left. ((3,2),(0,1)), ((3,2),(1,3)), ((3,2),(2,0)), ((3,2),(3,2))\right\}\\
\nonumber
\mathcal{C}^{\left\{l_{3},l_{4}\right\}}=&\left\{((0,1),(0,2)), ((0,1),(1,0)), ((0,1),(2,3)), ((0,1),(3,1)), \right.\\
\nonumber 
& \left. ((1,3),(0,2)), ((1,3),(1,0)), ((1,3),(2,3)), ((1,3),(3,1)),\right.\\
\nonumber 
& \left. ((2,0),(0,2)), ((2,0),(1,0)), ((2,0),(2,3)), ((2,0),(3,1)),\right.\\
\nonumber 
& \left. ((3,2),(0,2)), ((3,2),(1,0)), ((3,2),(2,3)), ((3,2),(3,1))\right\}\\
\nonumber
\mathcal{C}^{\left\{l_{4},l_{1}\right\}}=&\left\{((0,2),(0,0)), ((0,2),(1,2)), ((0,2),(2,1)), ((0,2),(3,3)), \right.\\
\nonumber 
& \left. ((1,0),(0,0)), ((1,0),(1,2)), ((1,0),(2,1)), ((1,0),(3,3)),\right.\\
\nonumber 
& \left. ((2,3),(0,0)), ((2,3),(1,2)), ((2,3),(2,1)), ((2,3),(3,3)),\right.\\
\nonumber 
& \left. ((3,1),(0,0)), ((3,1),(1,2)), ((3,1),(2,1)), ((3,1),(3,3))\right\}
\end{align}
\begin{align}
\nonumber
\mathcal{C}^{\left\{l_{4},l_{2}\right\}}=&\left\{((0,2),(0,3)), ((0,2),(1,1)), ((0,2),(2,2)), ((0,2),(3,0)), \right.\\
\nonumber 
& \left. ((1,0),(0,3)), ((1,0),(1,1)), ((1,0),(2,2)), ((1,0),(3,0)),\right.\\
\nonumber 
& \left. ((2,3),(0,3)), ((2,3),(1,1)), ((2,3),(2,2)), ((2,3),(3,0)),\right.\\
\nonumber 
& \left. ((3,1),(0,3)), ((3,1),(1,1)), ((3,1),(2,2)), ((3,1),(3,0))\right\}\\
\nonumber
\mathcal{C}^{\left\{l_{4},l_{3}\right\}}=&\left\{((0,2),(0,1)), ((0,2),(1,3)), ((0,2),(2,0)), ((0,2),(3,2)), \right.\\
\nonumber 
& \left. ((1,0),(0,1)), ((1,0),(1,3)), ((1,0),(2,0)), ((1,0),(3,2)),\right.\\
\nonumber 
& \left. ((2,3),(0,1)), ((2,3),(1,3)), ((2,3),(2,0)), ((2,3),(3,2)),\right.\\
\nonumber 
& \left. ((3,1),(0,1)), ((3,1),(1,3)), ((3,1),(2,0)), ((3,1),(3,2))\right\}\\
\nonumber
\mathcal{C}^{\left\{l_{4},l_{4}\right\}}=&\left\{((0,2),(0,2)), ((0,2),(1,0)), ((0,2),(2,3)), ((0,2),(3,1)), \right.\\
\nonumber 
& \left. ((1,0),(0,2)), ((1,0),(1,0)), ((1,0),(2,3)), ((1,0),(3,1)),\right.\\
\nonumber 
& \left. ((2,3),(0,2)), ((2,3),(1,0)), ((2,3),(2,3)), ((2,3),(3,1)),\right.\\
\nonumber 
& \left. ((3,1),(0,2)), ((3,1),(1,0)), ((3,1),(2,3)), ((3,1),(3,1)) \right\}
\nonumber
\end{align}
}

\begin{figure*}
{\footnotesize
{
\renewcommand{\arraystretch}{1,3}
\begin{tabular}{!{\vrule width 1pt}c!{\vrule width 1pt}c|c|c|c!{\vrule width 1pt}c|c|c|c!{\vrule width 1pt}c|c|c|c!{\vrule width 1pt}c|c|c|c!{\vrule width 1pt}} \noalign{\hrule height 1pt}
    &$(0,0)$&$(0,1)$&$(0,2)$&$(0,3)$&$(1,0)$&$(1,1)$&$(1,2)$&$(1,3)$&$(2,0)$&$(2,1)$&$(2,2)$&$(2,3)$&$(3,0)$&$(3,1)$&$(3,2)$&$(3,3)$ \\\noalign{\hrule height 1pt}

 $(0,0)$ & $\mathcal{L}_{1}$ & $\mathcal{L}_{3}$ & $\mathcal{L}_{4}$ & $\mathcal{L}_{2}$ & $\mathcal{L}_{9}$ & $\mathcal{L}_{11}$ & $\mathcal{L}_{12}$ & $\mathcal{L}_{10}$ & $\mathcal{L}_{13}$ & $\mathcal{L}_{15}$ & $\mathcal{L}_{16}$ & $\mathcal{L}_{14}$ & $\mathcal{L}_{5}$ & $\mathcal{L}_{7}$ & $\mathcal{L}_{8}$ & $\mathcal{L}_{6}$ \\\hline 
 $(0,1)$ & $\mathcal{L}_{4}$ & $\mathcal{L}_{2}$ & $\mathcal{L}_{1}$ & $\mathcal{L}_{3}$ & $\mathcal{L}_{12}$ & $\mathcal{L}_{10}$ & $\mathcal{L}_{9}$ & $\mathcal{L}_{11}$ & $\mathcal{L}_{16}$ & $\mathcal{L}_{14}$ & $\mathcal{L}_{13}$ & $\mathcal{L}_{15}$ & $\mathcal{L}_{8}$ & $\mathcal{L}_{6}$ & $\mathcal{L}_{5}$ & $\mathcal{L}_{7}$ \\\hline
 $(0,2)$ & $\mathcal{L}_{3}$ & $\mathcal{L}_{1}$ & $\mathcal{L}_{2}$ & $\mathcal{L}_{4}$ & $\mathcal{L}_{11}$ & $\mathcal{L}_{9}$ & $\mathcal{L}_{10}$ & $\mathcal{L}_{12}$ & $\mathcal{L}_{15}$ & $\mathcal{L}_{13}$ & $\mathcal{L}_{14}$ & $\mathcal{L}_{16}$ & $\mathcal{L}_{7}$ & $\mathcal{L}_{5}$ & $\mathcal{L}_{6}$ & $\mathcal{L}_{8}$ \\\hline
 $(0,3)$ & $\mathcal{L}_{2}$ & $\mathcal{L}_{4}$ & $\mathcal{L}_{3}$ & $\mathcal{L}_{1}$ & $\mathcal{L}_{10}$ & $\mathcal{L}_{12}$ & $\mathcal{L}_{11}$ & $\mathcal{L}_{9}$ & $\mathcal{L}_{14}$ & $\mathcal{L}_{16}$ & $\mathcal{L}_{15}$ & $\mathcal{L}_{13}$ & $\mathcal{L}_{6}$ & $\mathcal{L}_{8}$ & $\mathcal{L}_{7}$ & $\mathcal{L}_{5}$ \\\noalign{\hrule height 1pt}
 $(1,0)$ & $\mathcal{L}_{13}$ & $\mathcal{L}_{15}$ & $\mathcal{L}_{16}$ & $\mathcal{L}_{14}$ & $\mathcal{L}_{5}$ & $\mathcal{L}_{7}$ & $\mathcal{L}_{8}$ & $\mathcal{L}_{6}$ & $\mathcal{L}_{1}$ & $\mathcal{L}_{3}$ & $\mathcal{L}_{4}$ & $\mathcal{L}_{2}$ & $\mathcal{L}_{9}$ & $\mathcal{L}_{11}$ & $\mathcal{L}_{12}$ & $\mathcal{L}_{10}$ \\\hline
 $(1,1)$ & $\mathcal{L}_{16}$ & $\mathcal{L}_{14}$ & $\mathcal{L}_{13}$ & $\mathcal{L}_{15}$ & $\mathcal{L}_{8}$ & $\mathcal{L}_{6}$ & $\mathcal{L}_{5}$ & $\mathcal{L}_{7}$ & $\mathcal{L}_{4}$ & $\mathcal{L}_{2}$ & $\mathcal{L}_{1}$ & $\mathcal{L}_{3}$ & $\mathcal{L}_{12}$ & $\mathcal{L}_{10}$ & $\mathcal{L}_{9}$ & $\mathcal{L}_{11}$ \\\hline
 $(1,2)$ & $\mathcal{L}_{15}$ & $\mathcal{L}_{13}$ & $\mathcal{L}_{14}$ & $\mathcal{L}_{16}$ & $\mathcal{L}_{7}$ & $\mathcal{L}_{5}$ & $\mathcal{L}_{6}$ & $\mathcal{L}_{8}$ & $\mathcal{L}_{3}$ & $\mathcal{L}_{1}$ & $\mathcal{L}_{2}$ & $\mathcal{L}_{4}$ & $\mathcal{L}_{11}$ & $\mathcal{L}_{9}$ & $\mathcal{L}_{10}$ & $\mathcal{L}_{12}$ \\\hline
 $(1,3)$ & $\mathcal{L}_{14}$ & $\mathcal{L}_{16}$ & $\mathcal{L}_{15}$ & $\mathcal{L}_{13}$ & $\mathcal{L}_{6}$ & $\mathcal{L}_{8}$ & $\mathcal{L}_{7}$ & $\mathcal{L}_{5}$ & $\mathcal{L}_{2}$ & $\mathcal{L}_{4}$ & $\mathcal{L}_{3}$ & $\mathcal{L}_{1}$ & $\mathcal{L}_{10}$ & $\mathcal{L}_{12}$ & $\mathcal{L}_{11}$ & $\mathcal{L}_{9}$ \\\noalign{\hrule height 1pt}
 $(2,0)$ & $\mathcal{L}_{9}$ & $\mathcal{L}_{11}$ & $\mathcal{L}_{12}$ & $\mathcal{L}_{10}$ & $\mathcal{L}_{1}$ & $\mathcal{L}_{3}$ & $\mathcal{L}_{4}$ & $\mathcal{L}_{2}$ & $\mathcal{L}_{5}$ & $\mathcal{L}_{7}$ & $\mathcal{L}_{8}$ & $\mathcal{L}_{6}$ & $\mathcal{L}_{13}$ & $\mathcal{L}_{15}$ & $\mathcal{L}_{16}$ & $\mathcal{L}_{14}$ \\\hline
 $(2,1)$ & $\mathcal{L}_{12}$ & $\mathcal{L}_{10}$ & $\mathcal{L}_{9}$ & $\mathcal{L}_{11}$ & $\mathcal{L}_{4}$ & $\mathcal{L}_{2}$ & $\mathcal{L}_{1}$ & $\mathcal{L}_{3}$ & $\mathcal{L}_{8}$ & $\mathcal{L}_{6}$ & $\mathcal{L}_{5}$ & $\mathcal{L}_{7}$ & $\mathcal{L}_{16}$ & $\mathcal{L}_{14}$ & $\mathcal{L}_{13}$ & $\mathcal{L}_{15}$ \\\hline
 $(2,2)$ & $\mathcal{L}_{11}$ & $\mathcal{L}_{9}$ & $\mathcal{L}_{10}$ & $\mathcal{L}_{12}$ & $\mathcal{L}_{3}$ & $\mathcal{L}_{1}$ & $\mathcal{L}_{2}$ & $\mathcal{L}_{4}$ & $\mathcal{L}_{7}$ & $\mathcal{L}_{5}$ & $\mathcal{L}_{6}$ & $\mathcal{L}_{8}$ & $\mathcal{L}_{15}$ & $\mathcal{L}_{13}$ & $\mathcal{L}_{14}$ & $\mathcal{L}_{16}$ \\\hline
 $(2,3)$ & $\mathcal{L}_{10}$ & $\mathcal{L}_{12}$ & $\mathcal{L}_{11}$ & $\mathcal{L}_{9}$ & $\mathcal{L}_{2}$ & $\mathcal{L}_{4}$ & $\mathcal{L}_{3}$ & $\mathcal{L}_{1}$ & $\mathcal{L}_{6}$ & $\mathcal{L}_{8}$ & $\mathcal{L}_{7}$ & $\mathcal{L}_{5}$ & $\mathcal{L}_{14}$ & $\mathcal{L}_{16}$ & $\mathcal{L}_{15}$ & $\mathcal{L}_{13}$ \\\noalign{\hrule height 1pt}
 $(3,0)$ & $\mathcal{L}_{5}$ & $\mathcal{L}_{7}$ & $\mathcal{L}_{8}$ & $\mathcal{L}_{6}$ & $\mathcal{L}_{13}$ & $\mathcal{L}_{15}$ & $\mathcal{L}_{16}$ & $\mathcal{L}_{14}$ & $\mathcal{L}_{9}$ & $\mathcal{L}_{11}$ & $\mathcal{L}_{12}$ & $\mathcal{L}_{10}$ & $\mathcal{L}_{1}$ & $\mathcal{L}_{3}$ & $\mathcal{L}_{4}$ & $\mathcal{L}_{2}$ \\\hline
 $(3,1)$ & $\mathcal{L}_{8}$ & $\mathcal{L}_{6}$ & $\mathcal{L}_{5}$ & $\mathcal{L}_{7}$ & $\mathcal{L}_{16}$ & $\mathcal{L}_{14}$ & $\mathcal{L}_{13}$ & $\mathcal{L}_{15}$ & $\mathcal{L}_{12}$ & $\mathcal{L}_{10}$ & $\mathcal{L}_{9}$ & $\mathcal{L}_{11}$ & $\mathcal{L}_{4}$ & $\mathcal{L}_{2}$ & $\mathcal{L}_{1}$ & $\mathcal{L}_{3}$ \\\hline
 $(3,2)$ & $\mathcal{L}_{7}$ & $\mathcal{L}_{5}$ & $\mathcal{L}_{6}$ & $\mathcal{L}_{8}$ & $\mathcal{L}_{15}$ & $\mathcal{L}_{13}$ & $\mathcal{L}_{14}$ & $\mathcal{L}_{16}$ & $\mathcal{L}_{11}$ & $\mathcal{L}_{9}$ & $\mathcal{L}_{10}$ & $\mathcal{L}_{12}$ & $\mathcal{L}_{3}$ & $\mathcal{L}_{1}$ & $\mathcal{L}_{2}$ & $\mathcal{L}_{4}$ \\\hline
 $(3,3)$ & $\mathcal{L}_{6}$ & $\mathcal{L}_{8}$ & $\mathcal{L}_{7}$ & $\mathcal{L}_{5}$ & $\mathcal{L}_{14}$ & $\mathcal{L}_{16}$ & $\mathcal{L}_{15}$ & $\mathcal{L}_{13}$ & $\mathcal{L}_{10}$ & $\mathcal{L}_{12}$ & $\mathcal{L}_{11}$ & $\mathcal{L}_{9}$ & $\mathcal{L}_{2}$ & $\mathcal{L}_{4}$ & $\mathcal{L}_{3}$ & $\mathcal{L}_{1}$ \\\noalign{\hrule height 1pt}

\end{tabular}
}
}
\caption{Latin Square $L$ representing the clustering at the relay for the case $\gamma e^{j \theta}=j$, obtained using Cartesian Cluster Product, with the 4-PSK symbols that A(B) sent in the first and second channel use along the rows(columns)}
\label{ls1}
\end{figure*}

The entries of the above clusters are of the form $ ((x_{A_{1}},x_{B_{1}}),(x_{A_{2}},x_{B_{2}}))$, i.e., in the order A's transmission during the first channel use, B's transmission during the first channel use, A's transmission during the second channel use, B's transmission during the second channel use. We now represent these clusters by a Latin Square of side $16$, with $ (x_{A_{1}},x_{A_{2}})$ along the rows, and $(x_{B_{1}},x_{B_{2}})$ along the columns. The $((x_{A_{1}},x_{A_{2}}),(x_{B_{1}},x_{B_{2}}))$ entry of the Latin Square as dictated by the clusters above are as follows:

{\footnotesize
\begin{align}
\nonumber
\vspace{-0.5cm}
\mathcal{L}_{1}:=&\left\{((0,0),(0,0)), ((0,1),(0,2)), ((0,2),(0,1)), ((0,3),(0,3)), \right.\\
\nonumber 
& \left. ((1,0),(2,0)), ((1,1),(2,2)), ((1,2),(2,1)), ((1,3),(2,3)),\right.\\
\nonumber 
& \left. ((2,0),(1,0)), ((2,1),(1,2)), ((2,2),(1,1)), ((2,3),(1,3)),\right.\\
\nonumber 
& \left.  ((3,0),(3,0)), ((3,1),(3,2)), ((3,2),(3,1)), ((3,3),(3,3))\right\}
\end{align}
\begin{align}
\nonumber
\mathcal{L}_{2}:=&\left\{((0,0),(0,3)), ((0,1),(0,1)), ((0,2),(0,2)), ((0,3),(0,0)), \right.\\
\nonumber 
& \left. ((1,0),(2,3)), ((1,1),(2,1)), ((1,2),(2,2)), ((1,3),(2,0)),\right.\\
\nonumber 
& \left. ((2,0),(1,3)), ((2,1),(1,1)), ((2,2),(1,2)), ((2,3),(1,0)),\right.\\
\nonumber 
& \left. ((3,0),(3,3)), ((3,1),(3,1)), ((3,2),(3,2)), ((3,3),(3,0))\right\}
%\nonumber
\end{align}
\begin{align}
\nonumber
\mathcal{L}_{3}:=&\left\{((0,0),(0,1)), ((0,1),(0,3)), ((0,2),(0,0)), ((0,3),(0,2)), \right.\\
\nonumber 
& \left. ((1,0),(2,1)), ((1,1),(2,3)), ((1,2),(2,0)), ((1,3),(2,2)),\right.\\
\nonumber 
& \left. ((2,0),(1,1)), ((2,1),(1,3)), ((2,2),(1,0)), ((2,3),(1,2)),\right.\\
\nonumber 
& \left. ((3,0),(3,1)), ((3,1),(3,3)), ((3,2),(3,0)), ((3,3),(3,2))\right\}
\end{align}
\begin{align}
\nonumber
\mathcal{L}_{4}:=&\left\{((0,0),(0,2)), ((0,1),(0,0)), ((0,2),(0,3)), ((0,3),(0,1)), \right.\\
\nonumber 
& \left. ((1,0),(2,2)), ((1,1),(2,0)), ((1,2),(2,3)), ((1,3),(2,1)),\right.\\
\nonumber 
& \left. ((2,0),(1,2)), ((2,1),(1,0)), ((2,2),(1,3)), ((2,3),(1,1)),\right.\\
\nonumber 
& \left. ((3,0),(3,2)), ((3,1),(3,0)), ((3,2),(3,3)), ((3,3),(3,1))\right\}
%\nonumber
\end{align}
\begin{align}
\nonumber
\mathcal{L}_{5}:=&\left\{((0,0),(3,0)), ((0,1),(3,2)), ((0,2),(3,1)), ((0,3),(3,3)), \right.\\
\nonumber 
& \left. ((1,0),(1,0)), ((1,1),(1,2)), ((1,2),(1,1)), ((1,3),(1,3)),\right.\\
\nonumber 
& \left. ((2,0),(2,0)), ((2,1),(2,2)), ((2,2),(2,1)), ((2,3),(2,3)),\right.\\
\nonumber 
& \left. ((3,0),(0,0)), ((3,1),(0,2)), ((3,2),(0,1)), ((3,3),(0,3))\right\}
%\nonumber
\end{align}
\begin{align}
\nonumber
\mathcal{L}_{6}:=&\left\{((0,0),(3,3)), ((0,1),(3,1)), ((0,2),(3,2)), ((0,3),(3,0)), \right.\\
\nonumber 
& \left. ((1,0),(1,3)), ((1,1),(1,1)), ((1,2),(1,2)), ((1,3),(1,0)),\right.\\
\nonumber 
& \left. ((2,0),(2,3)), ((2,1),(2,1)), ((2,2),(2,2)), ((2,3),(2,0)),\right.\\
\nonumber 
& \left. ((3,0),(0,3)), ((3,1),(0,1)), ((3,2),(0,2)), ((3,3),(0,0))\right\}
%\nonumber
\end{align}
\begin{align}
\nonumber
\mathcal{L}_{7}:=&\left\{((0,0),(3,1)), ((0,1),(3,3)), ((0,2),(3,0)), ((0,3),(3,2)), \right.\\
\nonumber 
& \left. ((1,0),(1,1)), ((1,1),(1,3)), ((1,2),(1,0)), ((1,3),(1,2)),\right.\\
\nonumber 
& \left. ((2,0),(2,1)), ((2,1),(2,3)), ((2,2),(2,0)), ((2,3),(2,2)),\right.\\
\nonumber 
& \left. ((3,0),(0,1)), ((3,1),(0,3)), ((3,2),(0,0)), ((3,3),(0,2))\right\}
%\nonumber
\end{align}
\begin{align}
\nonumber
\mathcal{L}_{8}:=&\left\{((0,0),(3,2)), ((0,1),(3,0)), ((0,2),(3,3)), ((0,3),(3,1)), \right.\\
\nonumber 
& \left. ((1,0),(1,2)), ((1,1),(1,0)), ((1,2),(1,3)), ((1,3),(1,1)),\right.\\
\nonumber 
& \left. ((2,0),(2,2)), ((2,1),(2,0)), ((2,2),(2,3)), ((2,3),(2,1)),\right.\\
\nonumber 
& \left. ((3,0),(0,2)), ((3,1),(0,0)), ((3,2),(0,3)), ((3,3),(0,1))\right\}
%\nonumber
\end{align}
\begin{align}
\nonumber
\mathcal{L}_{9}:=&\left\{((0,0),(1,0)), ((0,1),(1,2)), ((0,2),(1,1)), ((0,3),(1,3)), \right.\\
\nonumber 
& \left. ((1,0),(3,0)), ((1,1),(3,2)), ((1,2),(3,1)), ((1,3),(3,3)),\right.\\
\nonumber 
& \left. ((2,0),(0,0)), ((2,1),(0,2)), ((2,2),(0,1)), ((2,3),(0,3)),\right.\\
\nonumber 
& \left. ((3,0),(2,0)), ((3,1),(2,2)), ((3,2),(2,1)), ((3,3),(2,3))\right\}
%\nonumber
\end{align}
\begin{align}
\nonumber
\mathcal{L}_{10}:=&\left\{((0,0),(1,3)), ((0,1),(1,1)), ((0,2),(1,2)), ((0,3),(1,0)), \right.\\
\nonumber 
& \left. ((1,0),(3,3)), ((1,1),(3,1)), ((1,2),(3,2)), ((1,3),(3,0)),\right.\\
\nonumber 
& \left. ((2,0),(0,3)), ((2,1),(0,1)), ((2,2),(0,2)), ((2,3),(0,0)),\right.\\
\nonumber 
& \left. ((3,0),(2,3)), ((3,1),(2,1)), ((3,2),(2,2)), ((3,3),(2,0))\right\}
%\nonumber
\end{align}
\begin{align}
\nonumber
\mathcal{L}_{11}:=&\left\{((0,0),(1,1)), ((0,1),(1,3)), ((0,2),(1,0)), ((0,3),(1,2)), \right.\\
\nonumber 
& \left. ((1,0),(3,1)), ((1,1),(3,3)), ((1,2),(3,0)), ((1,3),(3,2)),\right.\\
\nonumber 
& \left. ((2,0),(0,1)), ((2,1),(0,3)), ((2,2),(0,0)), ((2,3),(0,2)),\right.\\
\nonumber 
& \left. ((3,0),(2,1)), ((3,1),(2,3)), ((3,2),(2,0)), ((3,3),(2,2))\right\}
%\nonumber
\end{align}
\begin{align}
\nonumber
\mathcal{L}_{12}:=&\left\{((0,0),(1,2)), ((0,1),(1,0)), ((0,2),(1,3)), ((0,3),(1,1)), \right.\\
\nonumber 
& \left. ((1,0),(3,2)), ((1,1),(3,0)), ((1,2),(3,3)), ((1,3),(3,1)),\right.\\
\nonumber 
& \left. ((2,0),(0,2)), ((2,1),(0,0)), ((2,2),(0,3)), ((2,3),(0,1)),\right.\\
\nonumber 
& \left. ((3,0),(2,2)), ((3,1),(2,0)), ((3,2),(2,3)), ((3,3),(2,1))\right\}
\end{align}
\begin{align}
\nonumber
\mathcal{L}_{13}:=&\left\{((0,0),(2,0)), ((0,1),(2,2)), ((0,2),(2,1)), ((0,3),(2,3)), \right.\\
\nonumber 
& \left. ((1,0),(0,0)), ((1,1),(0,2)), ((1,2),(0,1)), ((1,3),(0,3)),\right.\\
\nonumber 
& \left. ((2,0),(3,0)), ((2,1),(3,2)), ((2,2),(3,1)), ((2,3),(3,3)),\right.\\
\nonumber 
& \left. ((3,0),(1,0)), ((3,1),(1,2)), ((3,2),(1,1)), ((3,3),(1,3))\right\}
\end{align}
\begin{align}
\nonumber
\mathcal{L}_{14}:=&\left\{((0,0),(2,3)), ((0,1),(2,1)), ((0,2),(2,2)), ((0,3),(2,0)), \right.\\
\nonumber 
& \left. ((1,0),(0,3)), ((1,1),(0,1)), ((1,2),(0,2)), ((1,3),(0,0)),\right.\\
\nonumber 
& \left. ((2,0),(3,3)), ((2,1),(3,1)), ((2,2),(3,2)), ((2,3),(3,0)),\right.\\
\nonumber 
& \left. ((3,0),(1,3)), ((3,1),(1,1)), ((3,2),(1,2)), ((3,3),(1,0))\right\}
\end{align}
\begin{align}
\nonumber
\mathcal{L}_{15}:=&\left\{((0,0),(2,1)), ((0,1),(2,3)), ((0,2),(2,0)), ((0,3),(2,2)), \right.\\
\nonumber 
& \left. ((1,0),(0,1)), ((1,1),(0,3)), ((1,2),(0,0)), ((1,3),(0,2)),\right.\\
\nonumber 
& \left. ((2,0),(3,1)), ((2,1),(3,3)), ((2,2),(3,0)), ((2,3),(3,2)),\right.\\
\nonumber 
& \left. ((3,0),(1,1)), ((3,1),(1,3)), ((3,2),(1,0)), ((3,3),(1,2))\right\}
\end{align}
\begin{align}
\nonumber
\mathcal{L}_{16}:=&\left\{((0,0),(2,2)), ((0,1),(2,0)), ((0,2),(2,3)), ((0,3),(2,1)), \right.\\
\nonumber 
& \left. ((1,0),(0,2)), ((1,1),(0,0)), ((1,2),(0,3)), ((1,3),(0,1)),\right.\\
\nonumber 
& \left. ((2,0),(3,2)), ((2,1),(3,0)), ((2,2),(3,3)), ((2,3),(3,1)),\right.\\
\nonumber 
& \left. ((3,0),(1,2)), ((3,1),(1,0)), ((3,2),(1,3)), ((3,3),(1,1)) \right\}
\nonumber
\end{align}
}

%\begin{figure}[bp]
%\centering
%\renewcommand{\arraystretch}{1,3}
%\begin{tabular}{!{\vrule width 1pt}c!{\vrule width 1pt}c|c|c|c!{\vrule width 1pt}}\noalign{\hrule height 1pt}
%    &$0$&$1$&$2$&$3$\\\noalign{\hrule height 1pt}
% $0$ & $l_{1}$ & $l_{3}$ & $l_{4}$ & $l_{2}$\\\hline 
% $1$ & $l_{4}$ & $l_{2}$ & $l_{1}$ & $l_{3}$\\\hline
% $2$ & $l_{3}$ & $l_{1}$ & $l_{2}$ & $l_{4}$\\\hline
% $3$ & $l_{2}$ & $l_{4}$ & $l_{3}$ & $l_{1}$\\\noalign{\hrule height 1pt}
%\end{tabular}
%\caption{Latin Square representing the clustering $\mathcal{C}^{\left[j\right]}$ with the symbol sent by A(B) along the rows(columns)}
%\end{figure}

\begin{figure}[h]
\centering
\subfigure[$4 \times 4$ blocks in $L$]{
{
\begin{tabular}{|c|c|c|c|c|}
\hline $\:$    & 0       & 1       &  2     & 3\\
\hline 0    & $l_{1}$ & $l_{3}$ & $l_{4}$ & $l_{2}$\\ 
\hline 1    & $l_{4}$ & $l_{2}$ & $l_{1}$ & $l_{3}$\\ 
\hline 2    & $l_{3}$ & $l_{1}$ & $l_{2}$ & $l_{4}$\\ 
\hline 3    & $l_{2}$ & $l_{4}$ & $l_{3}$ & $l_{1}$\\ 
\hline 
\end{tabular}}
\label{L1}
}
\subfigure[The array $L_B$]{
%{\begin{tabular}{|c|c|c|c|c|}
%\hline $\:$    & 0       & 1       &  2     & 3\\
%\hline 0    &  $\alpha_1$ &  $\alpha_3$ & $\alpha_4$ & $\alpha_2$\\ 
%\hline 1    & $\alpha_4$ &  $\alpha_2$ & $\alpha_1$ & $\alpha_3$\\ 
%\hline 2    & $\alpha_3$ &  $\alpha_1$ & $\alpha_2$ & $\alpha_4$\\ 
%\hline 3    & $\alpha_2$ &  $\alpha_4$ & $\alpha_3$ & $\alpha_1$\\ 
%\hline
%\end{tabular}}
$\left[ {\begin{array}{cccc}
\alpha_1 &  \alpha_3 & \alpha_4 & \alpha_2 \\
\alpha_4 & \alpha_2 & \alpha_1  & \alpha_3  \\
\alpha_3 & \alpha_1 & \alpha_2  & \alpha_4  \\
\alpha_2 & \alpha_4 & \alpha_3  & \alpha_1  \\
\end{array} } \right]$ 
\label{fig:L_B}

}
\caption[]{Latin Square representing the clustering $\mathcal{C}^{\left[j\right]}$ with the symbol sent by A(B) along the rows(columns).}
\end{figure}

\begin{figure*}
{\footnotesize
{
\renewcommand{\arraystretch}{1,3}
\begin{tabular}{!{\vrule width 1pt}c!{\vrule width 1pt}c|c|c|c!{\vrule width 1pt}c|c|c|c!{\vrule width 1pt}c|c|c|c!{\vrule width 1pt}c|c|c|c!{\vrule width 1pt}} \noalign{\hrule height 1pt}
    &$(0,0)$&$(0,1)$&$(0,2)$&$(0,3)$&$(1,0)$&$(1,1)$&$(1,2)$&$(1,3)$&$(2,0)$&$(2,1)$&$(2,2)$&$(2,3)$&$(3,0)$&$(3,1)$&$(3,2)$&$(3,3)$ \\\noalign{\hrule height 1pt}

 $(0,0)$ & $\mathcal{L}_{25}$ & $\mathcal{L}_{21}$ & $\mathcal{L}_{22}$ & $\mathcal{L}_{23}$ & $\mathcal{L}_{5}$ & $\mathcal{L}_{1}$ & $\mathcal{L}_{2}$ & $\mathcal{L}_{3}$ & $\mathcal{L}_{10}$ & $\mathcal{L}_{6}$ & $\mathcal{L}_{7}$ & $\mathcal{L}_{8}$ & $\mathcal{L}_{15}$ & $\mathcal{L}_{11}$ & $\mathcal{L}_{12}$ & $\mathcal{L}_{13}$ \\\hline 
 $(0,1)$ & $\mathcal{L}_{24}$ & $\mathcal{L}_{25}$ & $\mathcal{L}_{21}$ & $\mathcal{L}_{22}$ & $\mathcal{L}_{4}$ & $\mathcal{L}_{5}$ & $\mathcal{L}_{1}$ & $\mathcal{L}_{2}$ & $\mathcal{L}_{9}$ & $\mathcal{L}_{10}$ & $\mathcal{L}_{6}$ & $\mathcal{L}_{7}$ & $\mathcal{L}_{14}$ & $\mathcal{L}_{15}$ & $\mathcal{L}_{11}$ & $\mathcal{L}_{12}$ \\\hline
 $(0,2)$ & $\mathcal{L}_{23}$ & $\mathcal{L}_{24}$ & $\mathcal{L}_{25}$ & $\mathcal{L}_{21}$ & $\mathcal{L}_{3}$ & $\mathcal{L}_{4}$ & $\mathcal{L}_{5}$ & $\mathcal{L}_{1}$ & $\mathcal{L}_{8}$ & $\mathcal{L}_{9}$ & $\mathcal{L}_{10}$ & $\mathcal{L}_{6}$ & $\mathcal{L}_{13}$ & $\mathcal{L}_{14}$ & $\mathcal{L}_{15}$ & $\mathcal{L}_{11}$ \\\hline
 $(0,3)$ & $\mathcal{L}_{22}$ & $\mathcal{L}_{23}$ & $\mathcal{L}_{24}$ & $\mathcal{L}_{25}$ & $\mathcal{L}_{2}$ & $\mathcal{L}_{3}$ & $\mathcal{L}_{4}$ & $\mathcal{L}_{5}$ & $\mathcal{L}_{7}$ & $\mathcal{L}_{8}$ & $\mathcal{L}_{9}$ & $\mathcal{L}_{10}$ & $\mathcal{L}_{12}$ & $\mathcal{L}_{13}$ & $\mathcal{L}_{14}$ & $\mathcal{L}_{15}$ \\\noalign{\hrule height 1pt}
 $(1,0)$ & $\mathcal{L}_{20}$ & $\mathcal{L}_{16}$ & $\mathcal{L}_{17}$ & $\mathcal{L}_{18}$ & $\mathcal{L}_{25}$ & $\mathcal{L}_{21}$ & $\mathcal{L}_{22}$ & $\mathcal{L}_{23}$ & $\mathcal{L}_{5}$ & $\mathcal{L}_{1}$ & $\mathcal{L}_{2}$ & $\mathcal{L}_{3}$ & $\mathcal{L}_{10}$ & $\mathcal{L}_{6}$ & $\mathcal{L}_{7}$ & $\mathcal{L}_{8}$ \\\hline
 $(1,1)$ & $\mathcal{L}_{19}$ & $\mathcal{L}_{20}$ & $\mathcal{L}_{16}$ & $\mathcal{L}_{17}$ & $\mathcal{L}_{24}$ & $\mathcal{L}_{25}$ & $\mathcal{L}_{21}$ & $\mathcal{L}_{22}$ & $\mathcal{L}_{4}$ & $\mathcal{L}_{5}$ & $\mathcal{L}_{1}$ & $\mathcal{L}_{2}$ & $\mathcal{L}_{9}$ & $\mathcal{L}_{10}$ & $\mathcal{L}_{6}$ & $\mathcal{L}_{7}$ \\\hline
 $(1,2)$ & $\mathcal{L}_{18}$ & $\mathcal{L}_{19}$ & $\mathcal{L}_{20}$ & $\mathcal{L}_{16}$ & $\mathcal{L}_{23}$ & $\mathcal{L}_{24}$ & $\mathcal{L}_{25}$ & $\mathcal{L}_{21}$ & $\mathcal{L}_{3}$ & $\mathcal{L}_{4}$ & $\mathcal{L}_{5}$ & $\mathcal{L}_{1}$ & $\mathcal{L}_{8}$ & $\mathcal{L}_{9}$ & $\mathcal{L}_{10}$ & $\mathcal{L}_{6}$ \\\hline
 $(1,3)$ & $\mathcal{L}_{17}$ & $\mathcal{L}_{18}$ & $\mathcal{L}_{19}$ & $\mathcal{L}_{20}$ & $\mathcal{L}_{22}$ & $\mathcal{L}_{23}$ & $\mathcal{L}_{24}$ & $\mathcal{L}_{25}$ & $\mathcal{L}_{2}$ & $\mathcal{L}_{3}$ & $\mathcal{L}_{4}$ & $\mathcal{L}_{5}$ & $\mathcal{L}_{7}$ & $\mathcal{L}_{8}$ & $\mathcal{L}_{9}$ & $\mathcal{L}_{10}$ \\\noalign{\hrule height 1pt}
 $(2,0)$ & $\mathcal{L}_{15}$ & $\mathcal{L}_{11}$ & $\mathcal{L}_{12}$ & $\mathcal{L}_{13}$ & $\mathcal{L}_{20}$ & $\mathcal{L}_{16}$ & $\mathcal{L}_{17}$ & $\mathcal{L}_{18}$ & $\mathcal{L}_{25}$ & $\mathcal{L}_{21}$ & $\mathcal{L}_{22}$ & $\mathcal{L}_{23}$ & $\mathcal{L}_{5}$ & $\mathcal{L}_{1}$ & $\mathcal{L}_{2}$ & $\mathcal{L}_{3}$ \\\hline
 $(2,1)$ & $\mathcal{L}_{14}$ & $\mathcal{L}_{15}$ & $\mathcal{L}_{11}$ & $\mathcal{L}_{12}$ & $\mathcal{L}_{19}$ & $\mathcal{L}_{20}$ & $\mathcal{L}_{16}$ & $\mathcal{L}_{17}$ & $\mathcal{L}_{24}$ & $\mathcal{L}_{25}$ & $\mathcal{L}_{21}$ & $\mathcal{L}_{22}$ & $\mathcal{L}_{4}$ & $\mathcal{L}_{5}$ & $\mathcal{L}_{1}$ & $\mathcal{L}_{2}$ \\\hline
 $(2,2)$ & $\mathcal{L}_{13}$ & $\mathcal{L}_{14}$ & $\mathcal{L}_{15}$ & $\mathcal{L}_{11}$ & $\mathcal{L}_{18}$ & $\mathcal{L}_{19}$ & $\mathcal{L}_{20}$ & $\mathcal{L}_{16}$ & $\mathcal{L}_{23}$ & $\mathcal{L}_{24}$ & $\mathcal{L}_{25}$ & $\mathcal{L}_{21}$ & $\mathcal{L}_{3}$ & $\mathcal{L}_{4}$ & $\mathcal{L}_{5}$ & $\mathcal{L}_{1}$ \\\hline
 $(2,3)$ & $\mathcal{L}_{12}$ & $\mathcal{L}_{13}$ & $\mathcal{L}_{14}$ & $\mathcal{L}_{15}$ & $\mathcal{L}_{17}$ & $\mathcal{L}_{18}$ & $\mathcal{L}_{19}$ & $\mathcal{L}_{20}$ & $\mathcal{L}_{22}$ & $\mathcal{L}_{23}$ & $\mathcal{L}_{24}$ & $\mathcal{L}_{25}$ & $\mathcal{L}_{2}$ & $\mathcal{L}_{3}$ & $\mathcal{L}_{4}$ & $\mathcal{L}_{5}$ \\\noalign{\hrule height 1pt}
 $(3,0)$ & $\mathcal{L}_{10}$ & $\mathcal{L}_{6}$ & $\mathcal{L}_{7}$ & $\mathcal{L}_{8}$ & $\mathcal{L}_{15}$ & $\mathcal{L}_{11}$ & $\mathcal{L}_{12}$ & $\mathcal{L}_{13}$ & $\mathcal{L}_{20}$ & $\mathcal{L}_{16}$ & $\mathcal{L}_{17}$ & $\mathcal{L}_{18}$ & $\mathcal{L}_{25}$ & $\mathcal{L}_{21}$ & $\mathcal{L}_{22}$ & $\mathcal{L}_{23}$ \\\hline
 $(3,1)$ & $\mathcal{L}_{9}$ & $\mathcal{L}_{10}$ & $\mathcal{L}_{6}$ & $\mathcal{L}_{7}$ & $\mathcal{L}_{14}$ & $\mathcal{L}_{15}$ & $\mathcal{L}_{11}$ & $\mathcal{L}_{12}$ & $\mathcal{L}_{19}$ & $\mathcal{L}_{20}$ & $\mathcal{L}_{16}$ & $\mathcal{L}_{17}$ & $\mathcal{L}_{24}$ & $\mathcal{L}_{25}$ & $\mathcal{L}_{21}$ & $\mathcal{L}_{22}$ \\\hline
 $(3,2)$ & $\mathcal{L}_{8}$ & $\mathcal{L}_{9}$ & $\mathcal{L}_{10}$ & $\mathcal{L}_{6}$ & $\mathcal{L}_{13}$ & $\mathcal{L}_{14}$ & $\mathcal{L}_{15}$ & $\mathcal{L}_{11}$ & $\mathcal{L}_{18}$ & $\mathcal{L}_{19}$ & $\mathcal{L}_{20}$ & $\mathcal{L}_{16}$ & $\mathcal{L}_{23}$ & $\mathcal{L}_{24}$ & $\mathcal{L}_{25}$ & $\mathcal{L}_{21}$ \\\hline
 $(3,3)$ & $\mathcal{L}_{7}$ & $\mathcal{L}_{8}$ & $\mathcal{L}_{9}$ & $\mathcal{L}_{10}$ & $\mathcal{L}_{12}$ & $\mathcal{L}_{13}$ & $\mathcal{L}_{14}$ & $\mathcal{L}_{15}$ & $\mathcal{L}_{17}$ & $\mathcal{L}_{18}$ & $\mathcal{L}_{19}$ & $\mathcal{L}_{20}$ & $\mathcal{L}_{22}$ & $\mathcal{L}_{23}$ & $\mathcal{L}_{24}$ & $\mathcal{L}_{25}$ \\\noalign{\hrule height 1pt}

\end{tabular}
}
}
\caption{Latin Square representing the clustering at the relay for the case $\gamma e^{j \theta}=0.5+0.5j$, obtained using Cartesian Cluster Product, with the 4-PSK symbols that A(B) sent in the first and second channel use along the rows(columns)}
\label{ls2}
\end{figure*}

The resulting Latin Square representing the clusters denoted by say $L$, is shown in Fig. 3. This $16 \times 16$ array $L$ can be divided into 16 blocks of $4 \times 4$ arrays. Let $L_B=\left[L_{i,j}\right] $ where each $L_{i,j}$ is a $4\times 4$ array for $i,j=1,2,3,4$ as shown in Fig. 3. Each $L_{i,j}$ is in one-to-one correspondence with the Latin Square obtained in \cite{NMR} for removing the singular fade state $\gamma e^{j \theta}=j$ for the two-way 2-stage relaying scenario representing the clustering $\mathcal{C}^{\left[j\right]}$ given in Fig \ref{L1}. Also let, 
\begin{align*}
\alpha_1 &:= L_{1,1}=L_{2,3}=L_{3,2}=L_{4,4},\\ 
\alpha_2 &:= L_{1,4}=L_{2,2}=L_{3,3}=L_{4,1}, \\
\alpha_3 &:= L_{1,2}=L_{2,4}=L_{3,1}=L_{4,3} \quad \mathrm{and} \\
\alpha_4 &:= L_{1,3}=L_{2,1}=L_{3,4}=L_{4,2}.
\end{align*}
This makes $L_B$ of the form shown in Fig. \ref{fig:L_B}.
%\begin{align}
%\vspace{-0.5cm}
%\nonumber
%&L_B=
%\left[ {\begin{array}{cccc}
%\vspace{0.15cm}
%\alpha_1 &  \alpha_3 & \alpha_4 & \alpha_2 \\
%\vspace{0.15cm}
%\alpha_4 & \alpha_2 & \alpha_1  & \alpha_3  \\
%\vspace{0.15cm}
%\alpha_3 & \alpha_1 & \alpha_2  & \alpha_4  \\
%\alpha_2 & \alpha_4 & \alpha_3  & \alpha_1  \\
%\end{array} } \right]
%\end{align}

This makes the block matrix $L_B$ also consistent with the Latin Square given in Fig. 4. The reason behind $L_B$ and $L_{i,j}$ being consistent with this Latin Square is as follows: each $L_{i,j}$ corresponds to some fixed values of the symbols A and B send during the first channel use, with the symbols sent by A and B during second channel use varying along the rows and columns respectively. The Latin Square in Fig. 3 has been obtained by taking the Cartesian Product of the clustering for removing the fade state $\gamma e^{j \theta}=j$ with itself. The Cartesian Product utilizes the clustering that is represented by the Latin Square given in Fig. 4, given by $\mathcal{C}^{\left[ j \right]}$, for the case for both the first and second channel use in the MA phase, which makes each $L_{i,j} ~ i,j=1,2,3,4$ and $L_B$ in one-to-one correspondence with this Latin Square that represents the clustering $\mathcal{C}^{\left[j\right]}$. \\

\end{example}

\textbf{\textit{Case 2:} $\gamma e^{j \theta} $ lies on the circle of radius $ 1/\sqrt{2}$.}\\

In this case, the Cartesian Product of $\mathcal{C}^{\left[\gamma e^{j \theta}\right]}$ with itself consists of $25$ clusters since the clustering $\mathcal{C}^{\left[\gamma e^{j \theta}\right]}$ has $5$ clusters. We now give an example of this case. The remaining instances of this case can be obtained from this example as will be shown later in Section V, Lemma 6.\\

\begin{example} Consider the case when $\gamma e^{j \theta}=0.5+0.5j.$ The clustering $\mathcal{C}^{\left[0.5+0.5j\right]}$ given in \cite{NMR} that removes this fade state for the two-way 2-stage relaying scenario is given by:
\begin{align}
\nonumber
\vspace{-0.8cm}
\mathcal{C}^{\left[0.5+0.5j\right]}=&\left\{l_1,l_2,l_3,l_4,l_5\right\},\\
\text{where,}\\
\nonumber
&l_{1}=\left\{\left(0,1\right),\left(1,2\right),\left(2,3\right)\right\}\\
\nonumber
&l_{2}=\left\{\left(0,2\right),\left(1,3\right),\left(3,0\right)\right\}\\
\nonumber
&l_{3}=\left\{\left(0,3\right),\left(2,0\right),\left(3,1\right)\right\}\\
\nonumber
&l_{4}=\left\{\left(1,0\right),\left(2,1\right),\left(3,2\right)\right\}\\
\nonumber
&l_{5}=\left\{\left(0,0\right),\left(1,1\right),\left(2,2\right),\left(3,3\right)\right\}.
\nonumber
\end{align}

The Cartesian Product of the above clustering given by $\mathcal{D}^{\left[0.5+0.5j\right]}=\left\{ \mathcal{C}^{\left\{l_{i},l_{j}\right\}}~ |~ i,j=1,2,3,4,5\right\}$ contains exactly $25$ clusters. The clusters and the corresponding constraints for the Latin Square representing the clustering have been listed in the Appendix A. The Cartesian Product of the clustering $\mathcal{C}^{\left[0.5+0.5j\right]}$ with itself, denoted by $\mathcal{D}^{\left[0.5+0.5j\right]}$ can be represented by the Latin Square given in Fig. 5.\\

\begin{figure}[ht]
\centering
\renewcommand{\arraystretch}{1,3}
\begin{tabular}{!{\vrule width 1pt}c!{\vrule width 1pt}c|c|c|c!{\vrule width 1pt}}\noalign{\hrule height 1pt}
    &$0$&$1$&$2$&$3$\\\noalign{\hrule height 1pt}
 $0$ & $l_{5}$ & $l_{1}$ & $l_{2}$ & $l_{3}$\\\hline 
 $1$ & $l_{4}$ & $l_{5}$ & $l_{1}$ & $l_{2}$\\\hline
 $2$ & $l_{3}$ & $l_{4}$ & $l_{5}$ & $l_{1}$\\\hline
 $3$ & $l_{2}$ & $l_{3}$ & $l_{4}$ & $l_{5}$\\\noalign{\hrule height 1pt}
\end{tabular}
\caption{Latin Square $l$ representing the clustering $\mathcal{C}^{\left[0.5+0.5j\right]}$ with the symbol sent by A(B) along the rows(columns)}
\end{figure}

\begin{figure*}
{\footnotesize
{
\renewcommand{\arraystretch}{1,3}
\begin{tabular}{!{\vrule width 1pt}c!{\vrule width 1pt}c|c|c|c!{\vrule width 1pt}c|c|c|c!{\vrule width 1pt}c|c|c|c!{\vrule width 1pt}c|c|c|c!{\vrule width 1pt}} \noalign{\hrule height 1pt}
    &$(0,0)$&$(0,1)$&$(0,2)$&$(0,3)$&$(1,0)$&$(1,1)$&$(1,2)$&$(1,3)$&$(2,0)$&$(2,1)$&$(2,2)$&$(2,3)$&$(3,0)$&$(3,1)$&$(3,2)$&$(3,3)$ \\\noalign{\hrule height 1pt}

 $(0,0)$ & $\mathcal{L}_{25}$ & $\mathcal{L}_{21}$ & $\mathcal{L}_{24}$ & $\mathcal{L}_{22}$ & $\mathcal{L}_{5}$ & $\mathcal{L}_{1}$ & $\mathcal{L}_{4}$ & $\mathcal{L}_{2}$ & $\mathcal{L}_{20}$ & $\mathcal{L}_{16}$ & $\mathcal{L}_{19}$ & $\mathcal{L}_{17}$ & $\mathcal{L}_{10}$ & $\mathcal{L}_{6}$ & $\mathcal{L}_{9}$ & $\mathcal{L}_{7}$ \\\hline 
 $(0,1)$ & $\mathcal{L}_{22}$ & $\mathcal{L}_{25}$ & $\mathcal{L}_{23}$ & $\mathcal{L}_{24}$ & $\mathcal{L}_{2}$ & $\mathcal{L}_{5}$ & $\mathcal{L}_{3}$ & $\mathcal{L}_{4}$ & $\mathcal{L}_{17}$ & $\mathcal{L}_{20}$ & $\mathcal{L}_{18}$ & $\mathcal{L}_{19}$ & $\mathcal{L}_{7}$ & $\mathcal{L}_{10}$ & $\mathcal{L}_{8}$ & $\mathcal{L}_{9}$ \\\hline
 $(0,2)$ & $\mathcal{L}_{23}$ & $\mathcal{L}_{24}$ & $\mathcal{L}_{25}$ & $\mathcal{L}_{21}$ & $\mathcal{L}_{3}$ & $\mathcal{L}_{4}$ & $\mathcal{L}_{5}$ & $\mathcal{L}_{1}$ & $\mathcal{L}_{18}$ & $\mathcal{L}_{19}$ & $\mathcal{L}_{20}$ & $\mathcal{L}_{16}$ & $\mathcal{L}_{8}$ & $\mathcal{L}_{9}$ & $\mathcal{L}_{10}$ & $\mathcal{L}_{6}$ \\\hline
 $(0,3)$ & $\mathcal{L}_{21}$ & $\mathcal{L}_{23}$ & $\mathcal{L}_{22}$ & $\mathcal{L}_{25}$ & $\mathcal{L}_{1}$ & $\mathcal{L}_{3}$ & $\mathcal{L}_{2}$ & $\mathcal{L}_{5}$ & $\mathcal{L}_{16}$ & $\mathcal{L}_{18}$ & $\mathcal{L}_{17}$ & $\mathcal{L}_{20}$ & $\mathcal{L}_{6}$ & $\mathcal{L}_{8}$ & $\mathcal{L}_{7}$ & $\mathcal{L}_{10}$ \\\noalign{\hrule height 1pt}
 $(1,0)$ & $\mathcal{L}_{10}$ & $\mathcal{L}_{6}$ & $\mathcal{L}_{9}$ & $\mathcal{L}_{7}$ & $\mathcal{L}_{25}$ & $\mathcal{L}_{21}$ & $\mathcal{L}_{24}$ & $\mathcal{L}_{22}$ & $\mathcal{L}_{15}$ & $\mathcal{L}_{11}$ & $\mathcal{L}_{14}$ & $\mathcal{L}_{12}$ & $\mathcal{L}_{20}$ & $\mathcal{L}_{16}$ & $\mathcal{L}_{19}$ & $\mathcal{L}_{17}$ \\\hline
 $(1,1)$ & $\mathcal{L}_{7}$ & $\mathcal{L}_{10}$ & $\mathcal{L}_{8}$ & $\mathcal{L}_{9}$ & $\mathcal{L}_{22}$ & $\mathcal{L}_{25}$ & $\mathcal{L}_{23}$ & $\mathcal{L}_{24}$ & $\mathcal{L}_{12}$ & $\mathcal{L}_{15}$ & $\mathcal{L}_{13}$ & $\mathcal{L}_{14}$ & $\mathcal{L}_{17}$ & $\mathcal{L}_{20}$ & $\mathcal{L}_{18}$ & $\mathcal{L}_{19}$ \\\hline
 $(1,2)$ & $\mathcal{L}_{8}$ & $\mathcal{L}_{9}$ & $\mathcal{L}_{10}$ & $\mathcal{L}_{6}$ & $\mathcal{L}_{23}$ & $\mathcal{L}_{24}$ & $\mathcal{L}_{25}$ & $\mathcal{L}_{21}$ & $\mathcal{L}_{13}$ & $\mathcal{L}_{14}$ & $\mathcal{L}_{15}$ & $\mathcal{L}_{11}$ & $\mathcal{L}_{18}$ & $\mathcal{L}_{19}$ & $\mathcal{L}_{20}$ & $\mathcal{L}_{16}$ \\\hline
 $(1,3)$ & $\mathcal{L}_{6}$ & $\mathcal{L}_{8}$ & $\mathcal{L}_{7}$ & $\mathcal{L}_{10}$ & $\mathcal{L}_{21}$ & $\mathcal{L}_{23}$ & $\mathcal{L}_{22}$ & $\mathcal{L}_{25}$ & $\mathcal{L}_{11}$ & $\mathcal{L}_{13}$ & $\mathcal{L}_{12}$ & $\mathcal{L}_{15}$ & $\mathcal{L}_{16}$ & $\mathcal{L}_{18}$ & $\mathcal{L}_{17}$ & $\mathcal{L}_{20}$ \\\noalign{\hrule height 1pt}
 $(2,0)$ & $\mathcal{L}_{15}$ & $\mathcal{L}_{11}$ & $\mathcal{L}_{14}$ & $\mathcal{L}_{12}$ & $\mathcal{L}_{20}$ & $\mathcal{L}_{16}$ & $\mathcal{L}_{19}$ & $\mathcal{L}_{17}$ & $\mathcal{L}_{25}$ & $\mathcal{L}_{21}$ & $\mathcal{L}_{24}$ & $\mathcal{L}_{22}$ & $\mathcal{L}_{5}$ & $\mathcal{L}_{1}$ & $\mathcal{L}_{4}$ & $\mathcal{L}_{2}$ \\\hline
 $(2,1)$ & $\mathcal{L}_{12}$ & $\mathcal{L}_{15}$ & $\mathcal{L}_{13}$ & $\mathcal{L}_{14}$ & $\mathcal{L}_{17}$ & $\mathcal{L}_{20}$ & $\mathcal{L}_{18}$ & $\mathcal{L}_{19}$ & $\mathcal{L}_{22}$ & $\mathcal{L}_{25}$ & $\mathcal{L}_{23}$ & $\mathcal{L}_{24}$ & $\mathcal{L}_{2}$ & $\mathcal{L}_{5}$ & $\mathcal{L}_{3}$ & $\mathcal{L}_{4}$ \\\hline
 $(2,2)$ & $\mathcal{L}_{13}$ & $\mathcal{L}_{14}$ & $\mathcal{L}_{15}$ & $\mathcal{L}_{11}$ & $\mathcal{L}_{18}$ & $\mathcal{L}_{19}$ & $\mathcal{L}_{20}$ & $\mathcal{L}_{16}$ & $\mathcal{L}_{23}$ & $\mathcal{L}_{24}$ & $\mathcal{L}_{25}$ & $\mathcal{L}_{21}$ & $\mathcal{L}_{3}$ & $\mathcal{L}_{4}$ & $\mathcal{L}_{5}$ & $\mathcal{L}_{1}$ \\\hline
 $(2,3)$ & $\mathcal{L}_{11}$ & $\mathcal{L}_{13}$ & $\mathcal{L}_{12}$ & $\mathcal{L}_{15}$ & $\mathcal{L}_{16}$ & $\mathcal{L}_{18}$ & $\mathcal{L}_{17}$ & $\mathcal{L}_{20}$ & $\mathcal{L}_{21}$ & $\mathcal{L}_{23}$ & $\mathcal{L}_{22}$ & $\mathcal{L}_{25}$ & $\mathcal{L}_{1}$ & $\mathcal{L}_{3}$ & $\mathcal{L}_{2}$ & $\mathcal{L}_{5}$ \\\noalign{\hrule height 1pt}
 $(3,0)$ & $\mathcal{L}_{5}$ & $\mathcal{L}_{1}$ & $\mathcal{L}_{4}$ & $\mathcal{L}_{2}$ & $\mathcal{L}_{15}$ & $\mathcal{L}_{11}$ & $\mathcal{L}_{14}$ & $\mathcal{L}_{12}$ & $\mathcal{L}_{10}$ & $\mathcal{L}_{6}$ & $\mathcal{L}_{9}$ & $\mathcal{L}_{7}$ & $\mathcal{L}_{25}$ & $\mathcal{L}_{21}$ & $\mathcal{L}_{24}$ & $\mathcal{L}_{22}$ \\\hline
 $(3,1)$ & $\mathcal{L}_{2}$ & $\mathcal{L}_{5}$ & $\mathcal{L}_{3}$ & $\mathcal{L}_{4}$ & $\mathcal{L}_{12}$ & $\mathcal{L}_{15}$ & $\mathcal{L}_{13}$ & $\mathcal{L}_{14}$ & $\mathcal{L}_{7}$ & $\mathcal{L}_{10}$ & $\mathcal{L}_{8}$ & $\mathcal{L}_{9}$ & $\mathcal{L}_{22}$ & $\mathcal{L}_{25}$ & $\mathcal{L}_{23}$ & $\mathcal{L}_{24}$ \\\hline
 $(3,2)$ & $\mathcal{L}_{3}$ & $\mathcal{L}_{4}$ & $\mathcal{L}_{5}$ & $\mathcal{L}_{1}$ & $\mathcal{L}_{13}$ & $\mathcal{L}_{14}$ & $\mathcal{L}_{15}$ & $\mathcal{L}_{11}$ & $\mathcal{L}_{8}$ & $\mathcal{L}_{9}$ & $\mathcal{L}_{10}$ & $\mathcal{L}_{6}$ & $\mathcal{L}_{23}$ & $\mathcal{L}_{24}$ & $\mathcal{L}_{25}$ & $\mathcal{L}_{21}$ \\\hline
 $(3,3)$ & $\mathcal{L}_{1}$ & $\mathcal{L}_{3}$ & $\mathcal{L}_{2}$ & $\mathcal{L}_{5}$ & $\mathcal{L}_{11}$ & $\mathcal{L}_{13}$ & $\mathcal{L}_{12}$ & $\mathcal{L}_{15}$ & $\mathcal{L}_{6}$ & $\mathcal{L}_{8}$ & $\mathcal{L}_{7}$ & $\mathcal{L}_{10}$ & $\mathcal{L}_{21}$ & $\mathcal{L}_{23}$ & $\mathcal{L}_{22}$ & $\mathcal{L}_{25}$ \\\noalign{\hrule height 1pt}

\end{tabular}
}
}
\caption{Latin Square representing the clustering at the relay for the case $\gamma e^{j \theta}=1+j$, obtained using Cartesian Cluster Product, with the 4-PSK symbols that A(B) sent in the first and second channel use along the rows(columns)}
\label{ls2}
\end{figure*}

As explained in the previous example, let the $16 \times 16$ Latin Square obtained as shown in Fig. 5 be denoted by say $L'$ and $L'_{B}=\left[L'_{i,j}\right]$ with $i,j=1,2,3,4$. Then both $L'_{B}$ and $L'_{i,j}$ must be consistent with the Latin Square of side $4$ given in Fig. 6, denoted by $l$, which represents the clustering $\mathcal{C}^{\left[0.5+0.5j\right]}$. As can be seen in Fig. 5, $l$ is repeated in each block $L'_{i,j}$ with a possibly different set of five symbols amongst $\left\{\mathcal{L}_{1},\mathcal{L}_{2},..., \mathcal{L}_{25}\right\}$ denoting the five symbols $\left\{l_{1}, l_{2},...,l_{5}\right\}$ in each $L'_{i,j}$. More precisely, the blocks $ L'_{1,1}=L'_{2,2}=L'_{3,3}=L'_{4,4}$ are the same as $l$, with the symbols $\mathcal{L}_{21}, \mathcal{L}_{22}, ..., \mathcal{L}_{25}$ replacing the symbols $l_{1},l_{2},...,l_{5}$ respectively. Similarly, the blocks $L'_{1,2}=L'_{2,3}=L'_{3,4}$ are the same as $l$ with symbols $\mathcal{L}_{1}, \mathcal{L}_{2}, ..., \mathcal{L}_{5}$ replacing the symbols $l_{1},l_{2},...,l_{5}$ respectively, the blocks $L'_{1,3}=L'_{2,4}=L'_{4,1}$ are the same as $l$ with symbols $\mathcal{L}_{6}, \mathcal{L}_{7}, ..., \mathcal{L}_{10}$ replacing the symbols $l_{1},l_{2},...,l_{5}$ respectively, the blocks $L'_{1,4}=L'_{3,1}=L'_{4,2}$ are the same as $l$ with symbols $\mathcal{L}_{11}, \mathcal{L}_{12}, ..., \mathcal{L}_{15}$ replacing the symbols $l_{1},l_{2},...,l_{5}$ respectively, and the blocks $L'_{2,1}=L'_{3,2}=L'_{4,3}$ are the same as $l$ with symbols $\mathcal{L}_{16}, \mathcal{L}_{17}, ..., \mathcal{L}_{20}$ replacing the symbols $l_{1},l_{2},...,l_{5}$ respectively. Thus, the array $L'$ can be obtained using $l$ by simply using a different set of five symbols to denote $l_1,l_2,...,l_5$ for every set of blocks corresponding to a symbol amongst $l_1,l_2,...,l_5$ in $l$. We will illustrate this in the next example by obtaining the $16 \times 16$ Latin Square using the $4\times 4$ Latin Square given in \cite{NMR} for the case.\\
\end{example}

\textbf{\textit{Case 3:} $\gamma e^{j \theta} $ lies on the circle of radius $ \sqrt{2}$.}\\

In this case, the Cartesian Product of $\mathcal{C}^{\left[\gamma e^{j \theta}\right]}$ with itself consists of $25$ clusters since the clustering $\mathcal{C}^{\left[\gamma e^{j \theta}\right]}$ has $5$ clusters. An instance of this case is as follows.\\

\begin{example} Consider the case when $\gamma e^{j \theta}=1+j.$ The clustering $\mathcal{C}^{\left[1+j\right]}$ given in \cite{NMR} that removes this fade state for the two-way 2-stage relaying scenario is given by:
\begin{align}
\nonumber
\vspace{-0.8cm}
\mathcal{C}^{\left[1+j\right]}=&\left\{l_{1},l_{2},l_{3},l_{4},l_{5}\right\} \\
\nonumber
\text{where,}\\
\nonumber
&l_{1}=\left\{\left(0,1\right),\left(2,3\right),\left(3,0\right)\right\}\\
\nonumber
&l_{2}=\left\{\left(0,3\right),\left(1,0\right),\left(3,2\right)\right\}\\
\nonumber
&l_{3}=\left\{\left(1,2\right),\left(2,0\right),\left(3,1\right)\right\}\\
\nonumber
&l_{4}=\left\{\left(0,2\right),\left(1,3\right),\left(2,1\right)\right\}\\
\nonumber
&l_{5}=\left\{\left(0,0\right),\left(1,1\right),\left(2,2\right),\left(3,3\right)\right\}.
\nonumber
\end{align}

This clustering can be represented by a Latin Square of side $4$ denoted by $l'$ as shown in Fig. 8. \\

\begin{figure}[ht]
\centering
\renewcommand{\arraystretch}{1,3}
\begin{tabular}{!{\vrule width 1pt}c!{\vrule width 1pt}c|c|c|c!{\vrule width 1pt}}\noalign{\hrule height 1pt}
    &$0$&$1$&$2$&$3$\\\noalign{\hrule height 1pt}
 $0$ & $l_{5}$ & $l_{1}$ & $l_{4}$ & $l_{2}$\\\hline 
 $1$ & $l_{2}$ & $l_{5}$ & $l_{3}$ & $l_{4}$\\\hline
 $2$ & $l_{3}$ & $l_{4}$ & $l_{5}$ & $l_{1}$\\\hline
 $3$ & $l_{1}$ & $l_{3}$ & $l_{2}$ & $l_{5}$\\\noalign{\hrule height 1pt}
\end{tabular}
\caption{Latin Square representing the clustering $\mathcal{C}^{\left[1+j\right]}$ with the symbol sent by A(B) along the rows(columns)}
\end{figure}

The Cartesian Product of the above clustering given by $\mathcal{D}^{\left[1+j\right]}=\left\{ \mathcal{C}^{\left\{l_{i},l_{j}\right\}}~ |~ i,j=1,2,3,4,5\right\}$ contains exactly $25$ clusters as given in the Appendix B. We represent these clusters by a Latin Square of side $16$, with $ (x_{A_{1}},x_{A_{2}})$ along the rows, and $(x_{B_{1}},x_{B_{2}})$ along the columns. The $((x_{A_{1}},x_{A_{2}}),(x_{B_{1}},x_{B_{2}}))$ entry of the Latin Square as dictated by the clusters above are also listed in the Appendix B. %In this example, we obtain the 

Let the $16 \times 16$ Latin Square that represents the clustering obtained as a Cartesian Product of $\mathcal{C}^{\left[1+j\right]}$ with itself be denoted by say $L''$, and $L''_{B}=\left[L''_{i,j}\right]$ with $i,j=1,2,3,4$. Then each of $L''_{B}$ and $L''_{i,j}$ must be consistent with the Latin Square of side $4$ given in Fig. 8 which represents the clustering $\mathcal{C}^{\left[1+j\right]}$. We denote $l_1,l_2, ...,l_5$ in $L_{1,2}=L_{3,4}=L_{4,1}$ by $\mathcal{L}_{1}, \mathcal{L}_{2},...,\mathcal{L}_{5}$, in $L_{1,4}=L_{2,1}=L_{4,3}$ by $\mathcal{L}_{5}, \mathcal{L}_{6},...,\mathcal{L}_{10}$, in blocks $L_{2,3}=L_{3,1}=L_{3,2}$ by $\mathcal{L}_{10}, \mathcal{L}_{11},...,\mathcal{L}_{15}$, in blocks $L_{1,3}=L_{2,4}=L_{3,2}$ by $\mathcal{L}_{16}, \mathcal{L}_{17},...,\mathcal{L}_{20}$ and in blocks $L_{1,1}=L_{2,2}=L_{3,3}=L_{4,4}$ by $\mathcal{L}_{21}, \mathcal{L}_{22},...,\mathcal{L}_{25}$. Placing these blocks in accordance with $l'$, the Cartesian Product of the clustering $\mathcal{C}^{\left[1+j\right]}$ with itself, denoted by $\mathcal{D}^{\left[1+j\right]}$ can be represented by the Latin Square given in Fig. 7.\\

The Latin Square which removes the singular fade state $\frac{1}{\gamma}e^{-j\theta}$ can be obtained by taking the transpose of the Latin Square which removes the singular fade state ${\gamma}e^{j\theta}$. For example, the Latin Square which removes the singular fade state $\sqrt{2}e^{j\frac{\pi}{4}}$ can also be obtained by taking the transpose of the Latin Square which removes the singular fade state $\frac{1}{\sqrt{2}}e^{-j\frac{\pi}{4}}.$ The reason for this is as follows: the case when the singular fade state is $\frac{1}{\gamma}e^{-j\theta}$ can be equivalently viewed as the case when the singular fade states is  ${\gamma}e^{j\theta}$ with the users A and B interchanged. Interchanging the users is equivalent to taking transpose of the Latin Square.\\
\end{example}

\section{Clusterings from Latin Square of Lower Size}

In this section, we deal with any $2^ \lambda $-PSK constellation. In \cite{MNR}, it was shown that the singular fade states for the two-way 2-stage relaying scenario lie on circles centered at the origin. From Lemma 1, since the singular fade states for the ACF two-way relaying scenario are the same as that of the two-way 2-stage relaying scenario, it follows that the singular fade states for the ACF two-way relaying scenario lie on circles centered at the origin as well. In this section, it is shown that for each circle, it is enough if we obtain one Latin Square which removes a singular fade state on that circle. The Latin Squares which remove the other singular fade states on that circle can be obtained by some elementary operations on the Latin Square, which are described in the sequel.\\

 For a Latin Square $L$ of order $2^{2\lambda}$, let $L_{i,j}, 0 \leq i,j \leq 2^{\lambda}-1,$ denote the Latin Sub-square of order $2^{\lambda}$ obtained by taking only the rows $2^{\lambda}i$ to $2^{\lambda}(i+1)-1$ and only the columns $2^{\lambda}j$ to $2^{\lambda}(j+1)-1$ of $L$. For example, in Fig. \ref{Latin_subsquare_ex}, the Latin Sub-squares $L_{i,j}, i,j \in \lbrace 0,1 \rbrace,$ of order 2 corresponding to the Latin Square $L$ of order 4 are shown. Let $L_B$ denote the Square of order $2^\lambda$, associated with the Latin Square $L$ of order $2^{2\lambda}$, with $L_{i,j},0 \leq i,j \leq 2^{\lambda}-1,$ as its entries. For example the square $L_B$ of order 2 associated to a Latin Square $L$ of order 4 is as shown in Fig. \ref{L_B_example}.\\

 \begin{figure}[htbp]
\centering
%\vspace{-.8 cm}
\includegraphics[totalheight=1.25in,width=2.5in]{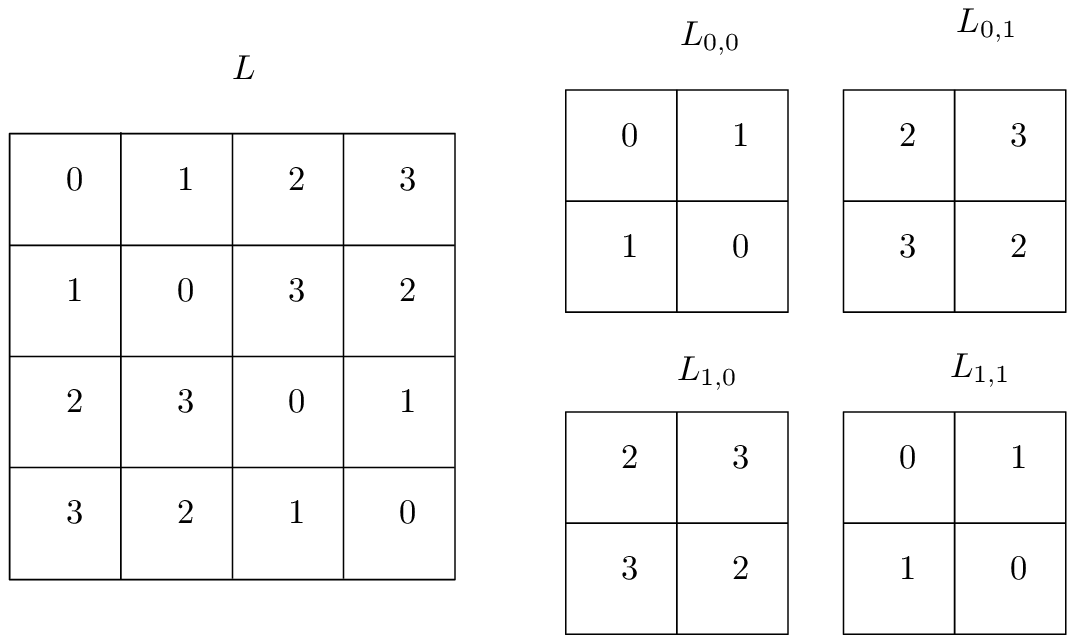}
%\vspace{-2 cm}
\caption{Obtaining Latin Squares $L_{i,j}$'s from the Latin Square $L$.}     
\label{Latin_subsquare_ex}        
\end{figure}

  \begin{figure}[htbp]
\centering
%\vspace{-.8 cm}
\includegraphics[totalheight=1in,width=1in]{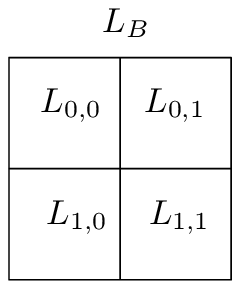}
%\vspace{-2 cm}
\caption{The Square $L_B$ order 2 corresponding to a Latin Square $L$ of order 4.}     
\label{L_B_example}        
\end{figure}

 \begin{lemma}
 Consider the two-way ACF relaying with $2^\lambda$-PSK signal set used at nodes A and B. The Latin Square $L''$ of order $2^{2\lambda}$ which removes the singular fade state $(\gamma, \theta^{\prime})$, can be obtained from the Latin Square $L$ of order $2^{2\lambda}$ which removes the singular fade state $(\gamma, \theta)$, where $\theta'-\theta = k\frac{2\pi}{2^{\lambda}}$, as follows: Cyclic shift the columns of each one the $2^{2\lambda}$ Latin Squares $L_{i,j}, 0 \leq i,j \leq 2^{\lambda}-1,$ $k$ times to the left to get the Latin Square $L'$. Cyclic shift the columns of the Square $L'_{B}$ associated with $L'$, $k$ times to the left, to get the Square $L''_{B}$ associated with the Latin Square $L''.$\\

%Two Latin Squares $L^\prime$ which removes the singular fade state $(\gamma, \theta^{\prime})$, respectively, (i.e., two singular fade states on the same circle), are Isotopic that are obtainable one from another by a column permutation alone. If $\theta'-\theta = k\frac{2\pi}{M}$, $L'$ can be obtained by cyclic shifting of the columns of $L$, $k$ times in the anticlockwise direction.
\begin{proof}
For the singular fade state $\gamma e^{j\theta}$, let {\footnotesize $\left\lbrace\left((x_{A_1},x_{A_2}),(x_{B_1},x_{B_2})\right),\left((x'_{A_1},x'_{A_2}),(x'_{B_1},x'_{B_2})\right)\right\rbrace$} be a singularity removal constraint , i.e.,\\
 \begin{equation}
 \label{eqn_sing}
 \gamma e^{j\theta}=\frac{x'_{A_1}-x_{A_1}}{x_{B_1}-x'_{B_1}}=\frac{x'_{A_2}-x_{A_2}}{x_{B_2}-x'_{B_2}}.
 \end{equation}

From \eqref{eqn_sing}, it follows that 

{\footnotesize 
\begin{align}
\nonumber
&\left\lbrace\left((x_{A_1},x_{A_2}),(x_{B_1} e^{-\frac{jk2\pi}{2^\lambda}},x_{B_2} e^{-\frac{jk2\pi}{2^\lambda}})\right),\right.\\
\nonumber
&\left.\hspace{1.6 cm}\left((x'_{A_1},x'_{A_2})(x'_{B_1}e^{-\frac{jk2\pi}{2^\lambda}},x'_{B_2} e^{-\frac{jk2\pi}{2^\lambda}})\right)\right\rbrace
\end{align}
} is a singularity removal constraint for the singular fade state $(\gamma, \theta^{\prime})$, where $\theta'-\theta = k\frac{2\pi}{2^\lambda}.$ In other words, the rotation in the $\gamma e^{j\theta}$ plane by an angle $\theta$ can be viewed equivalently as a rotation of the constellations used by B during the MA phases by an angle $\theta^\prime - \theta$. Note that the columns of the Latin Square $L$ which removes the singular fade states $\gamma e^{j\theta}$ are indexed by the symbols $\left(x_{B_1},x_{B_2}\right)$ transmitted by B during the two MA phases. Rotating the signal set used by B during the second MA phase by an angle $\frac{2k\pi}{2^\lambda}$ is equivalent to shifting the columns of the Latin Sub-squares $L_{i,j}$ $k$ times to the left. Similarly, rotating the signal set used by B during the first MA phase by an angle $\frac{2k\pi}{2^{\lambda}}$ is equivalent to cyclic shifting the columns of the square $L_B$, $k$ times to the left. This completes the proof. 

%
%Let $L$ and $L^\prime$, respectively remove the singular fade states $(\gamma, \theta)$ and $(\gamma, \theta^\prime).$ 
%
%The effect of rotation in the $z-$plane by an angle $\theta^\prime - \theta$ due to channel fade coefficients $H_A$ and $H_B$ can be viewed equivalently as a relative rotation of the constellation used by B by an angle $\theta^\prime - \theta$ with respect to the constellation used by A and no relative rotation between the channel fade coefficients $H_A$ and $H_B.$ Let $S$ and $S^{\prime}$ be the resulting rotated constellations after rotation in the constellation of node B corresponding to an angle $\theta^\prime - \theta.$ 
%
%Since there are $M$ singular fade states for a specific $\gamma$ and they are all spaced by same angular separation,  $\theta^\prime - \theta$ is an integer multiple of $2\pi/M$ which is an  angular separation of the $M$-PSK constellation points. That is, a rotation in the channel by an angle $\theta^\prime - \theta$  is equivalent to a rotation in the constellation points in the $M$-PSK constellation. So, we can obtain the Latin Square L$^{\prime}$ by column permutations in L, since the columns are indexed by constellation points used by node B. This means that if we obtain the Latin Square for the removal of a singular fade state $(\gamma, \theta),$ then by appropriately shifting the columns we obtain the Latin Squares that remove all the other singular fade states of the form $(\gamma, \theta^\prime).$ This completes the proof.
\end{proof}
 \end{lemma}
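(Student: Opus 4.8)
The plan is to reduce the whole statement to one algebraic fact about how singularity removal constraints behave under a rotation of the channel phase, and then to check that the two prescribed cyclic shifts are exactly the Latin-Square incarnation of that rotation.

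First I would recall what it means for $L$ to remove $(\gamma,\theta)$: by the definition of a singularity removal constraint, for every pair of messages with
$$\gamma e^{j\theta}=\frac{x'_{A_1}-x_{A_1}}{x_{B_1}-x'_{B_1}}=\frac{x'_{A_2}-x_{A_2}}{x_{B_2}-x'_{B_2}},$$
the entries of $L$ in rows $(x_{A_1},x_{A_2}),(x'_{A_1},x'_{A_2})$ and columns $(x_{B_1},x_{B_2}),(x'_{B_1},x'_{B_2})$ must carry the same symbol. The key computation is that multiplying each of B's symbols by $e^{-jk2\pi/2^{\lambda}}$ scales the two common ratios by $e^{jk2\pi/2^{\lambda}}$, turning $\gamma e^{j\theta}$ into $\gamma e^{j(\theta+k2\pi/2^{\lambda})}=\gamma e^{j\theta'}$. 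I would state this as an equivalence, since the rotation is invertible. What legitimizes it is that a $2^{\lambda}$-PSK constellation is carried to itself by rotation through any integer multiple of $2\pi/2^{\lambda}$; writing symbols by their index in $\mathbb{Z}_{2^{\lambda}}$, this rotation is the shift $b\mapsto b-k \pmod{2^{\lambda}}$. Hence the singularity removal constraints for $(\gamma,\theta')$ are exactly the images of those for $(\gamma,\theta)$ under $(x_{B_1},x_{B_2})\mapsto(x_{B_1}-k,\,x_{B_2}-k)$, with A's symbols untouched.

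Next I would translate this column relabeling into the block picture. With the indexing used throughout the examples, a column label $(x_{B_1},x_{B_2})$ has first coordinate $x_{B_1}$ selecting the block-column and second coordinate $x_{B_2}$ selecting the column inside that block (and symmetrically for the rows). Therefore rotating B's second-channel symbol, $x_{B_2}\mapsto x_{B_2}-k$, acts only within each block and is precisely a left cyclic shift by $k$ of the columns of every sub-square $L_{i,j}$, which is the first prescribed operation and yields $L'$. Rotating B's first-channel symbol, $x_{B_1}\mapsto x_{B_1}-k$, permutes entire block-columns and is precisely a left cyclic shift by $k$ of the columns of the block array $L'_{B}$, which is the second operation and yields $L''_{B}$, hence $L''$. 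Since the two shifts act on independent coordinates of the column index they commute, and their composition is exactly the relabeling $(x_{B_1},x_{B_2})\mapsto(x_{B_1}-k,x_{B_2}-k)$ derived above; consequently every constraint for $(\gamma,\theta')$ is mapped into a single cluster of $L''$, so $L''$ removes $(\gamma,\theta')$.

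I would close by observing that permuting the columns of a Latin Square again yields a Latin Square---each symbol still appears once in every row, and since the columns are merely reordered, once in every column---so $L''$ automatically satisfies the exclusive law and is a valid clustering. The step I expect to require the most care is the bookkeeping of the previous paragraph: fixing, consistently with the stated row/column ordering, that the \emph{first} PSK coordinate indexes blocks while the \emph{second} indexes positions within a block, and matching the direction of ``$k$ shifts to the left'' with the sign of the constellation rotation $e^{-jk2\pi/2^{\lambda}}$. Aligning these ordering and sign conventions, rather than any nontrivial estimate, is the only real obstacle; once they are pinned down the two operations are seen to factor the single column permutation and the conclusion follows immediately.
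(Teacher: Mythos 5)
Your proposal is correct and follows essentially the same route as the paper's proof: reinterpret the fade-state rotation $\theta'-\theta=k\frac{2\pi}{2^{\lambda}}$ as a rotation of B's constellation (i.e., the index relabeling $(x_{B_1},x_{B_2})\mapsto(x_{B_1}-k,x_{B_2}-k)$), and observe that the second-coordinate shift is the within-block column shift of each $L_{i,j}$ while the first-coordinate shift is the column shift of the block array $L_B$. Your additional remarks — that the correspondence between constraint sets is a bijection and that column permutations preserve the Latin property — are small tightenings of the same argument rather than a different approach.
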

 
\begin{figure}
\centering
%\subfigure[]{
%\includegraphics[totalheight=2in,width=2in]{latin_8psk_1}
%\label{fig:latin_8psk_1}	
%}

\subfigure[The Latin Square $L$ that removes the singular fade state $(\gamma=1,\theta=0)$]{
\includegraphics[totalheight=2.8in,width=2.8in]{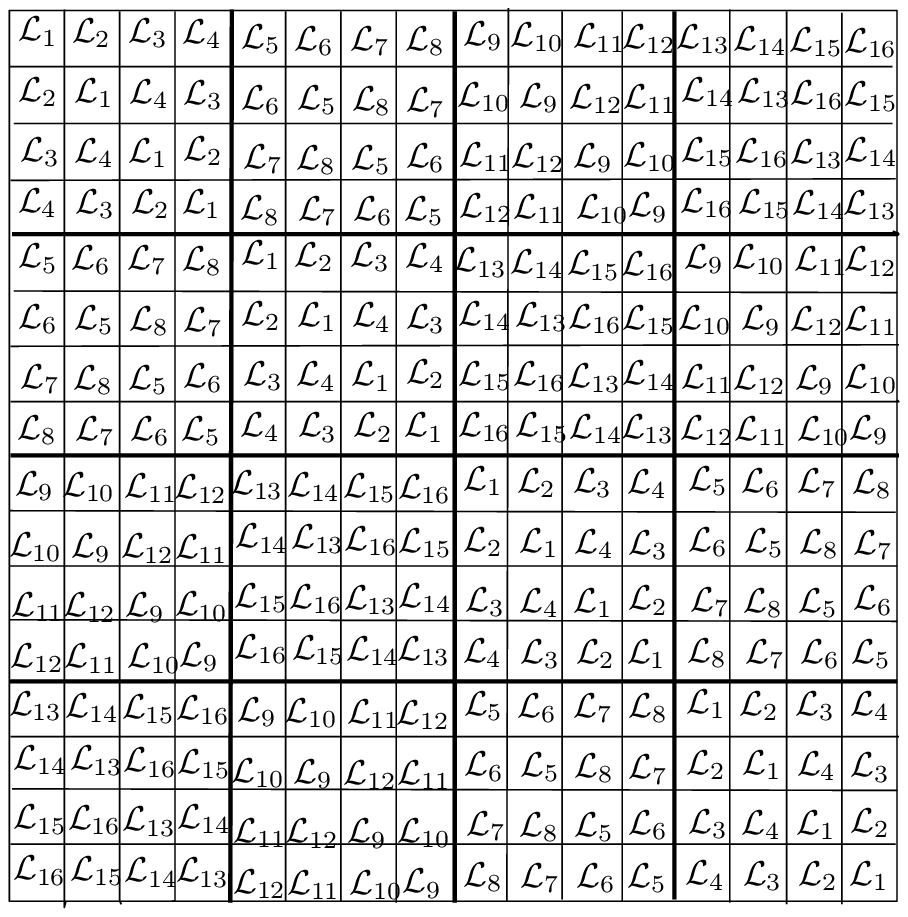}
\label{fig:ex1}	
}

%\subfigure[]{
%\includegraphics[totalheight=2in,width=2in]{latin_8psk_3}
%\label{fig:latin_8psk_3}	
%}

\subfigure[The Latin Square $L'$ obtained form the Latin Square $L$ using the procedure described in Lemma 4]{
\includegraphics[totalheight=2.8in,width=2.8in]{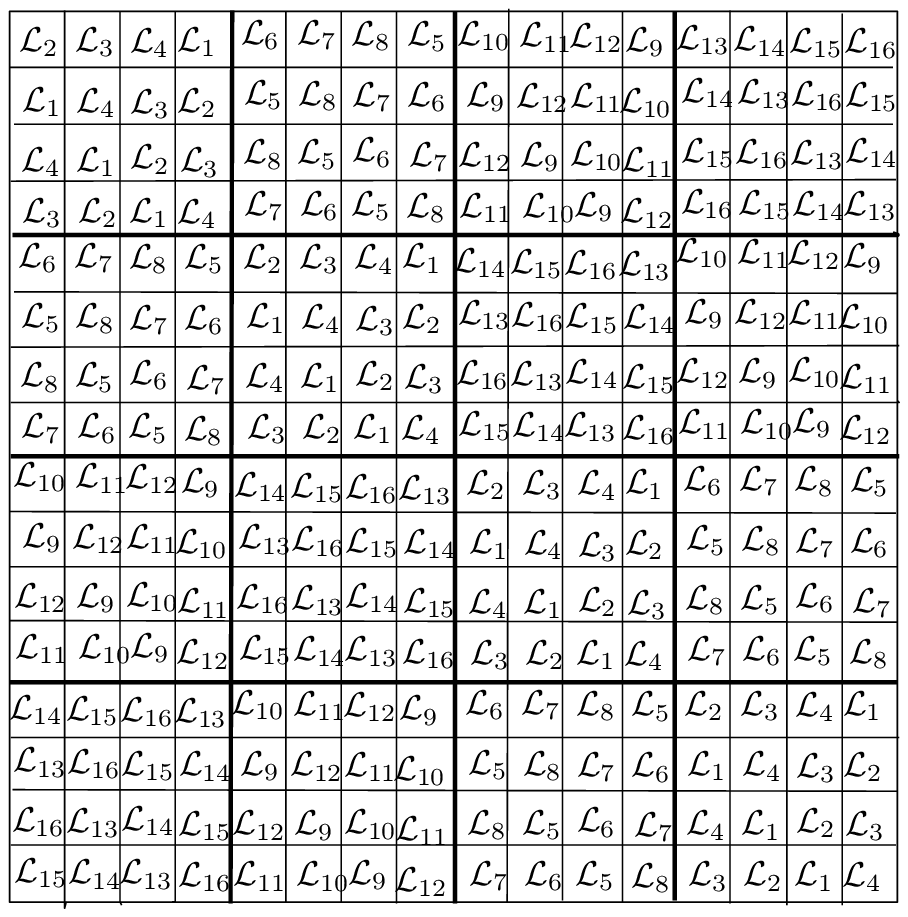}
\label{fig:ex2}	
}
\subfigure[The Latin Square $L''$ that removes the singular fade state $(\gamma=1,\theta=\frac{\pi}{2})$]{
\includegraphics[totalheight=2.8in,width=2.8in]{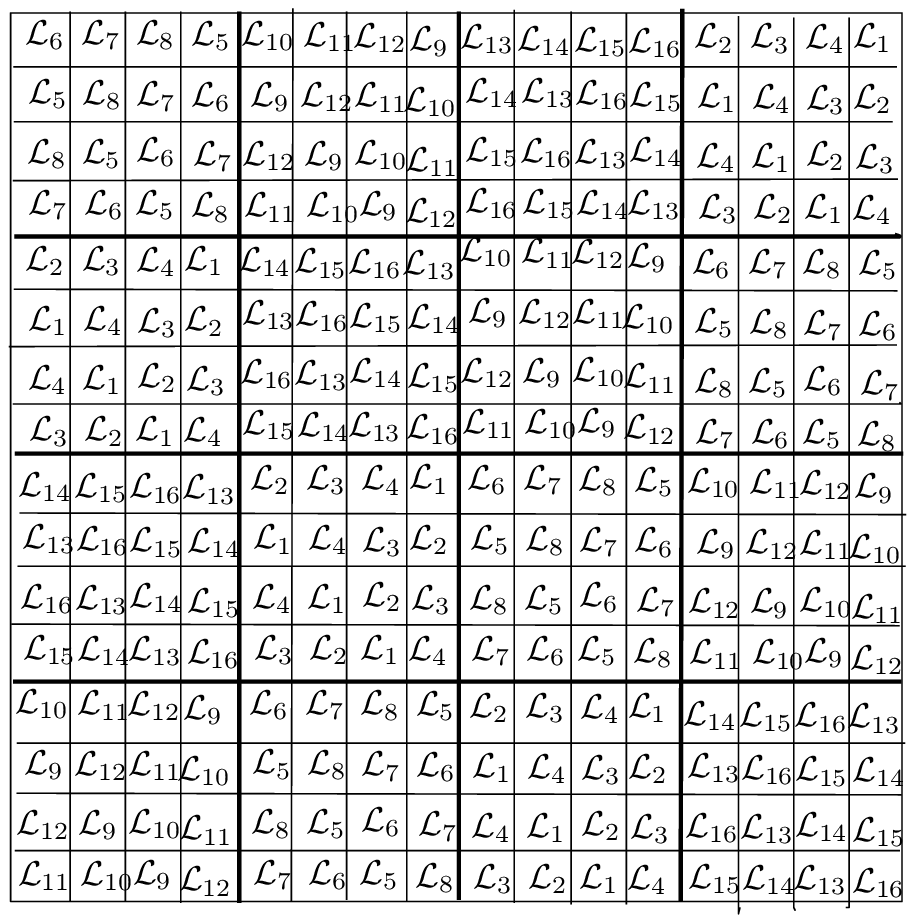}
\label{fig:ex3}	
}
\caption{Construction of the Latin Square $L''$ which removes $(\gamma=1,\theta=\pi/2)$ from the Latin Square $L$ which removes $(\gamma=1,\theta=0)$}
\label{fig:ex}
\end{figure}

 For example consider the Latin Square $L$ in Fig. \ref{fig:ex1} which removes the singular fade state $(\gamma=1,\theta=0).$ The Latin Square $L''$ which removes the singular fade state $(\gamma=1,\theta=\frac{\pi}{2})$ can be obtained from $L$ as follows: The columns of the Latin Sub-squares $L_{i,j}, 0 \leq i,j \leq 3$ are cyclically shifted once to the left, to obtain the Latin Square $L'$ shown in Fig. \ref{fig:ex2}. The columns of the Square $L'_{B}$ associated with the Latin Square $L'$ are cyclically shifted once to the left, to obtain the Square $L''_{B}$. The Latin Square $L''$ associated with the Square $L''_B$ which removes the singular fade state $(\gamma=1,\theta=\frac{\pi}{2})$ is shown in Fig. \ref{fig:ex3}.\\   
 
 For the ACF two-way relaying scenario with 4-PSK signal set used at the end nodes, the twelve singular fade states lie on three circles with radii $1,$  $\frac{1}{\sqrt{2}},$ and $\sqrt{2}.$ The Latin Squares which remove all the twelve singular fade states can be obtained from the three Latin Squares which remove the singular fade states $j,$ $0.5+0.5j$ and $1+j$ given in Fig. 3, Fig. 5 and Fig.7  respectively.

\section{Direct Clustering}

\begin{algorithm}[H]
\label{Alg}
\SetLine
\linesnumbered
\KwIn{The constrained $16 \times 16$ array}
\KwOut{A Latin Square representing the clustering map at the relay}

Start with the constrained $16 \times 16$ array

Initialize all empty cells of the array to 0

\For{$1\leq i\leq 16 $}{

\For{$1\leq j\leq 16 $}{

\If{cell $\left(i,j\right)$ of the array is NULL}{

Initialize c=1

\eIf{$\mathcal{L}_{c}$ does not occur in the $i^{th}$ row or the $j^{th}$ column of the array}{
replace 0 at cell $\left(i,j\right)$ of the array with $\mathcal{L}_{c}$\;
}{
c=c+1\; 
}
}
}
}

\caption{Obtaining the $16 \times 16$ Latin Square from the $16 \times 16 $ array constrained using Singularity Removal Constraints}

\end{algorithm}

Recall that there are three classes of singular fade states depending on the radius of the circle it lies on {\footnotesize\textit{(Case 1:)}} $\gamma e^{j \theta} $ lies on the unit circle, {\footnotesize\textit{(Case 2:)}} $\gamma e^{j \theta} $ lies on the circle of radius $ 1/\sqrt{2}$ and {\footnotesize\textit{(Case 3:)}} $\gamma e^{j \theta} $ lies on the circle of radius $ \sqrt{2}$. The number of clusters in the clustering utilized by relay node R during BC phase obtained using Cartesian Product in the three cases is $16$, $25$ and $25$ respectively. It is observed that, if instead of taking the Cartesian Product of the clusterings given in \cite{NMR}, the Cartesian Product of the \textit{Singularity Removal Constraints} corresponding to each fade state are used to fill a $16 \times 16$ array, and the resulting incomplete array so obtained is completed using Algorithm \ref{Alg}, so as to form a Latin Square of side 16, then the number of clusters of the resulting clustering corresponding the to this Latin Square can be reduced from 25 to a lesser number in both \textit{Case 2} and \textit{Case 3}. We call this the Direct Clustering. We now explain this clustering with the help of examples in the second and the third case only, since for the first case, the minimum number of clusters required, i.e., $16$, can be achieved using Cartesian Product Clustering as shown in Section III.\\\\

\noindent \textbf{\textit{Case 2:} $\gamma e^{j \theta} $ lies on the circle of radius $ 1/\sqrt{2}$.}\\

In this case, there are a total of $80$ singularity removal constraints as shown in the following lemma.\\

\begin{lemma} When $\gamma e^{j \theta} $ lies on the circle of radius $ 1/\sqrt{2}$, there are a total of $80$ singularity removal constraints.
\end{lemma}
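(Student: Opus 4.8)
The plan is to reduce the count to the single--channel (two--stage) coincidence structure, which is well understood on the $1/\sqrt{2}$ circle, and then to exploit the fact that the ACF indistinguishability relation is a product of two copies of it.

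First I would fix a representative singular fade state $h=\gamma e^{j\theta}$ on the circle of radius $1/\sqrt{2}$; by Lemma 1 it is also a singular fade state for the two--way 2--stage relaying, and since all four singular fade states on this circle differ only by phase rotations handled by the column--shift construction established earlier, the number obtained will not depend on the choice. For this $h$ I would tabulate the single--channel coincidence relation on $\mathcal{S}^{2}$, namely $(x_A,x_B)\sim(x'_A,x'_B)$ iff $x_A+h\,x_B=x'_A+h\,x'_B$. Evaluating $x_A+h\,x_B$ over the sixteen pairs shows that $\sim$ has exactly four classes of size two and eight singleton classes (for $h=0.5+0.5j$ the merging pairs are $\{(0,1),(1,2)\}$, $\{(0,3),(2,0)\}$, $\{(1,3),(3,0)\}$ and $\{(2,1),(3,2)\}$, which are precisely the forced cores of the five--cluster clustering $\mathcal{C}^{[0.5+0.5j]}$), giving twelve classes in all.

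Next I would argue that two ACF messages $\big((x_{A_1},x_{B_1}),(x_{A_2},x_{B_2})\big)$ and $\big((x'_{A_1},x'_{B_1}),(x'_{A_2},x'_{B_2})\big)$ are forced into one cluster to remove $h$ exactly when they are indistinguishable at R, i.e. when both $Y_{R_1}$ and $Y_{R_2}$ coincide; dividing the noiseless receptions by $H_A$, this is precisely $(x_{A_1},x_{B_1})\sim(x'_{A_1},x'_{B_1})$ together with $(x_{A_2},x_{B_2})\sim(x'_{A_2},x'_{B_2})$. Hence the forcing relation is the conjunction of two copies of $\sim$, is itself an equivalence relation, and its classes are exactly the Cartesian products $C_1\times C_2$ of single--channel $\sim$--classes. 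A Singularity Removal Constraint in the sense of Definition 7 is such a class of size at least two.

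Finally I would count: $C_1\times C_2$ has size at least two iff at least one factor is a size--two class, so the constraints split as $4\cdot 4=16$ of type (pair, pair), $4\cdot 8=32$ of type (pair, singleton) and $8\cdot 4=32$ of type (singleton, pair), totalling $16+32+32=80$, equivalently $12\cdot 12-8\cdot 8=80$. The main obstacle is the first step: establishing rigorously that the single--channel relation has exactly four two--element classes and eight singletons on the whole $1/\sqrt{2}$ circle. I expect to settle it by the explicit table for one representative together with the rotation argument, and I would also take care to read the forcing relation as the conjunction of the two single--channel coincidences, so that messages identical in one channel and merging in the other are still counted --- a case the ratio form of Definition 6 omits through division by zero but which the indistinguishability viewpoint restores.
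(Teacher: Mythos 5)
Your proposal is correct and takes essentially the same route as the paper: both arguments factor the ACF forcing relation into two copies of the single-channel coincidence relation on $\mathcal{S}^2$ (which on the $1/\sqrt{2}$ circle has exactly four two-element classes and eight singletons), identify the singularity removal constraints with the product classes of size at least two, and count them as $4\cdot 4 + 4\cdot 8 + 8\cdot 4 = 16+32+32 = 80$. Your equivalence-relation framing and the explicit remark about messages identical in one channel use (where the ratio form of Definition 6 degenerates to $0/0$) simply make rigorous what the paper handles implicitly through its decomposition into the sets $l_i \times l_j$, $l_i \times m_k$ and $m_k \times l_i$.
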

\begin{proof} Let the singularity removal constraints for the two-way ACF relaying be the set $\left\{ \mathcal{C}_{1},\mathcal{C}_{2}, ..., \mathcal{C}_{s}\right\}$. Let, for $t=1,2,...,s$,

{\footnotesize
\vspace{-0.4cm}
\begin{align}
\nonumber
&\mathcal{C}_{t}=\left\{((x_{1_{k}},y_{1_{k}}),(x_{2_{k}},y_{2_{k}})) ~ | ~ x_{1_{k}},y_{1_{k}},x_{2_{k}},y_{2_{k}} \in \mathcal{S},~k=1,2,...n_t \right\}.
\end{align}
\vspace{-0.4cm}
}

Then, for $1 \leq k_1, k_2 \leq n_t$,
\begin{align}
\nonumber
&x_{1_{k_{1}}}+ \gamma e^{j \theta} y_{1_{k_{1}}}=x_{1_{k_{2}}}+ \gamma e^{j \theta} y_{1_{k_{2}}} \text{~and}\\
\nonumber
&x_{2_{k_{1}}}+ \gamma e^{j \theta} y_{2_{k_{1}}}=x_{2_{k_{2}}}+ \gamma e^{j \theta} y_{2_{k_{2}}}.
\end{align}

Since in the case of two-way ACF relaying, the user nodes A and B transmit twice to the relay node R, these constraints for the ACF relaying can be obtained by taking the Cartesian Product of all sets of the form 
\begin{align}
\nonumber
&\left\{(x_{A_{l}},x_{B_{l}})~|~ x_{A_{l_{1}}}+\gamma e^{j \theta} x_{B_{l_{1}}}=x_{A_{l_{2}}}+\gamma e^{j \theta} x_{B_{l_{2}}} ~\forall l_1,l_2 \right. \\
\nonumber
& \left. ~~~~~~~~~~~~~~~~~~~~~~~~~~~~\text{~where~} x_{A_{l_{1}}},x_{B_{l_{1}}},x_{A_{l_{2}}},x_{B_{l_{2}}} \in \mathcal{S} \right\}.
\end{align}

These sets can be of two types:
\begin{enumerate}
\item The singularity removal constraints corresponding to the fade state $\gamma e^{j \theta} $ for the two-way 2-stage relaying as given in \cite{NMR}. Let us denote the 2-stage singularity removal constraints as:
\begin{align}
\nonumber
&l_{i}=\left\{(x_{i_{1}},y_{i_{1}}),(x'_{i_{1}},y'_{i_{1}})\right\} \text{~for~} i=1,2,3,4.
\end{align}

\item The sets of the form $\left\{(x_A,x_B),(x_A,x_B)\right\}$ for $(x_A,x_B)\in \mathcal{S}^2$ where $(x_A,x_B)\notin l_{i} ~ \forall ~ i=1,2,3,4$, since $ x_A + \gamma e^{j \theta} x_B= x_A + \gamma e^{j \theta} x_B$. The $(x_A,x_B)$ for which $(x_A,x_B) \in l_{i} ~ \text{for some} ~ i=1,2,3,4,5$ are not considered in this category, as it already occurs in some set of the above category.
\end{enumerate}

The Cartesian Products of these sets amongst themselves must be the singularity removal constraints for the ACF relaying. Now, the constraint sets so obtained are also of two types:
\begin{enumerate}
\item For $i,j\in \lbrace 1,2,3,4 \rbrace$,
{\footnotesize
\begin{align}
\nonumber
l_{i} \times l_{j}= & \left\{((x_{i_{1}},y_{i_{1}}),(x'_{j_{1}},y'_{j_{1}})),((x'_{i_{1}},y'_{i_{1}}),(x_{j_{1}},y_{j_{1}})), \right. \\ 
\nonumber
& \left.((x_{i_{1}},y_{i_{1}}),(x_{j_{1}},y_{j_{1}})),((x'_{i_{1}},y'_{i_{1}}),(x'_{j_{1}},y'_{j_{1}}))\right\},\\
\nonumber
\end{align}
}
These singularity removal constraints account for 16 of the total number of constraints. \\

\item For $i \in \lbrace 1,2,3,4 \rbrace$ and the four $m_k :=((x_{A_{k}},x_{B_{k}}),(x_{A_{k}},x_{B_{k}}))$ for $k \in \lbrace 1,2,...,8 \rbrace$ that satisfy $(x_{A_{k}},x_{B_{k}}) \notin l_{j} ~ \forall j \in \lbrace 1,2,3,4 \rbrace$;

{\footnotesize
\vspace{-0.4cm}
\begin{align}
\nonumber
l_{i} \times m_k= &\left\{((x_{i_{1}},y_{i_{1}}),(x_{A_{k}},x_{B_{k}})),((x'_{i_{1}},y'_{i_{1}}),(x_{A_{k}},x_{B_{k}})) \right\}\\
\nonumber
m_k \times l_{i}= & \left\{((x_{A_{k}},x_{B_{k}}),(x_{i_{1}},y_{i_{1}})),((x_{A_{k}},x_{B_{k}}),(x'_{i_{1}},y'_{i_{1}})) \right\}\\
\nonumber
\end{align}
}
These singularity removal constraints account for remaining 64 constraints. \\
\end{enumerate}

Thus, the set of singularity removal constraints for two-way ACF relaying becomes,

{\footnotesize
\begin{align}
\nonumber
\left\{l_{i} \times l_{j} ~|~ i,j=1,2,3,4\right\} &\cup \left\{l_i \times m_k ~|~ i=1,2,3,4, ~k=1,2,...,8\right\} \\
\nonumber
&\cup \left\{m_k \times l_i ~|~ i=1,2,3,4, ~k=1,2,...,8\right\},
\end{align}
}where the subset $\left\{l_{i} \times l_{j} ~|~ i,j=1,2,3,4\right\}$ contains 16 constraints, and the subsets $\left\{l_i \times m_k ~|~ i=1,2,3,4, ~k=1,2,...,8\right\}$ and $\left\{m_k \times l_i ~|~ i=1,2,3,4, ~k=1,2,...,8\right\}$ contain 32 constraints each, which amount to a total of 80 singularity removal constraints.\\
\end{proof}
\vspace{0.5cm}

The $16\times 16$ Latin Square representing these constraints can be completed using 20 symbols, as we show in the following example.\\

\begin{figure*}
{\scriptsize
\begin{tabular}{||c||l|l|l||}\hline
        & ~~~~Singularity Removal Constraints for $\gamma e^{j \theta}=-0.5+0.5j$ &~~~~~~~~ Latin Square Constraints for $\gamma e^{j \theta}=-0.5+0.5j$ & Cluster \\\hline \hline

(1) &  $\left\{((0,0),(1,3)), ((1,3),(0,0)), ((0,0),(0,0)), ((1,3),(1,3)) \right\} $ &  $\left\{((0,1),(0,3)), ((1,0),(3,0)), ((0,0),(0,0)), ((1,1),(3,3)) \right\} $ & $\mathcal{L}_{1}$  \\\hline
(2) &  $\left\{((1,1),(3,2)), ((3,2),(1,1)),((1,1),(1,1)),((3,2),(3,2))\right\} $   &  $\left\{((1,3),(1,2)), ((3,1),(2,1)),((1,1),(1,1)),((3,3),(2,2))\right\}$ & $\mathcal{L}_{2}$  \\\hline
(3) &  $\left\{((0,1),(2,2)), ((2,2),(0,1)),((0,1),(0,1)),((2,2),(2,2))\right\} $   &  $\left\{((0,2),(1,2)), ((2,0),(2,1)),((0,0),(1,1)),((2,2),(2,2))\right\} $ &  $\mathcal{L}_{3}$ \\\hline
(4) &  $\left\{((2,0),(3,3)), ((3,3),(2,0)),((2,0),(2,0)),((3,3),(3,3))\right\} $   &  $\left\{((2,3),(0,3)), ((3,2),(3,0)),((2,2),(0,0)),((3,3),(3,3))\right\} $ &  $\mathcal{L}_{4}$  \\\hline
(5) &  $\left\{((0,0),(0,1)), ((1,3),(2,2)),((1,3),(0,1)),((0,0),(2,2))\right\} $   &  $\left\{((0,0),(0,1)), ((1,2),(3,2)),((1,0),(3,1)),((0,2),(0,2))\right\} $ &   $\mathcal{L}_{3}$ \\\hline
(6) &  $\left\{((0,0),(1,1)), ((1,3),(3,2)), ((1,3),(1,1)), ((0,0),(3,2)) \right\} $ & $\left\{((0,0),(0,1)), ((1,3),(1,0)), ((1,3),(0,1)), ((0,0),(1,0)) \right\} $  &  $\mathcal{L}_{2}$  \\\hline
(7) &  $\left\{((0,0),(2,0)), ((1,3),(3,3)), ((1,3),(2,0)), ((0,0),(3,3)) \right\} $ & $\left\{((0,2),(0,0)), ((1,3),(3,3)), ((1,2),(3,0)), ((0,3),(0,3)) \right\}$ &  $\mathcal{L}_{5}$ \\\hline
(8) &  $\left\{((0,1),(0,0)), ((2,2),(1,3)), ((2,2),(0,0)), ((0,1),(1,3)) \right\} $  & $\left\{((0,0),(1,0)), ((2,1),(2,3)), ((2,0),(2,0)), ((0,1),(1,3)) \right\}$  &  $\mathcal{L}_{4}$ \\\hline
(9) &  $\left\{((0,1),(1,1)), ((2,2),(3,2)), ((2,2),(3,2)), ((0,1),(1,1)) \right\} $  & $\left\{((0,1),(1,1)), ((2,3),(2,2)), ((2,3),(2,2)), ((0,1),(1,1)) \right\}$  &  $\mathcal{L}_{6}$ \\\hline
(10)&  $\left\{((0,1),(2,0)), ((2,2),(3,3)), ((2,2),(3,3)), ((0,1),(2,0)) \right\} $  & $\left\{((0,2),(1,0)), ((2,3),(2,3)), ((2,3),(2,3)), ((0,2),(1,0)) \right\}$  &  $\mathcal{L}_{1}$ \\\hline
(11) &  $\left\{((1,1),(0,0)), ((3,2),(1,3)), ((3,2),(0,0)), ((1,1),(1,3)) \right\} $ & $\left\{((1,0),(1,0)), ((3,1),(2,3)), ((3,0),(2,0)), ((1,1),(1,3)) \right\}$ & $\mathcal{L}_{5}$  \\\hline
(12) &  $\left\{((1,1),(0,1)), ((3,2),(2,2)), ((3,2),(0,1)), ((1,1),(2,2)) \right\} $  &  $\left\{((1,0),(1,1)), ((3,2),(2,2)), ((3,0),(2,1)), ((1,2),(1,2)) \right\}$ &  $\mathcal{L}_{7}$ \\\hline
(13) &  $\left\{((1,1),(2,0)), ((3,2),(3,3)), ((3,2),(2,0)), ((1,1),(3,3)) \right\} $  &  $\left\{((1,2),(1,0)), ((3,3),(2,3)), ((3,2),(2,0)), ((1,3),(1,3)) \right\}$ & $\mathcal{L}_{6}$  \\\hline
(14) &  $\left\{((2,0),(0,0)), ((3,3),(1,3)), ((3,3),(0,0)), ((2,0),(1,3)) \right\} $  &  $\left\{((2,0),(0,0)), ((3,1),(3,3)), ((3,0),(3,0)), ((2,1),(0,3)) \right\}$ &  $\mathcal{L}_{8}$ \\\hline
(15) &  $\left\{((2,0),(0,1)), ((3,3),(2,2)), ((3,3),(0,1)), ((2,0),(2,2)) \right\} $  &  $\left\{((2,0),(0,1)), ((3,2),(3,2)), ((3,0),(3,1)), ((2,2),(0,2)) \right\}$ & $\mathcal{L}_{9}$  \\\hline
(16) &  $\left\{((2,0),(1,1)), ((3,3),(3,2)), ((3,3),(1,1)), ((2,0),(3,2)) \right\} $  & $\left\{((2,1),(0,1)), ((3,3),(3,2)), ((3,1),(3,1)), ((2,3),(0,2)) \right\}$ & $\mathcal{L}_{7}$ \\\hline
(17) &  $\left\{((0,0),(0,2)), ((1,3),(0,2)) \right\} $  & $\left\{((0,0),(0,2)), ((1,0),(3,2))\right\}$ & $\mathcal{L}_{6}$ \\\hline
(18) &  $\left\{((0,0),(0,3)), ((1,3),(0,3)) \right\} $  & $\left\{((0,0),(0,3)), ((1,0),(3,3)\right\}$ & $\mathcal{L}_{9}$ \\\hline
(19) &  $\left\{((0,0),(1,0)), ((1,3),(1,0)) \right\} $  & $\left\{((0,1),(0,0)), ((1,1),(3,0))\right\}$ & $\mathcal{L}_{7}$ \\\hline
(20) &  $\left\{((0,0),(1,2)), ((1,3),(1,2)) \right\} $  & $\left\{((0,1),(0,2)), ((1,1),(3,2))\right\}$ & $\mathcal{L}_{8}$ \\\hline
(21) &  $\left\{((0,0),(2,1)), ((1,3),(2,1)) \right\} $  & $\left\{((0,2),(0,1)), ((1,2),(3,1))\right\}$ & $\mathcal{L}_{4}$ \\\hline
(22) &  $\left\{((0,0),(2,3)), ((1,3),(2,3)) \right\} $  & $\left\{((0,2),(0,3)), ((1,2),(3,3))\right\}$ & $\mathcal{L}_{10}$  \\\hline
(23) &  $\left\{((0,0),(3,0)), ((1,3),(3,0)) \right\} $  & $\left\{((0,3),(0,0)), ((1,3),(3,0))\right\}$ & $\mathcal{L}_{9}$ \\\hline
(24) &  $\left\{((0,0),(3,1)), ((1,3),(3,1)) \right\} $  & $\left\{((0,3),(0,1)), ((1,3),(3,1))\right\}$ & $\mathcal{L}_{8}$ \\\hline
(25) &  $\left\{((0,2),(0,0)), ((0,2),(1,3)) \right\} $  & $\left\{((0,0),(2,0)), ((0,1),(2,3))\right\}$ & $\mathcal{L}_{10}$ \\\hline
(26) &  $\left\{((0,3),(0,0)), ((0,3),(1,3)) \right\} $  & $\left\{((0,0),(3,0)), ((0,1),(3,3))\right\}$ & $\mathcal{L}_{11}$ \\\hline
(27) &  $\left\{((1,0),(0,0)), ((1,0),(1,3)) \right\} $  & $\left\{((1,0),(0,0)), ((1,1),(0,3))\right\}$ & $\mathcal{L}_{11}$\\\hline
(28) &  $\left\{((1,2),(0,0)), ((1,2),(1,3)) \right\} $  & $\left\{((1,0),(2,0)), ((1,1),(2,3))\right\}$ & $\mathcal{L}_{12}$\\\hline
(29) &  $\left\{((2,1),(0,0)), ((2,1),(1,3)) \right\} $  & $\left\{((2,0),(1,0)), ((2,1),(1,3))\right\}$ & $\mathcal{L}_{3}$ \\\hline
(30) &  $\left\{((2,3),(0,0)), ((2,3),(1,3)) \right\} $  & $\left\{((2,0),(3,0)), ((2,1),(3,3))\right\}$ & $\mathcal{L}_{12}$\\\hline
(31) &  $\left\{((3,0),(0,0)), ((3,0),(1,3)) \right\} $  & $\left\{((3,0),(0,0)), ((3,1),(0,3))\right\}$ & $\mathcal{L}_{2}$ \\\hline
(32) &  $\left\{((3,1),(0,0)), ((3,1),(1,3)) \right\} $  & $\left\{((3,0),(1,0)), ((3,1),(1,3))\right\}$ & $\mathcal{L}_{10}$\\\hline
(33) &  $\left\{((0,1),(0,2)), ((2,2),(0,2)) \right\} $  & $\left\{((0,0),(1,2)), ((2,0),(2,2))\right\}$ & $\mathcal{L}_{5}$\\\hline
(34) &  $\left\{((0,1),(0,3)), ((2,2),(0,3)) \right\} $ & $\left\{((0,0),(1,3)), ((2,0),(2,3)) \right\}$ & $\mathcal{L}_{7}$\\\hline
(35) &  $\left\{((0,1),(1,0)), ((2,2),(1,0)) \right\} $ & $\left\{((0,1),(1,0)), ((2,1),(2,0)) \right\}$ & $\mathcal{L}_{9}$\\\hline
(36) &  $\left\{((0,1),(1,2)), ((2,2),(1,2)) \right\} $ & $\left\{((0,1),(1,2)), ((2,1),(2,2)) \right\}$ & $\mathcal{L}_{13}$\\\hline
(37) &  $\left\{((0,1),(2,1)), ((2,2),(2,1)) \right\} $ & $\left\{((0,2),(1,1)), ((2,2),(2,1)) \right\}$ & $\mathcal{L}_{8}$\\\hline
(38) &  $\left\{((0,1),(2,3)), ((2,2),(2,3)) \right\} $ & $\left\{((0,2),(1,3)), ((2,2),(2,3)) \right\}$ & $\mathcal{L}_{11}$\\\hline
(39) &  $\left\{((0,1),(3,0)), ((2,2),(3,0)) \right\} $ & $\left\{((0,3),(1,0)), ((2,3),(2,0)) \right\}$ & $\mathcal{L}_{11}$\\\hline
(40) &  $\left\{((0,1),(3,1)), ((2,2),(3,1)) \right\} $ & $\left\{((0,3),(1,1)), ((2,3),(2,1)) \right\}$ & $\mathcal{L}_{10}$\\\hline
(41) &  $\left\{((0,2),(0,1)), ((0,2),(2,2)) \right\} $ & $\left\{((0,0),(2,1)), ((0,2),(2,2)) \right\}$ & $\mathcal{L}_{12}$\\\hline
(42) &  $\left\{((0,3),(0,1)), ((0,3),(2,2)) \right\} $ & $\left\{((0,0),(3,1)), ((0,2),(3,2)) \right\}$ & $\mathcal{L}_{13}$\\\hline
(43) &  $\left\{((1,0),(0,1)), ((1,0),(2,2)) \right\} $ & $\left\{((1,0),(0,1)), ((1,2),(0,2)) \right\}$ & $\mathcal{L}_{13}$\\\hline
(44) &  $\left\{((1,2),(0,1)), ((1,2),(2,2)) \right\} $ & $\left\{((1,0),(2,1)), ((1,2),(2,2)) \right\}$ & $\mathcal{L}_{14}$\\\hline
(45) &  $\left\{((2,1),(0,1)), ((2,1),(2,2)) \right\} $ & $\left\{((2,0),(1,1)), ((2,2),(1,2)) \right\}$ & $\mathcal{L}_{14}$\\\hline
(46) &  $\left\{((2,3),(0,1)), ((2,3),(2,2)) \right\} $ & $\left\{((2,0),(3,1)), ((2,2),(3,2)) \right\}$ & $\mathcal{L}_{10}$\\\hline
(47) &  $\left\{((3,0),(0,1)), ((3,0),(2,2)) \right\} $ & $\left\{((3,0),(0,1)), ((3,2),(0,2)) \right\}$ & $\mathcal{L}_{1}$\\\hline
(48) &  $\left\{((3,1),(0,1)), ((3,1),(2,2)) \right\} $ & $\left\{((3,0),(1,1)), ((3,2),(1,2)) \right\}$ & $\mathcal{L}_{11}$\\\hline
(49) &  $\left\{((1,1),(0,2)), ((3,2),(0,2)) \right\} $ & $\left\{((1,0),(1,2)), ((3,0),(2,2)) \right\}$ & $\mathcal{L}_{4}$\\\hline
(50) &  $\left\{((1,1),(0,3)), ((3,2),(0,3)) \right\} $ & $\left\{((1,0),(1,3)), ((3,0),(2,3)) \right\}$ & $\mathcal{L}_{15}$\\\hline
(51) &  $\left\{((1,1),(1,0)), ((3,2),(1,0)) \right\} $ & $\left\{((1,1),(1,0)), ((3,1),(2,0)) \right\}$ & $\mathcal{L}_{13}$\\\hline
(52) &  $\left\{((1,1),(1,2)), ((3,2),(1,2)) \right\} $ & $\left\{((1,1),(1,2)), ((3,1),(2,2)) \right\}$ & $\mathcal{L}_{9}$\\\hline
(53) &  $\left\{((1,1),(2,1)), ((3,2),(2,1)) \right\} $ & $\left\{((1,2),(1,1)), ((3,2),(2,1)) \right\}$ & $\mathcal{L}_{15}$\\\hline
(54) &  $\left\{((1,1),(2,3)), ((3,2),(2,3)) \right\} $ & $\left\{((1,2),(1,3)), ((3,2),(2,3)) \right\}$ & $\mathcal{L}_{2}$\\\hline
(55) &  $\left\{((1,1),(3,0)), ((3,2),(3,0)) \right\} $ & $\left\{((1,3),(1,0)), ((3,3),(2,0)) \right\}$ & $\mathcal{L}_{14}$\\\hline
(56) &  $\left\{((1,1),(3,1)), ((3,2),(3,1)) \right\} $ & $\left\{((1,3),(1,1)), ((3,3),(2,1)) \right\}$ & $\mathcal{L}_{1}$\\\hline
(57) &  $\left\{((0,2),(1,1)), ((0,2),(3,2)) \right\} $ & $\left\{((0,1),(2,1)), ((0,3(,(2,2)) \right\}$ & $\mathcal{L}_{16}$\\\hline
(58) &  $\left\{((0,3),(1,1)), ((0,3),(3,2)) \right\} $ & $\left\{((0,1),(3,1)), ((0,3),(3,2)) \right\}$ & $\mathcal{L}_{12}$\\\hline
(59) &  $\left\{((1,0),(1,1)), ((1,0),(3,2)) \right\} $ & $\left\{((1,1),(0,1)), ((1,3),(0,2)) \right\}$ & $\mathcal{L}_{10}$\\\hline
(60) &  $\left\{((1,2),(1,1)), ((1,2),(3,2)) \right\} $ & $\left\{((1,1),(2,1)), ((1,3),(2,2)) \right\}$ & $\mathcal{L}_{17}$\\\hline
(61) &  $\left\{((2,1),(1,1)), ((2,1),(3,2)) \right\} $ & $\left\{((2,1),(1,1)), ((2,3),(1,2)) \right\}$ & $\mathcal{L}_{16}$\\\hline
(62) &  $\left\{((2,3),(1,1)), ((2,3),(3,2)) \right\} $ & $\left\{((2,1),(3,1)), ((2,3),(3,2)) \right\}$ & $\mathcal{L}_{5}$\\\hline
(63) &  $\left\{((3,0),(1,1)), ((3,0),(3,2)) \right\} $ & $\left\{((3,1),(0,1)), ((3,3),(0,2)) \right\}$ & $\mathcal{L}_{11}$\\\hline
(64) &  $\left\{((3,1),(1,1)), ((3,1),(3,2)) \right\} $ & $\left\{((3,1),(1,1)), ((3,3),(1,2)) \right\}$ & $\mathcal{L}_{12}$\\\hline
(65) &  $\left\{((2,0),(0,2)), ((3,3),(0,2)) \right\} $ & $\left\{((2,0),(0,2)), ((3,0),(3,2)) \right\}$ & $\mathcal{L}_{16}$\\\hline
(66) &  $\left\{((2,0),(0,3)), ((3,3),(0,3)) \right\} $ & $\left\{((2,0),(0,3)), ((3,0),(3,3)) \right\}$ & $\mathcal{L}_{6}$\\\hline
(67) &  $\left\{((2,0),(1,0)), ((3,3),(1,0)) \right\} $ & $\left\{((2,1),(0,0)), ((3,1),(3,0)) \right\}$ & $\mathcal{L}_{14}$\\\hline
(68) &  $\left\{((2,0),(1,2)), ((3,3),(1,2)) \right\} $ & $\left\{((2,1),(0,2)), ((3,1),(3,2)) \right\}$ & $\mathcal{L}_{15}$\\\hline
(69) &  $\left\{((2,0),(2,1)), ((3,3),(2,1)) \right\} $ & $\left\{((2,2),(0,1)), ((3,2),(3,1)) \right\}$ & $\mathcal{L}_{16}$\\\hline
%added now
(70) &  $\left\{((2,0),(2,3)), ((3,3),(2,3)) \right\} $ & $\left\{((2,2),(3,3)), ((3,2),(3,3)) \right\}$ & $\mathcal{L}_{3}$\\\hline
(71) &  $\left\{((2,0),(3,0)), ((3,3),(3,0)) \right\} $ & $\left\{((2,3),(0,0)), ((3,3),(3,0)) \right\}$ & $\mathcal{L}_{13}$\\\hline
(72) &  $\left\{((2,0),(3,1)), ((3,3),(3,1)) \right\} $ & $\left\{((2,3),(0,1)), ((3,3),(3,1)) \right\}$ & $\mathcal{L}_{15}$\\\hline
(73) &  $\left\{((0,2),(2,0)), ((0,2),(3,3)) \right\} $ & $\left\{((0,2),(2,0)), ((0,3),(2,3)) \right\}$ & $\mathcal{L}_{17}$\\\hline
(74) &  $\left\{((0,3),(2,0)), ((0,3),(3,3)) \right\} $ & $\left\{((0,2),(3,0)), ((0,3),(3,3)) \right\}$ & $\mathcal{L}_{15}$\\\hline
(75) &  $\left\{((1,0),(2,0)), ((1,0),(3,3)) \right\} $ & $\left\{((1,2),(0,0)), ((1,3),(0,3)) \right\}$ & $\mathcal{L}_{12}$\\\hline
(76) &  $\left\{((1,2),(2,0)), ((1,2),(3,3)) \right\} $ & $\left\{((1,2),(2,0)), ((1,3),(2,3)) \right\}$ & $\mathcal{L}_{16}$\\\hline
(77) &  $\left\{((2,1),(2,0)), ((2,1),(3,3)) \right\} $ & $\left\{((2,2),(1,0)), ((2,3),(1,3)) \right\}$ & $\mathcal{L}_{12}$\\\hline
(78) &  $\left\{((2,3),(2,0)), ((2,3),(3,3)) \right\} $ & $\left\{((2,2),(3,0)), ((2,3),(3,3)) \right\}$ & $\mathcal{L}_{17}$\\\hline
(79) &  $\left\{((3,0),(2,0)), ((3,0),(3,3)) \right\} $ & $\left\{((3,2),(0,0)), ((3,3),(0,3)) \right\}$ & $\mathcal{L}_{17}$\\\hline
(80) &  $\left\{((3,1),(2,0)), ((3,1),(3,3)) \right\} $ & $\left\{((3,2),(1,0)), ((3,3),(1,3)) \right\}$ & $\mathcal{L}_{8}$\\\hline
 %& {\tiny \textit{(continued on the next page)}} & & \\\hline

\end{tabular}
}
\label{case2}
\vspace{-.1 cm}
\caption{Singularity Removal Constraints Constraints for $\gamma e^{j \theta}=-0.5+0.5j$}
\end{figure*}

\begin{figure*}
{\footnotesize
{
\renewcommand{\arraystretch}{1,3}
\begin{tabular}{||c||c|c|c|c|c|c|c|c|c|c|c|c|c|c|c|c||}\hline
    &$(0,0)$&$(0,1)$&$(0,2)$&$(0,3)$&$(1,0)$&$(1,1)$&$(1,2)$&$(1,3)$&$(2,0)$&$(2,1)$&$(2,2)$&$(2,3)$&$(3,0)$&$(3,1)$&$(3,2)$&$(3,3)$ \\\hline \hline
 $(0,0)$ &  $\mathcal{L}_{1}$ & $\mathcal{L}_{3}$ & $\mathcal{L}_{6}$ & $\mathcal{L}_{9}$ & $\mathcal{L}_{4}$ & $\mathcal{L}_{2}$ & $\mathcal{L}_{5}$ & $\mathcal{L}_{7}$ & $\mathcal{L}_{10}$ & $\mathcal{L}_{12}$ & \pmb{$\mathcal{L}_{8}$} & \pmb{$\mathcal{L}_{14}$} & $\mathcal{L}_{11}$ & $\mathcal{L}_{13}$ & \pmb{$\mathcal{L}_{17}$} & \pmb{$\mathcal{L}_{18}$} \\\hline 
 $(0,1)$ & $\mathcal{L}_{7}$ &  $\mathcal{L}_{2}$ & $\mathcal{L}_{8}$ & $\mathcal{L}_{1}$ & $\mathcal{L}_{9}$ & $\mathcal{L}_{6}$ & $\mathcal{L}_{13}$ & $\mathcal{L}_{4}$ & \pmb{$\mathcal{L}_{3}$} & $\mathcal{L}_{16}$ & \pmb{$\mathcal{L}_{15}$} & $\mathcal{L}_{10}$ & \pmb{$\mathcal{L}_{18}$} & $\mathcal{L}_{12}$ & \pmb{$\mathcal{L}_{14}$} & $\mathcal{L}_{11}$ \\\hline
 $(0,2)$ & $\mathcal{L}_{5}$ & $\mathcal{L}_{4}$ &  $\mathcal{L}_{3}$ & $\mathcal{L}_{10}$ & $\mathcal{L}_{1}$ & $\mathcal{L}_{8}$ & $\mathcal{L}_{2}$ & $\mathcal{L}_{11}$ & $\mathcal{L}_{17}$ & \pmb{$\mathcal{L}_{9}$} & $\mathcal{L}_{12}$ & \pmb{$\mathcal{L}_{18}$} & $\mathcal{L}_{15}$ & \pmb{$\mathcal{L}_{6}$} & $\mathcal{L}_{13}$ & \pmb{$\mathcal{L}_{7}$} \\\hline
 $(0,3)$ & $\mathcal{L}_{9}$ & $\mathcal{L}_{8}$ & $\mathcal{L}_{2}$ &  $\mathcal{L}_{5}$ & $\mathcal{L}_{11}$ & $\mathcal{L}_{10}$ & $\mathcal{L}_{6}$ & $\mathcal{L}_{1}$ & \pmb{$\mathcal{L}_{7}$} & \pmb{$\mathcal{L}_{4}$} & $\mathcal{L}_{16}$ & $\mathcal{L}_{17}$ & \pmb{$\mathcal{L}_{3}$} & \pmb{$\mathcal{L}_{14}$} & $\mathcal{L}_{12}$ & $\mathcal{L}_{15}$ \\\hline
 $(1,0)$ & $\mathcal{L}_{11}$ & $\mathcal{L}_{13}$ & \pmb{$\mathcal{L}_{17}$} & \pmb{$\mathcal{L}_{16}$} &  $\mathcal{L}_{5}$ & $\mathcal{L}_{7}$ & $\mathcal{L}_{4}$ & $\mathcal{L}_{15}$ & $\mathcal{L}_{12}$ & $\mathcal{L}_{14}$ & \pmb{$\mathcal{L}_{10}$} & \pmb{$\mathcal{L}_{8}$} & $\mathcal{L}_{1}$ & $\mathcal{L}_{3}$ & $\mathcal{L}_{6}$ & $\mathcal{L}_{9}$ \\\hline
 $(1,1)$ & \pmb{$\mathcal{L}_{6}$} & $\mathcal{L}_{10}$ & \pmb{$\mathcal{L}_{4}$} & $\mathcal{L}_{11}$ & $\mathcal{L}_{13}$ &  $\mathcal{L}_{3}$ & $\mathcal{L}_{9}$ & $\mathcal{L}_{5}$ & \pmb{$\mathcal{L}_{15}$} & $\mathcal{L}_{17}$ & \pmb{$\mathcal{L}_{18}$} & $\mathcal{L}_{12}$ & $\mathcal{L}_{7}$ & $\mathcal{L}_{2}$ & $\mathcal{L}_{8}$ & $\mathcal{L}_{1}$ \\\hline
 $(1,2)$ & $\mathcal{L}_{12}$ & \pmb{$\mathcal{L}_{17}$} & $\mathcal{L}_{13}$ & \pmb{$\mathcal{L}_{18}$} & $\mathcal{L}_{6}$ & $\mathcal{L}_{15}$ &  $\mathcal{L}_{7}$ & $\mathcal{L}_{2}$ & $\mathcal{L}_{16}$ & \pmb{$\mathcal{L}_{11}$} & $\mathcal{L}_{14}$ & \pmb{$\mathcal{L}_{9}$} & $\mathcal{L}_{5}$ & $\mathcal{L}_{4}$ & $\mathcal{L}_{3}$ & $\mathcal{L}_{10}$ \\\hline
 $(1,3)$ & \pmb{$\mathcal{L}_{15}$} & \pmb{$\mathcal{L}_{18}$} & $\mathcal{L}_{10}$ & $\mathcal{L}_{12}$ & $\mathcal{L}_{14}$ & $\mathcal{L}_{1}$ & $\mathcal{L}_{3}$ &  $\mathcal{L}_{6}$ & \pmb{$\mathcal{L}_{19}$} & \pmb{$\mathcal{L}_{13}$} & $\mathcal{L}_{17}$ & $\mathcal{L}_{16}$ & $\mathcal{L}_{9}$ & $\mathcal{L}_{8}$ & $\mathcal{L}_{2}$ & $\mathcal{L}_{5}$ \\\hline
 $(2,0)$ & $\mathcal{L}_{8}$ & $\mathcal{L}_{9}$ & $\mathcal{L}_{16}$ & $\mathcal{L}_{6}$ & $\mathcal{L}_{3}$ & $\mathcal{L}_{14}$ & \pmb{$\mathcal{L}_{1}$} & \pmb{$\mathcal{L}_{13}$} &  $\mathcal{L}_{4}$ & $\mathcal{L}_{2}$ & $\mathcal{L}_{5}$ & $\mathcal{L}_{7}$ & $\mathcal{L}_{12}$ & $\mathcal{L}_{10}$ & \pmb{$\mathcal{L}_{11}$} & \pmb{$\mathcal{L}_{19}$} \\\hline
 $(2,1)$ & $\mathcal{L}_{14}$ & $\mathcal{L}_{7}$ & $\mathcal{L}_{15}$ & $\mathcal{L}_{8}$ & \pmb{$\mathcal{L}_{2}$} & $\mathcal{L}_{16}$ & \pmb{$\mathcal{L}_{10}$} & $\mathcal{L}_{3}$ & $\mathcal{L}_{9}$ &  $\mathcal{L}_{6}$ & $\mathcal{L}_{13}$ & $\mathcal{L}_{4}$ & \pmb{$\mathcal{L}_{19}$} & $\mathcal{L}_{5}$ & \pmb{$\mathcal{L}_{1}$} & $\mathcal{L}_{12}$ \\\hline
 $(2,2)$ & $\mathcal{L}_{4}$ & $\mathcal{L}_{16}$ & $\mathcal{L}_{9}$ & $\mathcal{L}_{3}$ & $\mathcal{L}_{12}$ & \pmb{$\mathcal{L}_{5}$} & $\mathcal{L}_{14}$ & \pmb{$\mathcal{L}_{18}$} & $\mathcal{L}_{1}$ & $\mathcal{L}_{8}$ &  $\mathcal{L}_{2}$ & $\mathcal{L}_{11}$ & $\mathcal{L}_{17}$ & \pmb{$\mathcal{L}_{19}$}& $\mathcal{L}_{10}$ & \pmb{$\mathcal{L}_{13}$} \\\hline
 $(2,3)$ & $\mathcal{L}_{13}$ & $\mathcal{L}_{15}$ & $\mathcal{L}_{7}$ & $\mathcal{L}_{4}$ & \pmb{$\mathcal{L}_{18}$} & \pmb{$\mathcal{L}_{9}$} & $\mathcal{L}_{16}$ & $\mathcal{L}_{12}$ & $\mathcal{L}_{11}$ & $\mathcal{L}_{10}$ & $\mathcal{L}_{6}$ &  $\mathcal{L}_{1}$ & \pmb{$\mathcal{L}_{2}$} & \pmb{$\mathcal{L}_{20}$} & $\mathcal{L}_{5}$ & $\mathcal{L}_{17}$ \\\hline
 $(3,0)$ & $\mathcal{L}_{2}$ & $\mathcal{L}_{1}$ & \pmb{$\mathcal{L}_{12}$} & \pmb{$\mathcal{L}_{13}$} & $\mathcal{L}_{10}$ & $\mathcal{L}_{11}$ & \pmb{$\mathcal{L}_{17}$} & \pmb{$\mathcal{L}_{14}$} & $\mathcal{L}_{5}$ & $\mathcal{L}_{7}$ & $\mathcal{L}_{4}$ & $\mathcal{L}_{15}$ &  $\mathcal{L}_{8}$ & $\mathcal{L}_{9}$ & $\mathcal{L}_{16}$ & $\mathcal{L}_{6}$ \\\hline
 $(3,1)$ & \pmb{$\mathcal{L}_{16}$} & $\mathcal{L}_{11}$ & \pmb{$\mathcal{L}_{18}$} & $\mathcal{L}_{2}$ & \pmb{$\mathcal{L}_{17}$} & $\mathcal{L}_{12}$ & \pmb{$\mathcal{L}_{19}$} & $\mathcal{L}_{10}$ & $\mathcal{L}_{13}$ & $\mathcal{L}_{3}$ & $\mathcal{L}_{9}$ & $\mathcal{L}_{5}$ & $\mathcal{L}_{14}$ &  $\mathcal{L}_{7}$ & $\mathcal{L}_{15}$ & $\mathcal{L}_{8}$ \\\hline
 $(3,2)$ & $\mathcal{L}_{17}$ & \pmb{$\mathcal{L}_{5}$} & $\mathcal{L}_{1}$ & \pmb{$\mathcal{L}_{14}$} & $\mathcal{L}_{8}$ & \pmb{$\mathcal{L}_{13}$} & $\mathcal{L}_{11}$ & \pmb{$\mathcal{L}_{19}$} & $\mathcal{L}_{6}$ & $\mathcal{L}_{5}$ & $\mathcal{L}_{7}$ & $\mathcal{L}_{2}$ & $\mathcal{L}_{4}$ & $\mathcal{L}_{16}$ &  $\mathcal{L}_{9}$ & $\mathcal{L}_{3}$ \\\hline
 $(3,3)$ & \pmb{$\mathcal{L}_{10}$} & \pmb{$\mathcal{L}_{19}$} & $\mathcal{L}_{11}$ & $\mathcal{L}_{17}$ & \pmb{$\mathcal{L}_{16}$} & \pmb{$\mathcal{L}_{18}$} & $\mathcal{L}_{12}$ & $\mathcal{L}_{8}$ & $\mathcal{L}_{14}$ & $\mathcal{L}_{1}$ & $\mathcal{L}_{3}$ & $\mathcal{L}_{6}$ & $\mathcal{L}_{13}$ & $\mathcal{L}_{15}$ & $\mathcal{L}_{7}$ &  $\mathcal{L}_{4}$ \\\hline
 
\end{tabular}
}
}
\label{table2}
\caption{Latin Square representing the clustering at the relay for the case $\gamma e^{j \theta}$ obtained using Direct Clustering, with the 4-PSK symbols that A(B) sent in the first and second channel use along the rows(columns)}
\end{figure*}

\begin{example} Consider the case for which $\gamma e^{j \theta}=-0.5+0.5j$. The singularity removal constraints for the case $\gamma e^{j \theta}=-0.5+0.5j$ in two-way 2-stage relaying as given in \cite{NMR} are:\\
{\small $\left\{(0,0),(1,3)\right\}, \left\{(1,1),(3,2)\right\}, \left\{(0,1),(2,2)\right\} \text{~and~} \left\{(2,0),(3,3)\right\}.$}\\

As a result, the singularity removal constraints for the two-way ACF relaying are as shown in Fig. 12 and Fig. 13. The clusters as shown in the third column of the table, are chosen such that each cluster satisfies the mutually exclusive laws given by (\ref{mel1}) and (\ref{mel2}). \\

The constraints can be represented using 17 symbols. In order to complete the latin square, we use Algorithm 1. A total of 20 symbols suffice in completing the array. Fig. 14 represents the Latin Square representing the clustering at the relay for the case $\gamma e^{j \theta}=-0.5+0.5j$, with the 4-PSK symbols that A(B) sent in the first and second channel use along the rows(columns).\\

\end{example}

\noindent \textbf{\textit{Case 3:} $\gamma e^{j \theta} $ lies on the circle of radius $\sqrt{2}$.}\\

In this case, there are a total of $80$ singularity removal constraints as stated in the following lemma:\\

\begin{lemma} When $\gamma e^{j \theta} $ lies on the circle of radius $ \sqrt{2}$, there are a total of $80$ singularity removal constraints.
\end{lemma}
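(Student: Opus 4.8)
The plan is to mirror the counting argument used in the proof of the preceding lemma for the radius $1/\sqrt{2}$ case, since the combinatorial skeleton is identical. By Lemma 1 the singular fade states for ACF relaying coincide with those for two-way 2-stage relaying, and, as established in the Cartesian-product construction, every singularity removal constraint for the ACF scenario is a Cartesian product of two sets, each constraining a single channel use. So the first step is to catalogue the single-channel-use constraints for a fade state $\gamma e^{j\theta}$ with $|\gamma e^{j\theta}| = \sqrt{2}$.

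First I would establish the two-stage structure for the magnitude-$\sqrt{2}$ case. The direct route is to compute the effective constellation $\{x_A + \gamma e^{j\theta} x_B : x_A, x_B \in \mathcal{S}\}$ and observe that exactly four of its sixteen entries are attained twice while the remaining eight are attained once; equivalently, the sixteen points of $\mathcal{S}^2$ split into four genuine pairs $l_1, l_2, l_3, l_4$, each $l_i = \{(x_{i_1}, y_{i_1}), (x'_{i_1}, y'_{i_1})\}$ satisfying $x_{i_1} + \gamma e^{j\theta} y_{i_1} = x'_{i_1} + \gamma e^{j\theta} y'_{i_1}$, together with eight singletons $m_k = \{(x_{A_k}, x_{B_k}), (x_{A_k}, x_{B_k})\}$, $k = 1, \dots, 8$, for which $(x_{A_k}, x_{B_k}) \notin l_i$ for every $i$. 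A cleaner route, avoiding computation, is to invoke the reciprocal relationship noted at the end of Example 3: a fade state on the circle of radius $\sqrt{2}$ is $\tfrac{1}{\gamma} e^{-j\theta}$ for a fade state $\gamma e^{j\theta}$ on the circle of radius $1/\sqrt{2}$, and passing between the two amounts to interchanging users A and B, that is, transposing the array. Transposition preserves the number and the cardinalities of the minimal constraints, so Case 3 inherits exactly the four-pairs-plus-eight-singletons structure already used in Case 2.

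With this in hand the count is immediate. Forming all Cartesian products of single-channel-use constraints, I would separate them into three families exactly as before: the products $l_i \times l_j$ for $i, j \in \{1,2,3,4\}$ give $16$ constraints; the products $l_i \times m_k$ for $i \in \{1,2,3,4\}$ and $k \in \{1, \dots, 8\}$ give $32$; and symmetrically the products $m_k \times l_i$ give a further $32$. The remaining products $m_k \times m_{k'}$ are trivial singletons $\{((x,y),(x',y'))\}$ that impose no grouping requirement and are therefore not counted. Summing gives $16 + 32 + 32 = 80$ singularity removal constraints.

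The main obstacle is the bookkeeping in the first step: confirming that the magnitude-$\sqrt{2}$ fade states genuinely produce four size-two constraints and eight singletons rather than some larger merged sets, and that the three product families are disjoint and exhaustive. The transpose argument disposes of the former cleanly; for the latter, disjointness holds because the $l_i$ together with the $m_k$ partition $\mathcal{S}^2$, so each ordered pair of single-use constraints falls into exactly one of the four product families, and only the $m_k \times m_{k'}$ family fails to yield a nontrivial constraint.
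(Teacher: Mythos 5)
Your proposal is correct and follows essentially the same route as the paper: the paper's own proof of this lemma is simply the statement that it is identical to the proof for the radius-$1/\sqrt{2}$ case, namely the decomposition of the two-stage constraints into four pairs $l_i$ plus eight singletons $m_k$, followed by the count $|\{l_i \times l_j\}| + |\{l_i \times m_k\}| + |\{m_k \times l_i\}| = 16 + 32 + 32 = 80$. Your added justification that the radius-$\sqrt{2}$ case inherits the four-pairs-plus-eight-singletons structure (via the transpose/reciprocal correspondence $\gamma e^{j\theta} \mapsto \tfrac{1}{\gamma}e^{-j\theta}$ noted at the end of Example 3, or by direct computation) makes explicit a step the paper leaves implicit, but does not change the approach.
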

\textit{We omit the proof of this lemma, as it is the same as that of Lemma 3.}

The resulting constrained $16 \times 16$ array can be uniquely completed with 20 symbols as shown in the following example:\\

\begin{example} Consider the case for which $\gamma e^{j \theta}=-1+j$. The singularity removal constraints for the case $\gamma e^{j \theta}=-1+j$ in two-way 2-stage relaying as given in \cite{NMR} are:\\
{\small $\left\{(0,0),(3,2)\right\}, \left\{(0,1),(3,3)\right\}, \left\{(1,1),(2,0)\right\} \text{~and~} \left\{(1,3),(2,2)\right\}.$}\\

As a result, the constraints representing the singularity removal constraints for the two-way ACF relaying can be represented by the bold letters in the $16 \times 16$ Latin Square. The constraints can be represented using 18 symbols. In order to complete the latin square using Algorithm 1, a total of 20 symbols suffice. Fig. 15 represents the clustering. The constraints representing the singularity removal constraints for this example are given in Appendix C.\\

\begin{figure*}
{\footnotesize
{
\renewcommand{\arraystretch}{1,3}
\begin{tabular}{||c||c|c|c|c|c|c|c|c|c|c|c|c|c|c|c|c||}\hline
    &$(0,0)$&$(0,1)$&$(0,2)$&$(0,3)$&$(1,0)$&$(1,1)$&$(1,2)$&$(1,3)$&$(2,0)$&$(2,1)$&$(2,2)$&$(2,3)$&$(3,0)$&$(3,1)$&$(3,2)$&$(3,3)$ \\\hline \hline
 $(0,0)$ &  $\mathcal{L}_{1}$ & $\mathcal{L}_{2}$ & $\mathcal{L}_{7}$ & $\mathcal{L}_{5}$ & $\mathcal{L}_{4}$ & $\mathcal{L}_{3}$ & $\mathcal{L}_{10}$ & $\mathcal{L}_{9}$ & $\mathcal{L}_{8}$ & $\mathcal{L}_{12}$ &  \pmb{$\mathcal{L}_{15} $} &  \pmb{$\mathcal{L}_{16} $} & $\mathcal{L}_{11}$ & $\mathcal{L}_{14}$ &  \pmb{$\mathcal{L}_{13} $} &  \pmb{$\mathcal{L}_{17} $} \\\hline 
 $(0,1)$ & $\mathcal{L}_{8}$ &  $\mathcal{L}_{3}$ & $\mathcal{L}_{9}$ & $\mathcal{L}_{4}$ & $\mathcal{L}_{10}$ & $\mathcal{L}_{1}$ & $\mathcal{L}_{7}$ & $\mathcal{L}_{2}$ &  \pmb{$\mathcal{L}_{6} $} & $\mathcal{L}_{13}$ &  \pmb{$\mathcal{L}_{11} $} & $\mathcal{L}_{14}$ &  \pmb{$\mathcal{L}_{15} $} & $\mathcal{L}_{12}$ &  \pmb{$\mathcal{L}_{18} $} & $\mathcal{L}_{16}$ \\\hline
 $(0,2)$ & $\mathcal{L}_{3}$ & $\mathcal{L}_{6}$ &  $\mathcal{L}_{4}$ & $\mathcal{L}_{9}$ & $\mathcal{L}_{1}$ & $\mathcal{L}_{10}$ & $\mathcal{L}_{2}$ & $\mathcal{L}_{5}$ & $\mathcal{L}_{13}$ &  \pmb{$\mathcal{L}_{18} $} & $\mathcal{L}_{14}$ &  \pmb{$\mathcal{L}_{7} $} & $\mathcal{L}_{12}$ &  \pmb{$\mathcal{L}_{19} $} & $\mathcal{L}_{16}$ &  \pmb{$\mathcal{L}_{20} $} \\\hline
 $(0,3)$ & $\mathcal{L}_{9}$ & $\mathcal{L}_{10}$ & $\mathcal{L}_{1}$ &  $\mathcal{L}_{2}$ & $\mathcal{L}_{13}$ & $\mathcal{L}_{6}$ & $\mathcal{L}_{4}$ & $\mathcal{L}_{3}$ &  \pmb{$\mathcal{L}_{15} $} &  \pmb{$\mathcal{L}_{5} $} & $\mathcal{L}_{8}$ & $\mathcal{L}_{12}$ &  \pmb{$\mathcal{L}_{17} $} &  \pmb{$\mathcal{L}_{18} $} & $\mathcal{L}_{11}$ & $\mathcal{L}_{14}$ \\\hline
 $(1,0)$ & $\mathcal{L}_{2}$ & $\mathcal{L}_{1}$ &  \pmb{$\mathcal{L}_{3} $} &  \pmb{$\mathcal{L}_{18} $} &  $\mathcal{L}_{6}$ & $\mathcal{L}_{7}$ & $\mathcal{L}_{13}$ & $\mathcal{L}_{12}$ & $\mathcal{L}_{10}$ & $\mathcal{L}_{11}$ &  \pmb{$\mathcal{L}_{16} $} &  \pmb{$\mathcal{L}_{19} $} & $\mathcal{L}_{5}$ & $\mathcal{L}_{8}$ & $\mathcal{L}_{17}$ & $\mathcal{L}_{15}$ \\\hline
 $(1,1)$ &  \pmb{$\mathcal{L}_{10} $} & $\mathcal{L}_{4}$ &  \pmb{$\mathcal{L}_{12} $} & $\mathcal{L}_{17}$ & $\mathcal{L}_{14}$ &  $\mathcal{L}_{5}$ & $\mathcal{L}_{15}$ & $\mathcal{L}_{8}$ & \pmb{ $\mathcal{L}_{2} $} & $\mathcal{L}_{16}$ &  \pmb{$\mathcal{L}_{18} $} & $\mathcal{L}_{13}$ & $\mathcal{L}_{11}$ & $\mathcal{L}_{7}$ & $\mathcal{L}_{3}$ & $\mathcal{L}_{6}$ \\\hline
 $(1,2)$ & $\mathcal{L}_{4}$ &  \pmb{$\mathcal{L}_{11} $} & $\mathcal{L}_{17}$ & \pmb{ $\mathcal{L}_{3} $} & $\mathcal{L}_{5}$ & $\mathcal{L}_{9}$ &  $\mathcal{L}_{8}$ & $\mathcal{L}_{10}$ & $\mathcal{L}_{16}$ &  \pmb{$\mathcal{L}_{14} $} & $\mathcal{L}_{13}$ &  \pmb{$\mathcal{L}_{17} $} & $\mathcal{L}_{7}$ & $\mathcal{L}_{15}$ & $\mathcal{L}_{6}$ & $\mathcal{L}_{1}$ \\\hline
 $(1,3)$ &  \pmb{$\mathcal{L}_{16} $} &  \pmb{$\mathcal{L}_{18} $} & $\mathcal{L}_{2}$ & $\mathcal{L}_{1}$ & $\mathcal{L}_{15}$ & $\mathcal{L}_{12}$ & $\mathcal{L}_{6}$ &  $\mathcal{L}_{7}$ &  \pmb{$\mathcal{L}_{4} $} &  \pmb{$\mathcal{L}_{9} $} & $\mathcal{L}_{10}$ & $\mathcal{L}_{11}$ & $\mathcal{L}_{14}$ & $\mathcal{L}_{17}$ & $\mathcal{L}_{5}$ & $\mathcal{L}_{8}$ \\\hline
 $(2,0)$ & $\mathcal{L}_{6}$ & $\mathcal{L}_{7}$ & $\mathcal{L}_{13}$ & $\mathcal{L}_{12}$ & $\mathcal{L}_{3}$ & $\mathcal{L}_{4}$ &  \pmb{$\mathcal{L}_{1} $} &  \pmb{$\mathcal{L}_{16} $} &  $\mathcal{L}_{5}$ & $\mathcal{L}_{8}$ & $\mathcal{L}_{17}$ & $\mathcal{L}_{15}$ & $\mathcal{L}_{9}$ & $\mathcal{L}_{11}$ &  \pmb{$\mathcal{L}_{14} $} &  \pmb{$\mathcal{L}_{10} $} \\\hline
 $(2,1)$ & $\mathcal{L}_{14}$ & $\mathcal{L}_{5}$ & $\mathcal{L}_{15}$ & $\mathcal{L}_{8}$ &  \pmb{$\mathcal{L}_{9} $} & $\mathcal{L}_{2}$ &  \pmb{$\mathcal{L}_{10} $} & $\mathcal{L}_{18}$ & $\mathcal{L}_{11}$ &  $\mathcal{L}_{7}$ & $\mathcal{L}_{3}$ & $\mathcal{L}_{6}$ &  \pmb{$\mathcal{L}_{19} $} & $\mathcal{L}_{16}$ &  \pmb{$\mathcal{L}_{1} $} & $\mathcal{L}_{12}$ \\\hline
 $(2,2)$ & $\mathcal{L}_{5}$ & $\mathcal{L}_{9}$ & $\mathcal{L}_{8}$ & $\mathcal{L}_{10}$ & $\mathcal{L}_{2}$ &  \pmb{$\mathcal{L}_{13} $} & $\mathcal{L}_{18}$ &  \pmb{$\mathcal{L}_{11} $} & $\mathcal{L}_{7}$ & $\mathcal{L}_{15}$ &  $\mathcal{L}_{6}$ & $\mathcal{L}_{1}$ & $\mathcal{L}_{16}$ &  \pmb{$\mathcal{L}_{4} $} & $\mathcal{L}_{12}$ &  \pmb{$\mathcal{L}_{19} $} \\\hline
 $(2,3)$ & $\mathcal{L}_{15}$ & $\mathcal{L}_{12}$ & $\mathcal{L}_{6}$ & $\mathcal{L}_{7}$ &  \pmb{$\mathcal{L}_{16} $} &  \pmb{$\mathcal{L}_{18} $} & $\mathcal{L}_{3}$ & $\mathcal{L}_{4}$ & $\mathcal{L}_{14}$ & $\mathcal{L}_{17}$ & $\mathcal{L}_{5}$ &  $\mathcal{L}_{8}$ &  \pmb{$\mathcal{L}_{2} $} &  \pmb{$\mathcal{L}_{13} $} & $\mathcal{L}_{9}$ & $\mathcal{L}_{11}$ \\\hline
 $(3,0)$ & $\mathcal{L}_{11}$ & $\mathcal{L}_{14}$ &  \pmb{$\mathcal{L}_{18} $} &  \pmb{$\mathcal{L}_{6} $} & $\mathcal{L}_{12}$ & $\mathcal{L}_{15}$ &  \pmb{$\mathcal{L}_{16} $} &  \pmb{$\mathcal{L}_{13} $} & $\mathcal{L}_{1}$ & $\mathcal{L}_{2}$ & $\mathcal{L}_{7}$ & $\mathcal{L}_{5}$ &  $\mathcal{L}_{4}$ & $\mathcal{L}_{3}$ & $\mathcal{L}_{10}$ & $\mathcal{L}_{9}$ \\\hline
 $(3,1)$ &  \pmb{$\mathcal{L}_{12} $} & $\mathcal{L}_{13}$ &  \pmb{$\mathcal{L}_{5} $} & $\mathcal{L}_{16}$ &  \pmb{$\mathcal{L}_{17} $} & $\mathcal{L}_{11}$ &  \pmb{$\mathcal{L}_{19} $} & $\mathcal{L}_{14}$ & $\mathcal{L}_{8}$ & $\mathcal{L}_{3}$ & $\mathcal{L}_{9}$ & $\mathcal{L}_{4}$ & $\mathcal{L}_{10}$ &  $\mathcal{L}_{1}$ & $\mathcal{L}_{7}$ & $\mathcal{L}_{2}$ \\\hline
 $(3,2)$ & $\mathcal{L}_{13}$ &  \pmb{$\mathcal{L}_{8} $} & $\mathcal{L}_{16}$ &  \pmb{$\mathcal{L}_{15} $} & $\mathcal{L}_{11}$ &  \pmb{$\mathcal{L}_{17} $} & $\mathcal{L}_{14}$ &  \pmb{$\mathcal{L}_{19} $} & $\mathcal{L}_{3}$ & $\mathcal{L}_{6}$ & $\mathcal{L}_{4}$ & $\mathcal{L}_{9}$ & $\mathcal{L}_{1}$ & $\mathcal{L}_{10}$ &  $\mathcal{L}_{2}$ & $\mathcal{L}_{4}$ \\\hline
 $(3,3)$ &  \pmb{$\mathcal{L}_{7} $} &  \pmb{$\mathcal{L}_{16} $} & $\mathcal{L}_{11}$ & $\mathcal{L}_{14}$ &  \pmb{$\mathcal{L}_{8} $} &  \pmb{$\mathcal{L}_{19} $} & $\mathcal{L}_{12}$ & $\mathcal{L}_{15}$ & $\mathcal{L}_{9}$ & $\mathcal{L}_{10}$ & $\mathcal{L}_{1}$ & $\mathcal{L}_{2}$ & $\mathcal{L}_{13}$ & $\mathcal{L}_{6}$ & $\mathcal{L}_{4}$ &  $\mathcal{L}_{3}$ \\\hline
 
\end{tabular}
}
}
\label{table3}
\caption{Latin Square representing the clustering at the relay for the case $\gamma e^{j \theta}=-1+j$, with the 4-PSK symbols that A(B) sent in the first and second channel use along the rows(columns)}
\end{figure*}

\end{example}

\section{QUANTIZATION OF THE COMPLEX FADE STATE PLANE} 
In practice, $\gamma e^{j\theta}$ can take any value in the complex plane (it takes a value equal to one of the singular fade states with zero probability). As explained in Section II, one of the Latin Squares obtained, which remove the singular fade states needs to be chosen, depending on the value of $\gamma e^{j\theta}$. For a $\gamma e^{j\theta}$ which is not a singular fade state, among all the Latin Squares which remove the singular fade states, the Latin Square $\mathcal{C}^{h}, h \in \mathcal{H}$ which has the maximum value of the minimum cluster distance at $\gamma e^{j\theta}$ is chosen. In other words, for a given $\gamma e^{j\theta} \notin \mathcal{H},$ the clustering is chosen to be the one which removes the singular fade state $h \in \mathcal{H}$ which maximizes the metric $d^{2}_{min}\left(\mathcal{C}^{h}, \gamma e^{j \theta}\right)$ given in \eqref{cl3}. In this way, the $\gamma e^{j\theta}-${plane} is quantized into $\vert \mathcal{H} \vert$ point set, depending on which one of the obtained Latin Squares is chosen.\\

 For $(x_A ,x_B) \neq (x'_A,x'_B) \in \mathcal{S}^2,$ let $\mathcal{D}(\gamma,\theta,x_{A},x_{B},x'_A,x'_B)$  be defined as,

{\vspace{-.3 cm}
\begin{align}
\label{decision_metric}
\mathcal{D}(\gamma,\theta,x_A,x_B,x'_A,x'_B)&=\vert (x_A-x'_A)+\gamma e^{j\theta} (x_B-x'_B)\vert.
\end{align}
}

In the following lemma, it is shown that for a given $\gamma e^{j \theta},$ choosing the clustering $\mathcal{C}^h  \in \mathcal{C}^\mathcal{H},$ where $h \in \mathcal{H},$ that maximizes $d^{2}_{min}\left(\mathcal{C}^{h}, \gamma e^{j \theta}\right)$ given in \eqref{cl3}, is the same as choosing the clustering $\mathcal{C}^{\left[-\frac{\mathsf{x_A}-\mathsf{x'_A}}{\mathsf{x_B}-\mathsf{x'_B}}\right]},$ where $(\mathsf{x_A} ,\mathsf{x_B}) \neq (\mathsf{x'_A},\mathsf{x'_B}) \in \mathcal{S}^2,$ that minimizes the simpler metric given in \eqref{decision_metric}.\\
\begin{lemma}
\label{lemma_criterion}
If the complex fade state $\gamma e^{j \theta}$ and  the clustering $\mathcal{C}^{\left[-\frac{\mathsf{x_A}-\mathsf{x'_A}}{\mathsf{x_B}-\mathsf{x'_B}}\right]} \in \mathcal{C}^{\mathcal{H}}$ are such that, {\footnotesize $$\arg\min_{(x_A,x_B) \neq (x'_A,x'_B) \in \mathcal{S}^2} \mathcal{D}(\gamma,\theta,x_A,x_B,x'_A,x'_B)=(\mathsf{x_A},\mathsf{x_B},\mathsf{x'_A},\mathsf{x'_B}),$$}then $\left[-\frac{\mathsf{x_A}-\mathsf{x'_A}}{\mathsf{x_B}-\mathsf{x'_B}}\right] \in \mathcal{H},$ maximizes the metric $d^{2}_{min}\left(\mathcal{C}^{h}, \gamma e^{j \theta}\right)$ given in \eqref{cl3}, among all $h \in \mathcal{H}.$
\begin{proof}
The squared minimum distance of the effective constellation at the relay $d_{min}(\gamma e^{j\theta})$ is given by \eqref{dist}.\\

\begin{figure*}
\scriptsize
\begin{align}
%\label{min_distance}
%d_{min}^2(\gamma e^{j\theta})&=\min_{((x_{A_1},x_{B_1}),(x_{A_2},x_{B_2})) \neq ((x'_{A_1},x'_{B_1}),(x'_{A_2},x'_{B_2}))} \left \lbrace \vert (x_{A_1}-x'_{A_1})+\gamma e^{j\theta} (x_{B_1}-x'_{B_1})\vert^2+\vert (x_{A_2}-x'_{A_2})+\gamma e^{j\theta} (x_{B_2}-x'_{B_2})\vert^2\right\rbrace\\
%\label{min_dist_bound}
%&\leq \min_{((x_{A_1},x_{B_1}),(x_{A_2},x_{B_2})) \neq ((x'_{A_1},x'_{B_1}),(x_{A_2},x_{B_2}))} \left \lbrace \vert (x_{A_1}-x'_{A_1})+\gamma e^{j\theta} (x_{B_1}-x'_{B_1})\vert^2+\vert (x_{A_2}-x_{A_2})+\gamma e^{j\theta} (x_{B_2}-x_{B_2})\vert^2\right\rbrace\\
\hline
\label{eqn_f1}
f_1(\gamma e^{j\theta})&=\min_{(x_{A_1},x_{B_1}) \neq (x'_{A_1},x'_{B_1}),(x_{A_2},x_{B_2}) = (x'_{A_2},x'_{B_2})} \left \lbrace \vert (x_{A_1}-x'_{A_1})+\gamma e^{j\theta} (x_{B_1}-x'_{B_1})\vert^2+\vert (x_{A_2}-x'_{A_2})+\gamma e^{j\theta} (x_{B_2}-x'_{B_2})\vert^2\right\rbrace\\
\hline
\label{eqn_f2}
f_2(\gamma e^{j\theta})&=\min_{(x_{A_1},x_{B_1}) = (x'_{A_1},x'_{B_1}),(x_{A_2},x_{B_2}) \neq (x'_{A_2},x'_{B_2})} \left \lbrace \vert (x_{A_1}-x'_{A_1})+\gamma e^{j\theta} (x_{B_1}-x'_{B_1})\vert^2+\vert (x_{A_2}-x'_{A_2})+\gamma e^{j\theta} (x_{B_2}-x'_{B_2})\vert^2\right\rbrace\\
\hline
\label{eqn_f3}
f_3(\gamma e^{j\theta})&=\min_{(x_{A_1},x_{B_1}) \neq (x'_{A_1},x'_{B_1}),(x_{A_2},x_{B_2}) \neq (x'_{A_2},x'_{B_2})} \left \lbrace \vert (x_{A_1}-x'_{A_1})+\gamma e^{j\theta} (x_{B_1}-x'_{B_1})\vert^2+\vert (x_{A_2}-x'_{A_2})+\gamma e^{j\theta} (x_{B_2}-x'_{B_2})\vert^2\right\rbrace
\end{align} 
\hrule
\end{figure*}
%The quantity $d_{min}^2(\gamma e^{j\theta})$ can be upper bounded as in \eqref{min_dist_bound}. The upper bound in \eqref{min_dist_bound} is the same as $$\min_{(x_A,x_B) \neq (x'_A,x'_B) \mathcal{S}^2}\vert (x_A-x'_B)+\gamma e^{j\theta} (x_B-x'_B)\vert ^2.$$
%
%Hence, we have,
%
%\begin{align}
%\label{min_dist_ub}
%d_{min}^2(\gamma e^{j\theta}) \leq \min_{(x_A,x_B) \neq (x'_A,x'_B) \in  \mathcal{S}^2}\vert (x_A-x'_B)+\gamma e^{j\theta} (x_B-x'_B)\vert ^2.
%\end{align}

Let $f_1$, $f_2$ and $f_3$ be functions of $\gamma e^{j\theta}$ defined as in \eqref{eqn_f1}-\eqref{eqn_f3} given in the next page. We have, $$d_{min}^2(\gamma e^{j\theta})=\min\lbrace f_1(\gamma e^{j\theta}),f_2(\gamma e^{j\theta}),f_3(\gamma e^{j\theta})\rbrace.$$ From \eqref{eqn_f1} and \eqref{eqn_f2}, it follows that 

{\footnotesize
\begin{equation}
\nonumber
f_1(\gamma e^{j\theta})=f_2(\gamma e^{j \theta})=\hspace{-0.5 cm}\min_{\: (x_A,x_B) \neq (x'_A,x'_B) \in  \mathcal{S}^2} \hspace{-0.5 cm}\vert (x_A-x'_B)+\gamma e^{j\theta} (x_B-x'_B)\vert ^2.
\end{equation}
}
From \eqref{eqn_f3}, it can be seen that,

\begin{equation}
\nonumber
f_3(\gamma e^{j\theta})\geq\min_{(x_A,x_B) \neq (x'_A,x'_B) \in  \mathcal{S}^2}\vert (x_A-x'_B)+\gamma e^{j\theta} (x_B-x'_B)\vert ^2.
\end{equation}
Hence, we have,
{\footnotesize
\begin{align}
\nonumber
d_{min}^2(\gamma e^{j\theta}) = \min_{(x_A,x_B) \neq (x'_A,x'_B) \in  \mathcal{S}^2}\vert (x_A-x'_B)+\gamma e^{j\theta} (x_B-x'_B)\vert ^2.
\end{align}
}
Since, { $$\arg\min_{x_A,x_B,x'_A,x'_B} \mathcal{D}(\gamma,\theta,x_A,x_B,x'_A,x'_B)=(\mathsf{x_A},\mathsf{x_B},\mathsf{x'_A},\mathsf{x'_B}),$$} we have,
$d_{min}^2(\gamma e^{j\theta}) = \vert (\mathsf{x}_A-\mathsf{x}'_B)+\gamma e^{j\theta} (\mathsf{x}_B-\mathsf{x}'_B)\vert ^2.$

For the clustering $\mathcal{C}^{\left[-\frac{\mathsf{x_A}-\mathsf{x'_A}}{\mathsf{x_B}-\mathsf{x'_B}}\right]}$ which removes the singular fade state $-\frac{\mathsf{x_A}-\mathsf{x'_A}}{\mathsf{x_B}-\mathsf{x'_B}}$, the minimum cluster distance is greater than $d_{min}(\gamma e^{j \theta})$, while for all other clusterings in the set $\mathcal{C}^\mathcal{H}$, it is equal to $d_{min}(\gamma e^{j\theta}).$ This completes the proof. 

\end{proof}
\end{lemma}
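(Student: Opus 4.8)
The plan is to reduce the four-dimensional minimization defining $d^2_{min}(\gamma e^{j\theta})$ in \eqref{dist} to the single-symbol decision metric $\mathcal{D}$ of \eqref{decision_metric}, and then to read off, from the Cartesian-product structure of the clusterings, which $h \in \mathcal{H}$ keeps the worst pair inside one cluster. The governing observation is that a singular fade state is exactly a ratio $-\frac{x_A-x'_A}{x_B-x'_B}$ at which a pair $(x_A,x_B),(x'_A,x'_B)$ collides, so the minimizer of $\mathcal{D}$ at $\gamma e^{j\theta}$ names the singular fade state $h^\star=-\frac{\mathsf{x_A}-\mathsf{x'_A}}{\mathsf{x_B}-\mathsf{x'_B}}$ at which the bottleneck pair would collide.

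First I would split the minimization in \eqref{dist} according to whether the first-slot pair and the second-slot pair each agree or differ, obtaining the three functions $f_1,f_2,f_3$. When the second (resp.\ first) slot pair agrees, its squared term vanishes, so $f_1$ and $f_2$ both collapse to $\min_{(x_A,x_B)\neq(x'_A,x'_B)}|(x_A-x'_A)+\gamma e^{j\theta}(x_B-x'_B)|^2$, i.e.\ the squared minimum of $\mathcal{D}$; call this value $\delta^2$. When both slots differ the two squared terms are chosen independently and each is at least $\delta^2$, so $f_3\ge\delta^2$ (indeed $f_3=2\delta^2$). Hence $d^2_{min}(\gamma e^{j\theta})=\min\{f_1,f_2,f_3\}=\delta^2$, attained by a pair differing in exactly one slot by $(\mathsf{x_A},\mathsf{x_B},\mathsf{x'_A},\mathsf{x'_B})$, the achiever of $\arg\min\mathcal{D}$. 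Since $\gamma e^{j\theta}\notin\mathcal{H}$ we have $\delta^2>0$, and this is the (essentially unique) bottleneck.

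Next I would compare cluster distances. For every $h\in\mathcal{H}$ the minimization in \eqref{cl3} ranges over a subset of all distinct pairs (those split by $\mathcal{M}^h$), so $d^2_{min}(\mathcal{C}^h,\gamma e^{j\theta})\ge d^2_{min}(\gamma e^{j\theta})=\delta^2$ automatically. It then remains to locate the bottleneck pair relative to each clustering. By the definition of $h^\star$, the slot components $(\mathsf{x_A},\mathsf{x_B})$ and $(\mathsf{x'_A},\mathsf{x'_B})$ collide at $h^\star$, so the $2$-stage clustering $\mathcal{C}^{[h^\star]}$ groups them together; since the other slot is identical, the Cartesian-product clustering $\mathcal{D}^{[h^\star]}$ (whose fade-removal property was established earlier in this section) places the whole four-tuple pair in a single cluster, excluding it from \eqref{cl3}, whence $d^2_{min}(\mathcal{C}^{[h^\star]},\gamma e^{j\theta})>\delta^2$. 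For any $h\neq h^\star$ these two slot components are not confused at $h$, hence lie in distinct clusters of $\mathcal{C}^{[h]}$ and, with the matching other slot, in distinct clusters of $\mathcal{D}^{[h]}$; the bottleneck pair therefore still appears in \eqref{cl3} and forces $d^2_{min}(\mathcal{C}^{[h]},\gamma e^{j\theta})=\delta^2$. Comparing the two cases identifies $h^\star$ as the maximizer.

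I expect the main obstacle to be the cluster-membership bookkeeping of the last paragraph: one must use the Cartesian-product construction to pass from ``$(\mathsf{x_A},\mathsf{x_B})$ and $(\mathsf{x'_A},\mathsf{x'_B})$ share a $2$-stage cluster'' to ``the four-tuples share a product cluster,'' and, for $h\neq h^\star$, justify that the $2$-stage clusterings of \cite{NMR} genuinely keep a pair colliding only at $h^\star$ in separate clusters. This leans on those clusterings being minimal, so that a pair is merged precisely when it collides at the removed fade state. A secondary point is ruling out ties, i.e.\ assuming $\gamma e^{j\theta}$ is not equidistant (in the $\mathcal{D}$ sense) to two singular fade states so that $\arg\min\mathcal{D}$ is unique up to the slot and order symmetries; this holds for all but a measure-zero set of channel realizations, consistent with $\gamma e^{j\theta}$ being drawn from a continuous distribution.
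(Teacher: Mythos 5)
Your reduction of $d^2_{min}(\gamma e^{j\theta})$ to the single-slot metric is exactly the paper's argument: the same split into $f_1,f_2,f_3$ according to which slot pair differs, the same collapse $f_1=f_2=\delta^2$ with $f_3\geq \delta^2$, and the conclusion $d^2_{min}(\gamma e^{j\theta})=\delta^2=\min\mathcal{D}^2$. The two proofs diverge only at the final dichotomy, which the paper simply asserts (``the minimum cluster distance is greater than $d_{min}$ for $\mathcal{C}^{[h^\star]}$, and equal to it for all others'') and which you try to derive from the Cartesian-product structure. Your argument for the $h^\star$ side is sound after one repair: the minimizer of $\mathcal{D}$ is never ``essentially unique,'' and this is not a measure-zero tie issue --- every pair with the same collision point $h^\star$ and the same value of $\vert x_B-x'_B\vert$ attains the same $\mathcal{D}$, so there are always several bottleneck pairs. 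This does no harm there: by Definition 4, removal of $h^\star$ forces \emph{all} pairs colliding at $h^\star$ into common clusters, the product construction lifts this to all $\delta^2$-attaining four-tuple pairs, and since only finitely many distances occur, the minimum over the surviving (separated) pairs is strictly larger than $\delta^2$.

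The genuine gap is the direction $h\neq h^\star$. Your claim that the slot components of a bottleneck pair ``lie in distinct clusters of $\mathcal{C}^{[h]}$,'' justified by minimality of the clusterings of \cite{NMR} (``a pair is merged precisely when it collides at the removed fade state''), is false. For the fade states of radius $1/\sqrt{2}$ and $\sqrt{2}$ the singularity removal constraints are only four two-element sets, yet the clusterings have clusters of sizes $3,3,3,3,4$ and hence merge $18$ pairs; by the exclusive law each of the $14$ extra merged pairs differs in both coordinates and therefore collides at some \emph{other} singular fade state. Concretely, $l_5=\left\{(0,0),(1,1),(2,2),(3,3)\right\}$ in $\mathcal{C}^{[0.5+0.5j]}$ contains the pair $(0,0),(1,1)$, which collides at the unit-circle singular fade state $-1$ (for any labelling $\mu$, since A and B use the same map). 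So when $h^\star=-1$, one of the bottleneck pairs \emph{is} merged by $\mathcal{C}^{[0.5+0.5j]}$, and your argument would wrongly conclude that this clustering also achieves a cluster distance exceeding $\delta^2$. What rescues the lemma is precisely the multiplicity you assumed away: among the several minimal-$\vert x_B-x'_B\vert$ pairs colliding at $h^\star$, at least one is split by every $\mathcal{C}^{[h]}$ with $h\neq h^\star$ (e.g., for $h^\star=-1$ the clustering $\mathcal{C}^{[0.5+0.5j]}$ separates the non-diagonal colliding pairs such as $(3,0),(2,1)$), and that one pair pins $d^2_{min}\left(\mathcal{C}^{[h]},\gamma e^{j\theta}\right)$ at $\delta^2$. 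Establishing this ``at least one minimal-weight colliding pair is separated'' property is the real content of the last step: it follows either from a finite check on the twelve clusterings, or from the fact that a clustering obeying the exclusive law cannot simultaneously contain the full collision classes of two distinct singular fade states (doing so would place two entries of the same row or column of the Latin Square in one cluster). Neither you nor the paper actually proves it --- the paper is silent, and your proposed justification is the one statement in your write-up that is demonstrably not true.
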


The decision criterion in Lemma \ref{lemma_criterion} based on which R chooses one of the Latin Squares obtained, is the same as the decision criterion for the two-way 2-stage relaying in \cite{MNR}. Hence, the quantization of the complex fade state plane for the ACF relaying is same as that of the two-way 2-stage relaying obtained in \cite{MNR}.\\

\section{SIMULATION RESULTS}

\begin{figure*}[htbp]
\centering
\includegraphics[totalheight=5.5in,width=7in]{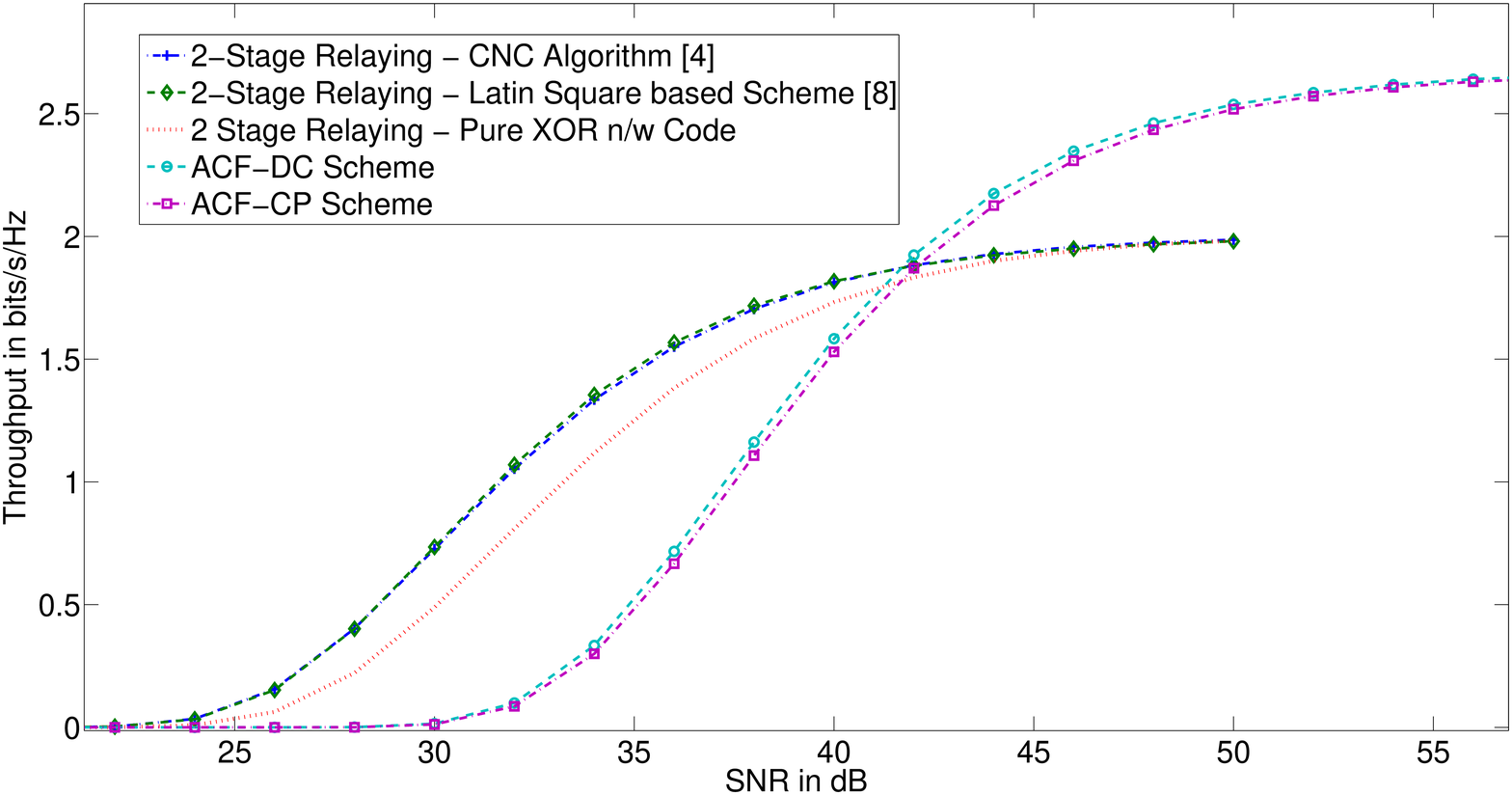}
\caption{SNR vs throughput curves for different schemes for 4-PSK signal set}	
\label{fig:tput_curves_4psk}	
\end{figure*}

\begin{figure*}[htbp]
\centering
\includegraphics[totalheight=5.5in,width=7in]{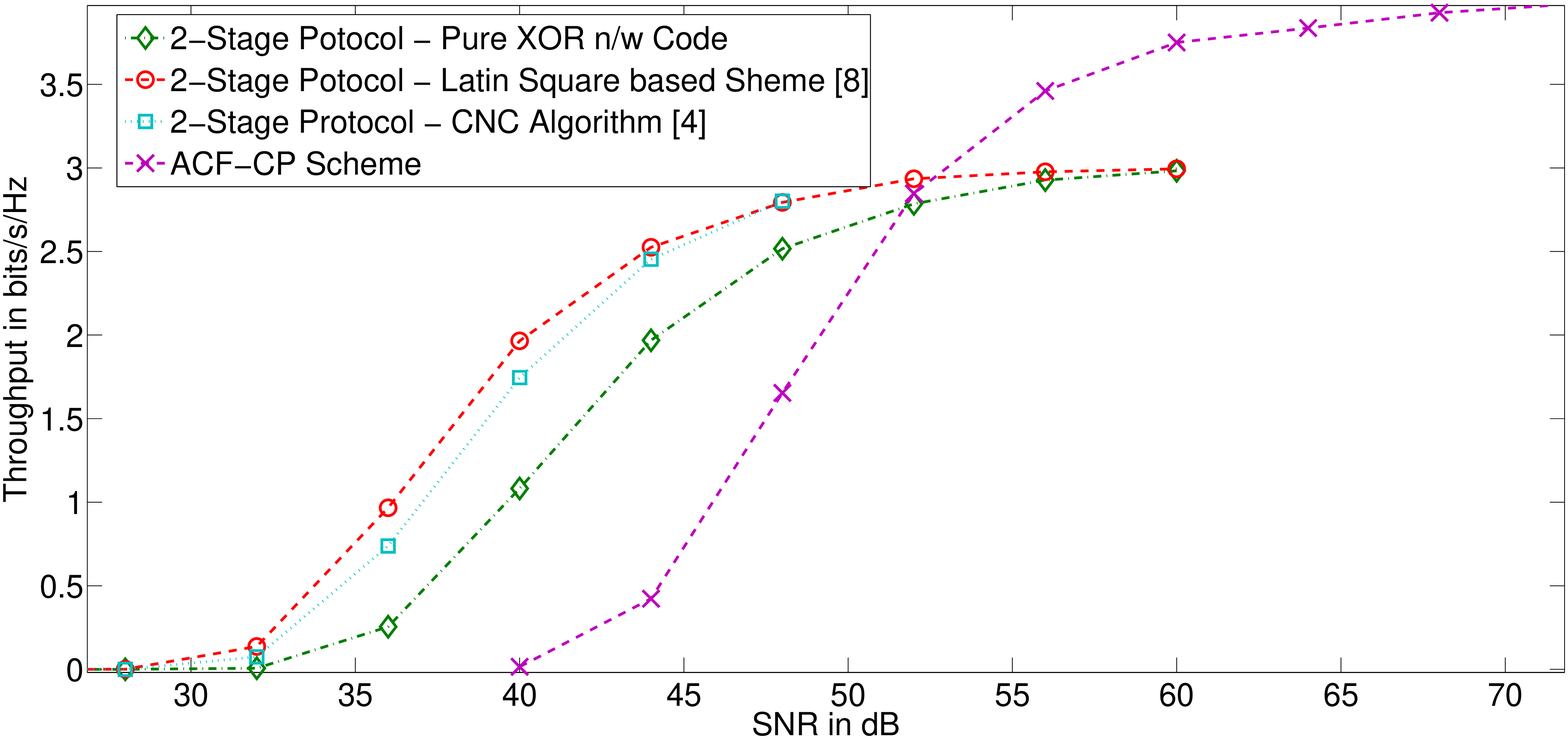}
\caption{SNR vs throughput curves for different schemes for 8-PSK signal set}	
\label{fig:tput_curves_8psk}	
\end{figure*}

The simulation results presented are for the case when $H_A$, $H_B$, $H'_A$ and $H'_B$ are distributed according to Rayleigh distribution, with the variances of all the fading links equal to 0 dB. It is assumed that the AWGN noises at the three nodes are of variance 0 dB. By SNR, we mean the average energies of the signal set used at the three nodes A, B and R, which are assumed to be equal. The frame length of a transmission is taken to be 256 bits.\\
 
 Consider the case when 4-PSK signal set is used at A and B. Fig. \ref{fig:tput_curves_4psk} shows the SNR vs end-to-end sum throughput curves for the following schemes: Closest-Neighbour Clustering (CNC) Algorithm based scheme for the two-way 2-stage relaying proposed in \cite{KoPoTa}, the Scheme based on Latin Squares for two-way 2-stage relaying proposed in \cite{NMR}, the scheme in which XOR network code is used irrespective of the channel condition, the Cartesian Product based scheme for ACF (ACF-CP) relaying and the Direct Clustering based scheme for ACF (ACF-DC) relaying. It can be seen from Fig. \ref{fig:tput_curves_4psk} that the schemes based on the ACF relaying perform better than the schemes based on 2-stage relaying at high SNR. From Fig. \ref{fig:tput_curves_4psk}, it follows that when the SNR is greater than 42 dB, the ACF-DC scheme outperforms all other schemes. The maximum throughput achieved by the ACF relaying schemes is 8/3 bits/s/Hz, whereas it is 2 bits/s/Hz for the 2-stage two-way relaying schemes. Also, as seen from Fig. \ref{fig:tput_curves_4psk}, the ACF-DC scheme performs better than the ACF-CP scheme. The reason for this is that the maximum cardinality of the signal set used during the BC phase is 25 for the ACF-CP scheme whereas it is 20 for the ACF-DC scheme.\\

Consider the case when 8-PSK signal set is used at A and B. It was shown in \cite{NMR} that for the two-way 2-stage relaying scheme with 8-PSK signal set, all the clusterings which the remove the singular fade states have exactly 8 clusters. Hence, the ACF-CP scheme, in which the clusterings are obtained by taking the Cartesian Product of the clusterings corresponding to the two-way 2-stage relaying scheme, have exactly 64 clusters (note that 64 is the minimum number of clusters required for conveying 6 information bits). Since the Cartesian Product itself results in the minimum number of clusters, the ACF-DC scheme is not considered for this case. Fig. \ref{fig:tput_curves_8psk} shows the SNR vs end-to-end sum throughput curves for the different scheme. Similar to the 4-PSK case, at high SNR, the ACF-CP scheme provides a larger throughput than the 2-stage relaying schemes. The maximum throughput achieved by the ACF-CP scheme is 4 bits/s/Hz, whereas it is 3 bits/s/Hz for the 2-stage relaying schemes.\\

%\begin{figure}[htbp]
%\centering
%\includegraphics[totalheight=3in,width=4in]{tput_8psk.eps}
%\caption{SNR vs throughput curves for different schemes for 8-PSK signal set}	
%\label{4psk_diff}	
%\end{figure}
%The following lemma gives the criterion based on which the choice of the Latin Square to be used at R is made.
%\begin{lemma}
%\label{lemma_criterion}
%If $\gamma e^{j \theta}$ is such that {\footnotesize $\arg \min_{(d_k,d_l)} \mathcal{D}(\gamma,\theta,d_k,d_l)=(d_k',d_l')$}, then the clustering $\mathcal{C}^{\left\lbrace-d_k'/d_l'\right\rbrace}$ maximizes the minimum cluster distance, among all the clusterings which belong to the set $\mathcal{C}_{\mathcal{H}}$.
%\begin{proof}
%Since $\arg \min_{(d_k,d_l)} \mathcal{D}(\gamma,\theta)=(d_k',d_l')$, the minimum distance, $d_{min}(\gamma e^{j \theta})=\vert d_k' +  \gamma e^{j \theta}d_l'\vert$. Since $\mathcal{C}^{\left\lbrace-d_k'/d_l'\right\rbrace}$ removes the singular fade state $-d_k'/d_l'$, $d_{min}({\mathcal{C}^{\lbrace-d_k'/d_l'\rbrace}},\gamma ,{ \theta}) \geq d_{min}(\gamma e^{j \theta})$. The result follows from the fact that for all other clusterings which belong to the set $\mathcal{C}_{\mathcal{H}}$ which do not remove the singular fade state $-d_k'/d_l'$, the minimum cluster distance is equal to $d_{min}(\gamma e^{j \theta})$.
%\end{proof}
%\end{lemma}

\section{Conclusion}
We proposed a scheme based on the ACF protocol for two-way relaying that utilizes totally three channel uses of the wireless two-way relaying channel unlike the 2-stage protocol that uses four channel uses, assuming that the users A and B transmit points from the same 4-PSK constellation. The network codes used at the relay during the Broadcast Phase were obtained using two methods: by taking the Cartesian Product of the clusterings proposed in \cite{NMR} for two user 2-stage case and by completing the Latin Square filled partially with the singularity removal constraints for a given fade state. Using the second method called Direct Clustering, the maximum size of the resulting constellation used by the relay node R in the BC phase was reduced to 20, as compared to the Cartesian Product based approach which results in the constellation size being 25 for these cases. Having obtained all the Latin Squares, the complex plane was quantized depending on which one of the obtained Latin Squares maximizes the minimum cluster distance. This quantization was shown to be the same as that achieved in \cite{MNR} for the two-way 2-stage relaying scenario. Simulation results showed that the ACF protocol based schemes outperform the schemes proposed in \cite{KoPoTa} and \cite{NMR}, at high SNR.\\\\

\begin{center}
\textsc{Acknowledgments}
\end{center}
This work was supported partly by the DRDO-IISc program on Advanced Research in Mathematical Engineering through a research grant as well as the INAE Chair Professorship grant to B. S. Rajan.

\appendices

\section{}
\noindent \textit{Clustering that removes the singular fade state $0.5+0.5j$:}\\
The Cartesian Product of the clustering $\mathcal{C}^{\left[0.5+0.5j\right]}$ with itself denoted by $\mathcal{D}^{\left[0.5+0.5j\right]}=\left\{ \mathcal{C}^{\left\{l_{i},l_{j}\right\}}~ |~ i,j=1,2,3,4,5\right\}$ contains exactly $25$ clusters. The clusters in $\mathcal{D}^{\left[0.5+0.5j\right]}$ have been listed below. 

{\footnotesize
\begin{align}
\nonumber
\vspace{-0.5cm}
\mathcal{C}^{\left\{l_{1},l_{1}\right\}}=&\left\{((0,1),(0,1)), ((0,1),(1,2)), ((0,1),(2,3)), \right.\\
\nonumber 
& \left.  ((1,2),(0,1)), ((1,2),(1,2)), ((1,2),(2,3)), \right.\\
\nonumber 
& \left.  ((2,3),(0,1)), ((2,3),(1,2)), ((2,3),(2,3))\right\}\\
\nonumber
\mathcal{C}^{\left\{l_{1},l_{2}\right\}}=&\left\{((0,1),(0,2)), ((0,1),(1,3)), ((0,1),(3,0)), \right.\\
\nonumber 
& \left.  ((1,2),(0,2)), ((1,2),(1,3)), ((1,2),(3,0)), \right.\\
\nonumber 
& \left.  ((2,3),(0,2)), ((2,3),(1,3)), ((2,3),(3,0))\right\}\\
\nonumber
\mathcal{C}^{\left\{l_{1},l_{3}\right\}}=&\left\{((0,1),(0,3)), ((0,1),(2,0)), ((0,1),(3,1)), \right.\\
\nonumber 
& \left.  ((1,2),(0,3)), ((1,2),(2,0)), ((1,2),(3,1)), \right.\\
\nonumber 
& \left.  ((2,3),(0,3)), ((2,3),(2,0)), ((2,3),(3,1))\right\}\\
\nonumber
\mathcal{C}^{\left\{l_{1},l_{4}\right\}}=&\left\{((0,1),(1,0)), ((0,1),(2,1)), ((0,1),(3,2)), \right.\\
\nonumber 
& \left.  ((1,2),(1,0)), ((1,2),(2,1)), ((1,2),(3,2)), \right.\\
\nonumber 
& \left.  ((2,3),(1,0)), ((2,3),(2,1)), ((2,3),(3,2))\right\}\\
\nonumber
\mathcal{C}^{\left\{l_{1},l_{5}\right\}}=&\left\{((0,1),(0,0)), ((0,1),(1,1)), ((0,1),(2,2)), ((0,1),(3,3)),\right.\\
\nonumber 
& \left.  ((1,2),(0,0)), ((1,2),(1,1)), ((1,2),(2,2)), ((1,2),(3,3)), \right.\\
\nonumber 
& \left.  ((2,3),(0,0)), ((2,3),(1,1)), ((2,3),(2,2)), ((2,3),(3,3))\right\}\\
\nonumber
\mathcal{C}^{\left\{l_{2},l_{1}\right\}}=&\left\{((0,2),(0,1)), ((0,2),(1,2)), ((0,2),(2,3)), \right.\\
\nonumber 
& \left.  ((1,3),(0,1)), ((1,3),(1,2)), ((1,3),(2,3)), \right.\\
\nonumber 
& \left.  ((3,0),(0,1)), ((3,0),(1,2)), ((3,0),(2,3))\right\}\\
\nonumber
\mathcal{C}^{\left\{l_{2},l_{2}\right\}}=&\left\{((0,2),(0,2)), ((0,2),(1,3)), ((0,2),(3,0)), \right.\\
\nonumber 
& \left.  ((1,3),(0,2)), ((1,3),(1,3)), ((1,3),(3,0)), \right.\\
\nonumber 
& \left.  ((3,0),(0,2)), ((3,0),(1,3)), ((3,0),(3,0))\right\}
\end{align}
\begin{align}
\nonumber
\mathcal{C}^{\left\{l_{2},l_{3}\right\}}=&\left\{((0,2),(0,3)), ((0,2),(2,0)), ((0,2),(3,1)), \right.\\
\nonumber 
& \left.  ((1,3),(0,3)), ((1,3),(2,0)), ((1,3),(3,1)), \right.\\
\nonumber 
& \left.  ((3,0),(0,3)), ((3,0),(2,0)), ((3,0),(3,1))\right\}\\
\nonumber
\mathcal{C}^{\left\{l_{2},l_{4}\right\}}=&\left\{((0,2),(1,0)), ((0,2),(2,1)), ((0,2),(3,2)), \right.\\
\nonumber 
& \left.  ((1,3),(1,0)), ((1,3),(2,1)), ((1,3),(3,2)), \right.\\
\nonumber 
& \left.  ((3,0),(1,0)), ((3,0),(2,1)), ((3,0),(3,2))\right\}\\
\nonumber
\mathcal{C}^{\left\{l_{2},l_{5}\right\}}=&\left\{((0,2),(0,0)), ((0,2),(1,1)), ((0,2),(2,2)), ((0,2),(3,3)), \right.\\
\nonumber 
& \left.  ((1,3),(0,0)), ((1,3),(1,1)), ((1,3),(2,2)), ((1,3),(3,3)), \right.\\
\nonumber 
& \left.  ((3,0),(0,0)), ((3,0),(1,1)), ((3,0),(2,2)), ((3,0),(3,3))\right\}\\
\nonumber
\mathcal{C}^{\left\{l_{3},l_{1}\right\}}=&\left\{((0,3),(0,1)), ((0,3),(1,2)), ((0,3),(2,3)), \right.\\
\nonumber 
& \left.  ((2,0),(0,1)), ((2,0),(1,2)), ((2,0),(2,3)), \right.\\
\nonumber 
& \left.  ((3,1),(0,1)), ((3,1),(1,2)), ((3,1),(2,3))\right\}\\
\nonumber
\mathcal{C}^{\left\{l_{3},l_{2}\right\}}=&\left\{((0,3),(0,2)), ((0,3),(1,3)), ((0,3),(3,0)), \right.\\
\nonumber 
& \left.  ((2,0),(0,2)), ((2,0),(1,3)), ((2,0),(3,0)), \right.\\
\nonumber 
& \left.  ((3,1),(0,2)), ((3,1),(1,3)), ((3,1),(3,0))\right\}\\
\nonumber
\mathcal{C}^{\left\{l_{3},l_{3}\right\}}=&\left\{((0,3),(0,3)), ((0,3),(2,0)), ((0,3),(3,1)), \right.\\
\nonumber 
& \left.  ((2,0),(0,3)), ((2,0),(2,0)), ((2,0),(3,1)), \right.\\
\nonumber 
& \left.  ((3,1),(0,3)), ((3,1),(2,0)), ((3,1),(3,1))\right\}\\
\nonumber
\mathcal{C}^{\left\{l_{3},l_{4}\right\}}=&\left\{((0,3),(1,0)), ((0,3),(2,1)), ((0,3),(3,2)), \right.\\
\nonumber 
& \left.  ((2,0),(1,0)), ((2,0),(2,1)), ((2,0),(3,2)), \right.\\
\nonumber 
& \left.  ((3,1),(1,0)), ((3,1),(2,1)), ((3,1),(3,2))\right\}\\
\nonumber
\mathcal{C}^{\left\{l_{3},l_{5}\right\}}=&\left\{((0,3),(0,0)), ((0,3),(1,1)), ((0,3),(2,2)), ((0,3),(3,3)), \right.\\
\nonumber 
& \left.  ((2,0),(0,0)), ((2,0),(1,1)), ((2,0),(2,2)), ((2,0),(3,3)), \right.\\
\nonumber 
& \left.  ((3,1),(0,0)), ((3,1),(1,1)), ((3,1),(2,2)), ((3,1),(3,3))\right\}\\
\nonumber
\mathcal{C}^{\left\{l_{4},l_{1}\right\}}=&\left\{((1,0),(0,1)), ((1,0),(1,2)), ((1,0),(2,3)), \right.\\
\nonumber 
& \left.  ((2,1),(0,1)), ((2,1),(1,2)), ((2,1),(2,3)), \right.\\
\nonumber 
& \left.  ((3,2),(0,1)), ((3,2),(1,2)), ((3,2),(2,3))\right\}\\
\nonumber
\mathcal{C}^{\left\{l_{4},l_{2}\right\}}=&\left\{((1,0),(0,2)), ((1,0),(1,3)), ((1,0),(3,0)), \right.\\
\nonumber 
& \left.  ((2,1),(0,2)), ((2,1),(1,3)), ((2,1),(3,0)), \right.\\
\nonumber 
& \left.  ((3,2),(0,2)), ((3,2),(1,3)), ((3,2),(3,0))\right\}\\
\nonumber
\mathcal{C}^{\left\{l_{4},l_{3}\right\}}=&\left\{((1,0),(0,3)), ((1,0),(2,0)), ((1,0),(3,1)), \right.\\
\nonumber 
& \left.  ((2,1),(0,3)), ((2,1),(2,0)), ((2,1),(3,1)), \right.\\
\nonumber 
& \left.  ((3,2),(0,3)), ((3,2),(2,0)), ((3,2),(3,1))\right\}\\
\nonumber
\mathcal{C}^{\left\{l_{4},l_{4}\right\}}=&\left\{((1,0),(1,0)), ((1,0),(2,1)), ((1,0),(3,2)), \right.\\
\nonumber 
& \left.  ((2,1),(1,0)), ((2,1),(2,1)), ((2,1),(3,2)), \right.\\
\nonumber 
& \left.  ((3,2),(1,0)), ((3,2),(2,1)), ((3,2),(3,2))\right\}\\
\nonumber
\mathcal{C}^{\left\{l_{4},l_{5}\right\}}=&\left\{((1,0),(0,0)), ((1,0),(1,1)), ((1,0),(2,2)), ((1,0),(3,3)), \right.\\
\nonumber 
& \left.  ((2,1),(0,0)), ((2,1),(1,1)), ((2,1),(2,2)), ((2,1),(3,3)), \right.\\
\nonumber 
& \left.  ((3,2),(0,0)), ((3,2),(1,1)), ((3,2),(2,2)), ((3,2),(3,3))\right\}\\
\nonumber
\mathcal{C}^{\left\{l_{5},l_{1}\right\}}=&\left\{((0,0),(0,1)), ((0,0),(1,2)), ((0,0),(2,3)), ((1,1),(0,1)), \right.\\
\nonumber 
& \left.  ((1,1),(1,2)), ((1,1),(2,3)), ((2,2),(0,1)), ((2,2),(1,2)), \right.\\
\nonumber 
& \left.  ((2,2),(2,3)), ((3,3),(0,1)), ((3,3),(1,2)), ((3,3),(2,3))\right\}\\
\nonumber
\mathcal{C}^{\left\{l_{5},l_{2}\right\}}=&\left\{((0,0),(0,2)), ((0,0),(1,3)), ((0,0),(3,0)), ((1,1),(0,2)), \right.\\
\nonumber 
& \left.  ((1,1),(1,3)), ((1,1),(3,0)), ((2,2),(0,2)), ((2,2),(1,3)), \right.\\
\nonumber 
& \left.  ((2,2),(3,0)), ((3,3),(0,2)), ((3,3),(1,3)), ((3,3),(3,0))\right\}\\
\nonumber
\mathcal{C}^{\left\{l_{5},l_{3}\right\}}=&\left\{((0,0),(0,3)), ((0,0),(2,0)), ((0,0),(3,1)), ((1,1),(0,3)), \right.\\
\nonumber 
& \left.  ((1,1),(2,0)), ((1,1),(3,1)), ((2,2),(0,3)), ((2,2),(2,0)), \right.\\
\nonumber 
& \left.  ((2,2),(3,1)), ((3,3),(0,3)), ((3,3),(2,0)), ((3,3),(3,1))\right\}\\
\nonumber
\mathcal{C}^{\left\{l_{5},l_{4}\right\}}=&\left\{((0,0),(1,0)), ((0,0),2,1)), ((0,0),(3,2)), ((1,1),(1,0)), \right.\\
\nonumber 
& \left.  ((1,1),(2,1)), ((1,1),(3,2)), ((2,2),(1,0)), ((2,2),(2,1)), \right.\\
\nonumber 
& \left.  ((2,2),(3,2)), ((3,3),(1,0)), ((3,3),(2,1)), ((3,3),(3,2))\right\}\\
\nonumber
\mathcal{C}^{\left\{l_{5},l_{5}\right\}}=&\left\{((0,0),(0,0)), ((0,0),(1,1)), ((0,0),(2,2)), ((0,0),(3,3)), \right.\\
\nonumber 
& \left.  ((1,1),(0,0)), ((1,1),(1,1)), ((1,1),(2,2)), ((1,1),(3,3)), \right.\\
\nonumber 
& \left.  ((2,2),(0,0)), ((2,2),(1,1)), ((2,2),(2,2)), ((2,2),(3,3)), \right.\\
\nonumber 
& \left.  ((3,3),(0,0)), ((3,3),(1,1)), ((3,3),(2,2)), ((3,3),(3,3)) \right\}
\end{align}
%%\begin{align}
%%\nonumber
%%\mathcal{C}^{\left\{l_{5},l_{2}\right\}}=&\left\{((0,0),(0,2)), ((0,0),(1,3)), ((0,0),(3,0)), ((1,1),(0,2)), \right.\\
%%\nonumber 
%%& \left.  ((1,1),(1,3)), ((1,1),(3,0)), ((2,2),(0,2)), ((2,2),(1,3)), \right.\\
%%\nonumber 
%%& \left.  ((2,2),(3,0)), ((3,3),(0,2)), ((3,3),(1,3)), ((3,3),(3,0))\right\}\\
%%\nonumber
%%\mathcal{C}^{\left\{l_{5},l_{3}\right\}}=&\left\{((0,0),(0,3)), ((0,0),(2,0)), ((0,0),(3,1)), ((1,1),(0,3)), \right.\\
%%\nonumber 
%%& \left.  ((1,1),(2,0)), ((1,1),(3,1)), ((2,2),(0,3)), ((2,2),(2,0)), \right.\\
%%\nonumber 
%%& \left.  ((2,2),(3,1)), ((3,3),(0,3)), ((3,3),(2,0)), ((3,3),(3,1))\right\}\\
%%\nonumber
%%\mathcal{C}^{\left\{l_{5},l_{4}\right\}}=&\left\{((0,0),(1,0)), ((0,0),2,1)), ((0,0),(3,2)), ((1,1),(1,0)), \right.\\
%%\nonumber 
%%& \left.  ((1,1),(2,1)), ((1,1),(3,2)), ((2,2),(1,0)), ((2,2),(2,1)), \right.\\
%%\nonumber 
%%& \left.  ((2,2),(3,2)), ((3,3),(1,0)), ((3,3),(2,1)), ((3,3),(3,2))\right\}\\
%%\nonumber
%%\mathcal{C}^{\left\{l_{5},l_{5}\right\}}=&\left\{((0,0),(0,0)), ((0,0),(1,1)), ((0,0),(2,2)), ((0,0),(3,3)), \right.\\
%%\nonumber 
%%& \left.  ((1,1),(0,0)), ((1,1),(1,1)), ((1,1),(2,2)), ((1,1),(3,3)), \right.\\
%%\nonumber 
%%& \left.  ((2,2),(0,0)), ((2,2),(1,1)), ((2,2),(2,2)), ((2,2),(3,3)), \right.\\
%%\nonumber 
%%& \left.  ((3,3),(0,0)), ((3,3),(1,1)), ((3,3),(2,2)), ((3,3),(3,3)) \right\}\\
%%\nonumber
%%\end{align}
}

The entries in the above constraints are of the form $ ((x_{A_{1}},x_{B_{1}}),(x_{A_{2}},x_{B_{2}}))$. In order to represent these clusters by a Latin Square of side $16$, with $ (x_{A_{1}},x_{A_{2}})$ along the rows, and $(x_{B_{1}},x_{B_{2}})$ along the columns each entry must be made of the form $((x_{A_{1}},x_{A_{2}}),(x_{B_{1}},x_{B_{2}}))$. Thus the constraints on the Latin Square as dictated by the clusters above are as follows:

{\footnotesize
\begin{align}
\nonumber
\vspace{-0.5cm}
\mathcal{L}_{1}:=&\left\{((0,0),(1,1)), ((0,1),(1,2)), ((0,2),(1,3)), \right.\\
\nonumber 
& \left.  ((1,0),(2,1)), ((1,1),(2,2)), ((1,2),(2,3)), \right.\\
\nonumber 
& \left.  ((2,0),(3,1)), ((2,1),(3,2)), ((2,2),(3,3))\right\}\\
\nonumber
\mathcal{L}_{2}:=&\left\{((0,0),(1,2)), ((0,1),(1,3)), ((0,3),(1,0)), \right.\\
\nonumber 
& \left.  ((1,0),(2,2)), ((1,1),(2,3)), ((1,3),(2,0)), \right.\\
\nonumber 
& \left.  ((2,0),(3,2)), ((2,1),(3,3)), ((2,3),(3,0))\right\}\\
\nonumber
\mathcal{L}_{3}:=&\left\{((0,0),(1,3)), ((0,2),(1,0)), ((0,3),(1,1)), \right.\\
\nonumber 
& \left.  ((1,0),(1,3)), ((1,2),(2,0)), ((1,3),(2,1)), \right.\\
\nonumber 
& \left.  ((2,0),(3,3)), ((2,2),(3,0)), ((2,3),(3,1))\right\}\\
\nonumber
\mathcal{L}_{4}:=&\left\{((0,1),(1,0)), ((0,2),(1,1)), ((0,3),(1,2)), \right.\\
\nonumber 
& \left.  ((1,1),(2,0)), ((1,2),(2,1)), ((1,3),(2,2)), \right.\\
\nonumber 
& \left.  ((2,1),(3,0)), ((2,2),(3,1)), ((2,3),(3,2))\right\}\\
\nonumber
\mathcal{L}_{5}:=&\left\{((0,0),(1,0)), ((0,1),(1,1)), ((0,2),(1,2)), ((0,3),(1,3)),\right.\\
\nonumber 
& \left.  ((1,0),(2,0)), ((1,1),(2,1)), ((1,2),(2,2)), ((1,3),(2,3)), \right.\\
\nonumber 
& \left.  ((2,0),(3,0)), ((2,1),(3,1)), ((2,2),(3,2)), ((2,3),(3,3))\right\}\\
\nonumber
\mathcal{L}_{6}:=&\left\{((0,0),(2,1)), ((0,1),(2,2)), ((0,2),(2,3)), \right.\\
\nonumber 
& \left.  ((1,0),(3,1)), ((1,1),(3,2)), ((1,2),(3,3)), \right.\\
\nonumber 
& \left.  ((3,0),(0,1)), ((3,1),(0,2)), ((3,2),(0,3))\right\}\\
\nonumber
\mathcal{L}_{7}:=&\left\{((0,0),(2,2)), ((0,1),(2,3)), ((0,3),(2,0)), \right.\\
\nonumber 
& \left.  ((1,0),(3,2)), ((1,1),(3,3)), ((1,3),(3,0)), \right.\\
\nonumber 
& \left.  ((3,0),(0,2)), ((3,1),(0,3)), ((3,3),(0,0))\right\}\\
\nonumber
\mathcal{L}_{8}:=&\left\{((0,0),(2,3)), ((0,2),(2,0)), ((0,3),(2,1)), \right.\\
\nonumber 
& \left.  ((1,0),(3,3)), ((1,2),(3,0)), ((1,3),(3,1)), \right.\\
\nonumber 
& \left.  ((3,0),(0,3)), ((3,2),(0,0)), ((3,3),(0,1))\right\}\\
\nonumber
\mathcal{L}_{9}:=&\left\{((0,1),(2,0)), ((0,2),(2,1)), ((0,3),(2,2)), \right.\\
\nonumber 
& \left.  ((1,1),(3,0)), ((1,2),(3,1)), ((1,3),(3,2)), \right.\\
\nonumber 
& \left.  ((3,1),(0,0)), ((3,2),(0,1)), ((3,3),(0,2))\right\}\\
\nonumber
\mathcal{L}_{10}:=&\left\{((0,0),(2,0)), ((0,1),(2,1)), ((0,2),(2,2)), ((0,3),(2,3)), \right.\\
\nonumber 
& \left.  ((1,0),(3,0)), ((1,1),(3,1)), ((1,2),(3,2)), ((1,3),(3,3)), \right.\\
\nonumber 
& \left.  ((3,0),(0,0)), ((3,1),(0,1)), ((3,2),(0,2)), ((3,3),(0,3))\right\}\\
\nonumber
\mathcal{L}_{11}:=&\left\{((0,0),(3,1)), ((0,1),(3,2)), ((0,2),(3,3)), \right.\\
\nonumber 
& \left.  ((2,0),(0,1)), ((2,1),(0,2)), ((2,2),(0,3)), \right.\\
\nonumber 
& \left.  ((3,0),(1,1)), ((3,1),(1,2)), ((3,2),(1,3))\right\}\\
\nonumber
\mathcal{L}_{12}:=&\left\{((0,0),(3,2)), ((0,1),(3,3)), ((0,3),(3,0)), \right.\\
\nonumber 
& \left.  ((2,0),(0,2)), ((2,1),(0,3)), ((2,3),(0,0)), \right.\\
\nonumber 
& \left.  ((3,0),(1,2)), ((3,1),(1,3)), ((3,3),(1,0))\right\}\\
\nonumber
\mathcal{L}_{13}:=&\left\{((0,0),(3,3)), ((0,2),(3,0)), ((0,3),(3,1)), \right.\\
\nonumber 
& \left.  ((2,0),(0,3)), ((2,2),(0,0)), ((2,3),(0,1)), \right.\\
\nonumber 
& \left.  ((3,0),(1,3)), ((3,2),(1,0)), ((3,3),(1,1))\right\}\\
\nonumber
\mathcal{L}_{14}:=&\left\{((0,1),(3,0)), ((0,2),(3,1)), ((0,3),(3,2)), \right.\\
\nonumber 
& \left.  ((2,1),(0,0)), ((2,2),(0,1)), ((2,3),(0,2)), \right.\\
\nonumber 
& \left.  ((3,1),(1,0)), ((3,2),(1,1)), ((3,3),(1,2))\right\}\\
\nonumber
\mathcal{L}_{15}:=&\left\{((0,0),(3,0)), ((0,1),(3,1)), ((0,2),(3,2)), ((0,3),(3,3)), \right.\\
\nonumber 
& \left.  ((2,0),(0,0)), ((2,1),(0,1)), ((2,2),(0,2)), ((2,3),(0,3)), \right.\\
\nonumber 
& \left.  ((3,0),(1,0)), ((3,1),(1,1)), ((3,2),(1,2)), ((3,3),(1,3))\right\}\\
\nonumber
\mathcal{L}_{16}:=&\left\{((1,0),(0,1)), ((1,1),(0,2)), ((1,2),(0,3)), \right.\\
\nonumber 
& \left.  ((2,0),(1,1)), ((2,1),(1,2)), ((2,2),(1,3)), \right.\\
\nonumber 
& \left.  ((3,0),(2,1)), ((3,1),(2,2)), ((3,2),(2,3))\right\}
\end{align}
\begin{align}
\nonumber
\mathcal{L}_{17}:=&\left\{((1,0),(0,2)), ((1,1),(0,3)), ((1,3),(0,0)), \right.\\
\nonumber 
& \left.  ((2,0),(1,2)), ((2,1),(1,3)), ((2,3),(1,0)), \right.\\
\nonumber 
& \left.  ((3,0),(2,2)), ((3,1),(2,3)), ((3,3),(2,0))\right\}\\
\nonumber
\mathcal{L}_{18}:=&\left\{((1,0),(0,3)), ((1,2),(0,0)), ((1,3),(0,1)), \right.\\
\nonumber 
& \left.  ((2,0),(1,3)), ((2,2),(1,0)), ((2,3),(1,1)), \right.\\
\nonumber 
& \left.  ((3,0),(1,3)), ((3,2),(2,0)), ((3,3),(2,1))\right\}\\
\nonumber
\mathcal{L}_{19}:=&\left\{((1,1),(0,0)), ((1,2),(0,1)), ((1,3),(0,2)), \right.\\
\nonumber 
& \left.  ((2,1),(1,0)), ((2,2),(1,1)), ((2,3),(1,2)), \right.\\
\nonumber 
& \left.  ((3,1),(2,0)), ((3,2),(2,1)), ((3,3),(2,2))\right\}\\
\nonumber
\mathcal{L}_{20}:=&\left\{((1,0),(0,0)), ((1,1),(0,1)), ((1,2),(0,2)), ((1,3),(0,3)), \right.\\
\nonumber 
& \left.  ((2,0),(1,0)), ((2,1),(1,1)), ((2,2),(1,2)), ((2,3),(1,3)), \right.\\
\nonumber 
& \left.  ((3,0),(2,0)), ((3,1),(2,1)), ((3,2),(2,2)), ((3,3),(2,3))\right\}\\
\nonumber
\mathcal{L}_{21}:=&\left\{((0,0),(0,1)), ((0,1),(0,2)), ((0,2),(0,3)), ((1,0),(1,1)), \right.\\
\nonumber 
& \left.  ((1,1),(1,2)), ((1,2),(1,3)), ((2,0),(2,1)), ((2,1),(2,2)), \right.\\
\nonumber 
& \left.  ((2,2),(2,3)), ((3,0),(3,1)), ((3,1),(3,2)), ((3,2),(3,3))\right\}\\
\nonumber
\mathcal{L}_{22}:=&\left\{((0,0),(0,2)), ((0,1),(0,3)), ((0,3),(0,0)), ((1,0),(1,2)), \right.\\
\nonumber 
& \left.  ((1,1),(1,3)), ((1,3),(1,0)), ((2,0),(2,2)), ((2,1),(2,3)), \right.\\
\nonumber 
& \left.  ((2,3),(2,0)), ((3,0),(3,2)), ((3,1),(3,3)), ((3,3),(3,0))\right\}\\
\nonumber
\mathcal{L}_{23}:=&\left\{((0,0),(0,3)), ((0,2),(0,0)), ((0,3),(0,1)), ((1,0),(1,3)), \right.\\
\nonumber 
& \left.  ((1,2),(1,0)), ((1,3),(1,1)), ((2,0),(2,3)), ((2,2),(2,0)), \right.\\
\nonumber 
& \left.  ((2,3),(2,1)), ((3,0),(3,3)), ((3,2),(3,0)), ((3,3),(3,1))\right\}\\
\nonumber
\mathcal{L}_{24}:=&\left\{((0,1),(0,0)), ((0,2),(0,1)), ((0,3),(0,2)), ((1,1),(1,0)), \right.\\
\nonumber 
& \left.  ((1,2),(1,1)), ((1,3),(1,2)), ((2,1),(2,0)), ((2,2),(2,1)), \right.\\
\nonumber 
& \left.  ((2,3),(2,2)), ((3,1),(3,0)), ((3,2),(3,1)), ((3,3),(3,2))\right\}\\
\nonumber
\mathcal{L}_{25}:=&\left\{((0,0),(0,0)), ((0,1),(0,1)), ((0,2),(0,2)), ((0,3),(0,3)), \right.\\
\nonumber 
& \left.  ((1,0),(1,0)), ((1,1),(1,1)), ((1,2),(1,2)), ((1,3),(1,3)), \right.\\
\nonumber 
& \left.  ((2,0),(2,0)), ((2,1),(2,1)), ((2,2),(2,2)), ((2,3),(2,3)), \right.\\
\nonumber 
& \left.  ((3,0),(3,0)), ((3,1),(3,1)), ((3,2),(3,2)), ((3,3),(3,3)) \right\}\\
\nonumber
\end{align}
}

\section{}
\noindent \textit{Clustering that removes the singular fade state $1+j$:}\\
The Cartesian Product of the clustering $\mathcal{C}^{\left[1+j\right]}$ with itself denoted by $\mathcal{D}^{\left[1+j\right]}$ contains exactly $25$ clusters. The clusters in $\mathcal{D}^{\left[1+j\right]}$ are as follows. 

{\footnotesize
\begin{align}
\nonumber
\vspace{-0.5cm}
\mathcal{C}^{\left\{l_{1},l_{1}\right\}}=&\left\{((0,1),(0,1)), ((0,1),(2,3)), ((0,1),(3,0)), \right.\\
\nonumber 
& \left.  ((2,3),(0,1)), ((2,3),(2,3)), ((2,3),(3,0))\right.\\
\nonumber
& \left.  ((3,0),(0,1)), ((3,0),(2,3)), ((3,0),(3,0)), \right\}\\
\nonumber 
\mathcal{C}^{\left\{l_{1},l_{2}\right\}}=&\left\{((0,1),(0,3)), ((0,1),(1,0)), ((0,1),(3,2)), \right.\\
\nonumber 
& \left.  ((2,3),(0,3)), ((2,3),(1,0)), ((2,3),(3,2))\right.\\
\nonumber
& \left.  ((3,0),(0,3)), ((3,0),(1,0)), ((3,0),(3,2)), \right\}\\
\nonumber 
\mathcal{C}^{\left\{l_{1},l_{3}\right\}}=&\left\{((0,1),(1,2)), ((0,1),(2,0)), ((0,1),(3,1)), \right.\\
\nonumber 
& \left.  ((2,3),(1,2)), ((2,3),(2,0)), ((2,3),(3,1))\right.\\
\nonumber
& \left.  ((3,0),(1,2)), ((3,0),(2,0)), ((3,0),(3,1)), \right\}\\
\nonumber 
\mathcal{C}^{\left\{l_{1},l_{4}\right\}}=&\left\{((0,1),(0,2)), ((0,1),(1,3)), ((0,1),(2,1)), \right.\\
\nonumber 
& \left.  ((2,3),(0,2)), ((2,3),(1,3)), ((2,3),(2,1))\right.\\
\nonumber
& \left.  ((3,0),(0,2)), ((3,0),(1,3)), ((3,0),(2,1)), \right\}\\
\nonumber 
\mathcal{C}^{\left\{l_{1},l_{5}\right\}}=&\left\{((0,1),(0,0)), ((0,1),(1,1)), ((0,1),(2,2)), ((0,1),(3,3)),\right.\\
\nonumber 
& \left.  ((2,3),(0,0)), ((2,3),(1,1)), ((2,3),(2,2)), ((2,3),(3,3)),\right.\\
\nonumber
& \left.  ((3,0),(0,0)), ((3,0),(1,1)), ((3,0),(2,2)), ((3,0),(3,3))\right\}\\
\nonumber
\mathcal{C}^{\left\{l_{2},l_{1}\right\}}=&\left\{((0,3),(0,1)), ((0,3),(2,3)), ((0,3),(3,0)), \right.\\
\nonumber 
& \left.  ((1,0),(0,1)), ((1,0),(2,3)), ((1,0),(3,0))\right.\\
\nonumber
& \left.  ((3,2),(0,1)), ((3,2),(2,3)), ((3,2),(3,0)), \right\}
\end{align}
\begin{align}
\nonumber 
\mathcal{C}^{\left\{l_{2},l_{2}\right\}}=&\left\{((0,3),(0,3)), ((0,3),(1,0)), ((0,3),(3,2)), \right.\\
\nonumber 
& \left.  ((1,0),(0,3)), ((1,0),(1,0)), ((1,0),(3,2))\right.\\
\nonumber
& \left.  ((3,2),(0,3)), ((3,2),(1,0)), ((3,2),(3,2)), \right\}\\
\nonumber 
\mathcal{C}^{\left\{l_{2},l_{3}\right\}}=&\left\{((0,3),(1,2)), ((0,3),(2,0)), ((0,3),(3,1)), \right.\\
\nonumber 
& \left.  ((1,0),(1,2)), ((1,0),(2,0)), ((1,0),(3,1))\right.\\
\nonumber
& \left.  ((3,2),(1,2)), ((3,2),(2,0)), ((3,2),(3,1)), \right\}\\
\nonumber 
\mathcal{C}^{\left\{l_{2},l_{4}\right\}}=&\left\{((0,3),(0,2)), ((0,3),(1,3)), ((0,3),(2,1)), \right.\\
\nonumber 
& \left.  ((1,0),(0,2)), ((1,0),(1,3)), ((1,0),(2,1))\right.\\
\nonumber
& \left.  ((3,2),(0,2)), ((3,2),(1,3)), ((3,2),(2,1)), \right\}\\
\nonumber 
\mathcal{C}^{\left\{l_{2},l_{5}\right\}}=&\left\{((0,3),(0,0)), ((0,3),(1,1)), ((0,3),(2,2)), ((0,3),(3,3)),\right.\\
\nonumber 
& \left.  ((1,0),(0,0)), ((1,0),(1,1)), ((1,0),(2,2)), ((1,0),(3,3)),\right.\\
\nonumber
& \left.  ((3,2),(0,0)), ((3,2),(1,1)), ((3,2),(2,2)), ((3,2),(3,3))\right\}\\
\nonumber
\mathcal{C}^{\left\{l_{3},l_{1}\right\}}=&\left\{((1,2),(0,1)), ((1,2),(2,3)), ((1,2),(3,0)), \right.\\
\nonumber 
& \left.  ((2,0),(0,1)), ((2,0),(2,3)), ((2,0),(3,0))\right.\\
\nonumber
& \left.  ((3,1),(0,1)), ((3,1),(2,3)), ((3,1),(3,0)), \right\}\\
\nonumber 
\mathcal{C}^{\left\{l_{3},l_{2}\right\}}=&\left\{((1,2),(0,3)), ((1,2),(1,0)), ((1,2),(3,2)), \right.\\
\nonumber 
& \left.  ((2,0),(0,3)), ((2,0),(1,0)), ((2,0),(3,2))\right.\\
\nonumber
& \left.  ((3,1),(0,3)), ((3,1),(1,0)), ((3,1),(3,2)), \right\}\\
\nonumber 
\mathcal{C}^{\left\{l_{3},l_{3}\right\}}=&\left\{((1,2),(1,2)), ((1,2),(2,0)), ((1,2),(3,1)), \right.\\
\nonumber 
& \left.  ((2,0),(1,2)), ((2,0),(2,0)), ((2,0),(3,1))\right.\\
\nonumber
& \left.  ((3,1),(1,2)), ((3,1),(2,0)), ((3,1),(3,1)), \right\}\\
\nonumber 
\mathcal{C}^{\left\{l_{3},l_{4}\right\}}=&\left\{((1,2),(0,2)), ((1,2),(1,3)), ((1,2),(2,1)), \right.\\
\nonumber 
& \left.  ((2,0),(0,2)), ((2,0),(1,3)), ((2,0),(2,1))\right.\\
\nonumber
& \left.  ((3,1),(0,2)), ((3,1),(1,3)), ((3,1),(2,1)), \right\}\\
\nonumber 
\mathcal{C}^{\left\{l_{3},l_{5}\right\}}=&\left\{((1,2),(0,0)), ((1,2),(1,1)), ((1,2),(2,2)), ((1,2),(3,3)),\right.\\
\nonumber 
& \left.  ((2,0),(0,0)), ((2,0),(1,1)), ((2,0),(2,2)), ((2,0),(3,3)),\right.\\
\nonumber
& \left.  ((3,1),(0,0)), ((3,1),(1,1)), ((3,1),(2,2)), ((3,1),(3,3))\right\}\\
\nonumber
\mathcal{C}^{\left\{l_{4},l_{1}\right\}}=&\left\{((0,2),(0,1)), ((0,2),(2,3)), ((0,2),(3,0)), \right.\\
\nonumber 
& \left.  ((1,3),(0,1)), ((1,3),(2,3)), ((1,3),(3,0))\right.\\
\nonumber
& \left.  ((2,1),(0,1)), ((2,1),(2,3)), ((2,1),(3,0)), \right\}\\
\nonumber 
\mathcal{C}^{\left\{l_{4},l_{2}\right\}}=&\left\{((0,2),(0,3)), ((0,2),(1,0)), ((0,2),(3,2)), \right.\\
\nonumber 
& \left.  ((1,3),(0,3)), ((1,3),(1,0)), ((1,3),(3,2))\right.\\
\nonumber
& \left.  ((2,1),(0,3)), ((2,1),(1,0)), ((2,1),(3,2)), \right\}\\
\nonumber 
\mathcal{C}^{\left\{l_{4},l_{3}\right\}}=&\left\{((0,2),(1,2)), ((0,2),(2,0)), ((0,2),(3,1)), \right.\\
\nonumber 
& \left.  ((1,3),(1,2)), ((1,3),(2,0)), ((1,3),(3,1))\right.\\
\nonumber
& \left.  ((2,1),(1,2)), ((2,1),(2,0)), ((2,1),(3,1)), \right\}\\
\nonumber 
\mathcal{C}^{\left\{l_{4},l_{4}\right\}}=&\left\{((0,2),(0,2)), ((0,2),(1,3)), ((0,2),(2,1)), \right.\\
\nonumber 
& \left.  ((1,3),(0,2)), ((1,3),(1,3)), ((1,3),(2,1))\right.\\
\nonumber
& \left.  ((2,1),(0,2)), ((2,1),(1,3)), ((2,1),(2,1)), \right\}\\
\nonumber
\mathcal{C}^{\left\{l_{4},l_{5}\right\}}=&\left\{((0,2),(0,0)), ((0,2),(1,1)), ((0,2),(2,2)), ((0,2),(3,3)),\right.\\
\nonumber 
& \left.  ((1,3),(0,0)), ((1,3),(1,1)), ((1,3),(2,2)), ((1,3),(3,3)),\right.\\
\nonumber
& \left.  ((2,1),(0,0)), ((2,1),(1,1)), ((2,1),(2,2)), ((2,1),(3,3))\right\}\\
\nonumber
\mathcal{C}^{\left\{l_{5},l_{1}\right\}}=&\left\{((0,0),(0,1)), ((0,0),(2,3)), ((0,0),(3,0)), ((1,1),(0,1)), \right.\\
\nonumber 
& \left.  ((1,1),(2,3)), ((1,1),(3,0)), ((2,2),(0,1)), ((2,2),(2,3)), \right.\\
\nonumber 
& \left.  ((2,2),(3,0)), ((3,3),(0,1)), ((3,3),(2,3)), ((3,3),(3,0))\right\}\\
\nonumber
\mathcal{C}^{\left\{l_{5},l_{2}\right\}}=&\left\{((0,0),(0,3)), ((0,0),(1,0)), ((0,0),(3,2)), ((1,1),(0,3)), \right.\\
\nonumber 
& \left.  ((1,1),(1,0)), ((1,1),(3,2)), ((2,2),(0,3)), ((2,2),(1,0)), \right.\\
\nonumber 
& \left.  ((2,2),(3,2)), ((3,3),(0,3)), ((3,3),(1,0)), ((3,3),(3,2))\right\}\\
\nonumber
\mathcal{C}^{\left\{l_{5},l_{3}\right\}}=&\left\{((0,0),(1,2)), ((0,0),(2,0)), ((0,0),(3,1)), ((1,1),(1,2)), \right.\\
\nonumber 
& \left.  ((1,1),(2,0)), ((1,1),(3,1)), ((2,2),(1,2)), ((2,2),(2,0)), \right.\\
\nonumber 
& \left.  ((2,2),(3,1)), ((3,3),(1,2)), ((3,3),(2,0)), ((3,3),(3,1))\right\}\\
\nonumber
\mathcal{C}^{\left\{l_{5},l_{4}\right\}}=&\left\{((0,0),(0,2)), ((0,0),1,3)), ((0,0),(2,1)), ((1,1),(0,2)), \right.\\
\nonumber 
& \left.  ((1,1),(1,3)), ((1,1),(2,1)), ((2,2),(0,2)), ((2,2),(1,3)), \right.\\
\nonumber 
& \left.  ((2,2),(2,1)), ((3,3),(0,2)), ((3,3),(1,3)), ((3,3),(2,1))\right\}\\
\nonumber
%\mathcal{C}^{\left\{l_{5},l_{5}\right\}}=&\left\{((0,0),(0,0)), ((0,0),(1,1)), ((0,0),(2,2)), ((0,0),(3,3)), \right.\\
%\nonumber 
%& \left.  ((1,1),(0,0)), ((1,1),(1,1)), ((1,1),(2,2)), ((1,1),(3,3)), \right.\\
%\nonumber 
%& \left.  ((2,2),(0,0)), ((2,2),(1,1)), ((2,2),(2,2)), ((2,2),(3,3)), \right.\\
%\nonumber 
%& \left.  ((3,3),(0,0)), ((3,3),(1,1)), ((3,3),(2,2)), ((3,3),(3,3)) \right\}
\end{align}
\begin{align}
%\nonumber 
%\mathcal{C}^{\left\{l_{4},l_{5}\right\}}=&\left\{((0,2),(0,0)), ((0,2),(1,1)), ((0,2),(2,2)), ((0,2),(3,3)),\right.\\
%\nonumber 
%& \left.  ((1,3),(0,0)), ((1,3),(1,1)), ((1,3),(2,2)), ((1,3),(3,3)),\right.\\
%\nonumber
%& \left.  ((2,1),(0,0)), ((2,1),(1,1)), ((2,1),(2,2)), ((2,1),(3,3))\right\}\\
%\nonumber
%\mathcal{C}^{\left\{l_{5},l_{1}\right\}}=&\left\{((0,0),(0,1)), ((0,0),(2,3)), ((0,0),(3,0)), ((1,1),(0,1)), \right.\\
%\nonumber 
%& \left.  ((1,1),(2,3)), ((1,1),(3,0)), ((2,2),(0,1)), ((2,2),(2,3)), \right.\\
%\nonumber 
%& \left.  ((2,2),(3,0)), ((3,3),(0,1)), ((3,3),(2,3)), ((3,3),(3,0))\right\}\\
%\nonumber
%\mathcal{C}^{\left\{l_{5},l_{2}\right\}}=&\left\{((0,0),(0,3)), ((0,0),(1,0)), ((0,0),(3,2)), ((1,1),(0,3)), \right.\\
%\nonumber 
%& \left.  ((1,1),(1,0)), ((1,1),(3,2)), ((2,2),(0,3)), ((2,2),(1,0)), \right.\\
%\nonumber 
%& \left.  ((2,2),(3,2)), ((3,3),(0,3)), ((3,3),(1,0)), ((3,3),(3,2))\right\}\\
%\nonumber
%\mathcal{C}^{\left\{l_{5},l_{3}\right\}}=&\left\{((0,0),(1,2)), ((0,0),(2,0)), ((0,0),(3,1)), ((1,1),(1,2)), \right.\\
%\nonumber 
%& \left.  ((1,1),(2,0)), ((1,1),(3,1)), ((2,2),(1,2)), ((2,2),(2,0)), \right.\\
%\nonumber 
%& \left.  ((2,2),(3,1)), ((3,3),(1,2)), ((3,3),(2,0)), ((3,3),(3,1))\right\}\\
%\nonumber
%\mathcal{C}^{\left\{l_{5},l_{4}\right\}}=&\left\{((0,0),(0,2)), ((0,0),1,3)), ((0,0),(2,1)), ((1,1),(0,2)), \right.\\
%\nonumber 
%& \left.  ((1,1),(1,3)), ((1,1),(2,1)), ((2,2),(0,2)), ((2,2),(1,3)), \right.\\
%\nonumber 
%& \left.  ((2,2),(2,1)), ((3,3),(0,2)), ((3,3),(1,3)), ((3,3),(2,1))\right\}\\
\nonumber
\mathcal{C}^{\left\{l_{5},l_{5}\right\}}=&\left\{((0,0),(0,0)), ((0,0),(1,1)), ((0,0),(2,2)), ((0,0),(3,3)), \right.\\
\nonumber 
& \left.  ((1,1),(0,0)), ((1,1),(1,1)), ((1,1),(2,2)), ((1,1),(3,3)), \right.\\
\nonumber 
& \left.  ((2,2),(0,0)), ((2,2),(1,1)), ((2,2),(2,2)), ((2,2),(3,3)), \right.\\%
\nonumber 
& \left.  ((3,3),(0,0)), ((3,3),(1,1)), ((3,3),(2,2)), ((3,3),(3,3)) \right\}\\
\nonumber
\end{align}
}

The constraints for the Latin Square representing the above clustering are as follows:

{\footnotesize
\begin{align}
\nonumber
\vspace{-0.5cm}
\mathcal{L}_{1}:=&\left\{((0,0),(1,1)), ((0,2),(1,3)), ((0,3),(1,0)), \right.\\
\nonumber 
& \left.  ((2,0),(3,1)), ((2,2),(3,3)), ((2,3),(3,0))\right.\\
\nonumber
& \left.  ((3,0),(0,1)), ((3,2),(0,3)), ((3,3),(0,0)), \right\}\\
\nonumber 
\mathcal{L}_{2}:=&\left\{((0,0),(1,3)), ((0,1),(1,0)), ((0,3),(1,2)), \right.\\
\nonumber 
& \left.  ((2,0),(3,3)), ((2,1),(3,0)), ((2,3),(3,2))\right.\\
\nonumber
& \left.  ((3,0),(0,3)), ((3,1),(0,0)), ((3,3),(0,2)), \right\}\\
\nonumber 
\mathcal{L}_{3}:=&\left\{((0,1),(1,2)), ((0,2),(1,0)), ((0,3),(1,1)), \right.\\
\nonumber 
& \left.  ((2,1),(3,2)), ((2,2),(3,0)), ((2,3),(3,1))\right.\\
\nonumber
& \left.  ((3,1),(0,2)), ((3,2),(0,0)), ((3,3),(0,1)), \right\}\\
\nonumber 
\mathcal{L}_{4}:=&\left\{((0,0),(1,2)), ((0,1),(1,3)), ((0,2),(1,1)), \right.\\
\nonumber 
& \left.  ((2,0),(3,2)), ((2,1),(3,3)), ((2,2),(3,1))\right.\\
\nonumber
& \left.  ((3,0),(0,2)), ((3,1),(0,3)), ((3,2),(0,1)), \right\}\\
\nonumber 
\mathcal{L}_{5}:=&\left\{((0,0),(1,0)), ((0,1),(1,1)), ((0,2),(1,2)), ((0,3),(1,3)),\right.\\
\nonumber 
& \left.  ((2,0),(3,0)), ((2,1),(3,1)), ((2,2),(3,2)), ((2,3),(3,3)),\right.\\
\nonumber
& \left.  ((3,0),(0,0)), ((3,1),(0,1)), ((3,2),(0,2)), ((3,3),(0,3))\right\}\\
\nonumber
\mathcal{L}_{6}:=&\left\{((0,0),(3,1)), ((0,2),(3,3)), ((0,3),(3,0)), \right.\\
\nonumber 
& \left.  ((1,0),(0,1)), ((1,2),(0,3)), ((1,3),(0,0))\right.\\
\nonumber
& \left.  ((3,0),(2,1)), ((3,2),(2,3)), ((3,3),(2,0)), \right\}\\
\nonumber
\mathcal{L}_{7}:=&\left\{((0,0),(3,3)), ((0,1),(3,0)), ((0,3),(3,2)), \right.\\
\nonumber 
& \left.  ((1,0),(0,3)), ((1,1),(0,0)), ((1,3),(0,2))\right.\\
\nonumber
& \left.  ((3,0),(2,3)), ((3,1),(2,0)), ((3,3),(2,2)), \right\}\\
\nonumber 
\mathcal{L}_{8}:=&\left\{((0,1),(3,2)), ((0,2),(3,0)), ((0,3),(3,1)), \right.\\
\nonumber 
& \left.  ((1,1),(0,2)), ((1,2),(0,0)), ((1,3),(0,1))\right.\\
\nonumber
& \left.  ((3,1),(2,2)), ((3,2),(2,0)), ((3,3),(2,1)), \right\}\\
\nonumber 
\mathcal{L}_{9}:=&\left\{((0,0),(3,2)), ((0,1),(3,3)), ((0,2),(3,1)), \right.\\
\nonumber 
& \left.  ((1,0),(0,2)), ((1,1),(0,3)), ((1,2),(0,1))\right.\\
\nonumber
& \left.  ((3,0),(2,2)), ((3,1),(2,3)), ((3,2),(2,1)), \right\}\\
\nonumber 
\mathcal{L}_{10}:=&\left\{((0,0),(3,0)), ((0,1),(3,1)), ((0,2),(3,2)), ((0,3),(3,3)),\right.\\
\nonumber 
& \left.  ((1,0),(0,0)), ((1,1),(0,1)), ((12),(0,2)), ((1,3),(0,3)),\right.\\
\nonumber
& \left.  ((3,0),(2,0)), ((3,1),(2,1)), ((3,2),(2,2)), ((3,3),(2,3))\right\}\\
\nonumber
\mathcal{L}_{11}:=&\left\{((1,0),(2,1)), ((1,2),(2,3)), ((1,3),(2,0)), \right.\\
\nonumber 
& \left.  ((2,0),(0,1)), ((2,2),(0,3)), ((2,3),(0,0))\right.\\
\nonumber
& \left.  ((3,0),(1,1)), ((3,2),(1,3)), ((3,3),(1,0)), \right\}\\
\nonumber 
\mathcal{L}_{12}:=&\left\{((1,0),(2,3)), ((1,1),(2,0)), ((1,3),(2,2)), \right.\\
\nonumber 
& \left.  ((2,0),(0,3)), ((2,1),(0,0)), ((2,3),(0,2))\right.\\
\nonumber
& \left.  ((3,0),(1,3)), ((3,1),(1,0)), ((3,3),(1,2)), \right\}\\
\nonumber 
\mathcal{L}_{13}:=&\left\{((1,1),(2,2)), ((1,2),(2,0)), ((1,3),(2,1)), \right.\\
\nonumber 
& \left.  ((2,1),(0,2)), ((2,2),(0,0)), ((2,3),(0,1))\right.\\
\nonumber
& \left.  ((3,1),(1,2)), ((3,2),(1,0)), ((3,3),(1,1)), \right\}\\
\nonumber 
\mathcal{L}_{14}:=&\left\{((1,0),(2,2)), ((1,1),(2,3)), ((1,2),(2,1)), \right.\\
\nonumber 
& \left.  ((2,0),(0,2)), ((2,1),(0,3)), ((2,2),(0,1))\right.\\
\nonumber
& \left.  ((3,0),(1,2)), ((3,1),(1,3)), ((3,2),(1,1)), \right\}\\
\nonumber 
\mathcal{L}_{15}:=&\left\{((1,0),(2,0)), ((1,1),(2,1)), ((1,2),(2,2)), ((1,3),(2,3)),\right.\\
\nonumber 
& \left.  ((2,0),(0,0)), ((2,1),(0,1)), ((2,2),(0,2)), ((2,3),(0,3)),\right.\\
\nonumber
& \left.  ((3,0),(1,0)), ((3,1),(1,1)), ((3,2),(1,2)), ((3,3),(1,3))\right\}
\end{align}
\begin{align}
\nonumber
\mathcal{L}_{16}:=&\left\{((0,0),(2,1)), ((0,2),(2,3)), ((0,3),(2,0)), \right.\\
\nonumber 
& \left.  ((1,0),(3,1)), ((1,2),(3,3)), ((1,3),(3,0))\right.\\
\nonumber
& \left.  ((2,0),(1,1)), ((2,2),(1,3)), ((2,3),(1,0)), \right\}\\
\nonumber 
\mathcal{L}_{17}:=&\left\{((0,0),(2,3)), ((0,1),(2,0)), ((0,3),(2,2)), \right.\\
\nonumber 
& \left.  ((1,0),(3,3)), ((1,1),(3,0)), ((1,3),(3,2))\right.\\
\nonumber
& \left.  ((2,0),(1,3)), ((2,1),(1,0)), ((2,3),(1,2)), \right\}\\
\nonumber
\mathcal{L}_{18}:=&\left\{((0,1),(2,2)), ((0,2),(2,0)), ((0,3),(2,1)), \right.\\
\nonumber 
& \left.  ((1,1),(3,2)), ((1,2),(3,0)), ((1,3),(3,1))\right.\\
\nonumber
& \left.  ((2,1),(1,2)), ((2,2),(1,0)), ((2,3),(1,1)), \right\}\\
\nonumber 
\mathcal{L}_{19}:=&\left\{((0,0),(2,2)), ((0,1),(2,3)), ((0,2),(2,1)), \right.\\
\nonumber 
& \left.  ((1,0),(3,2)), ((1,1),(3,3)), ((1,2),(3,1))\right.\\
\nonumber
& \left.  ((2,0),(1,2)), ((2,1),(1,3)), ((2,2),(1,1)), \right\}\\
\nonumber 
\mathcal{L}_{20}:=&\left\{((0,0),(2,0)), ((0,1),(2,1)), ((0,2),(2,2)), ((0,3),(2,3)),\right.\\
\nonumber 
& \left.  ((1,0),(3,0)), ((1,1),(3,1)), ((1,2),(3,2)), ((1,3),(3,3)),\right.\\
\nonumber
& \left.  ((2,0),(1,0)), ((2,1),(1,1)), ((2,2),(1,2)), ((2,3),(1,3))\right\}\\
\nonumber
\mathcal{L}_{21}:=&\left\{((0,0),(0,1)), ((0,2),(0,3)), ((0,3),(0,0)), ((1,0),(1,1)), \right.\\
\nonumber 
& \left.  ((1,2),(1,3)), ((1,3),(1,0)), ((2,0),(2,1)), ((2,2),(2,3)), \right.\\
\nonumber 
& \left.  ((2,3),(2,0)), ((3,0),(3,1)), ((3,2),(3,3)), ((3,3),(3,0))\right\}\\
\nonumber
\mathcal{L}_{22}:=&\left\{((0,0),(0,3)), ((0,1),(0,0)), ((0,3),(0,2)), ((1,0),(1,3)), \right.\\
\nonumber 
& \left.  ((1,1),(1,0)), ((1,3),(1,2)), ((2,0),(2,3)), ((2,1),(2,0)), \right.\\
\nonumber 
& \left.  ((2,3),(2,2)), ((3,0),(3,3)), ((3,1),(3,0)), ((3,3),(3,2))\right\}\\
\nonumber
\mathcal{L}_{23}:=&\left\{((0,1),(0,2)), ((0,2),(0,0)), ((0,3),(0,1)), ((1,1),(1,2)), \right.\\
\nonumber 
& \left.  ((1,2),(1,0)), ((1,3),(1,1)), ((2,1),(2,2)), ((2,2),(2,0)), \right.\\
\nonumber 
& \left.  ((2,3),(2,1)), ((3,1),(3,2)), ((3,2),(3,0)), ((3,3),(3,1))\right\}\\
\nonumber
\mathcal{L}_{24}:=&\left\{((0,0),(0,2)), ((0,1),(0,3)), ((0,2),(0,1)), ((1,0),(1,2)), \right.\\
\nonumber 
& \left.  ((1,1),(1,3)), ((1,2),(1,1)), ((2,0),(2,2)), ((2,1),(2,3)), \right.\\
\nonumber 
& \left.  ((2,2),(2,1)), ((3,0),(3,2)), ((3,1),(3,3)), ((3,2),(3,1))\right\}\\
\nonumber
\mathcal{L}_{25}:=&\left\{((0,0),(0,0)), ((0,1),(0,1)), ((0,2),(0,2)), ((0,3),(0,3)), \right.\\
\nonumber 
& \left.  ((1,0),(1,0)), ((1,1),(1,1)), ((1,2),(1,2)), ((1,3),(1,3)), \right.\\
\nonumber 
& \left.  ((2,0),(2,0)), ((2,1),(2,1)), ((2,2),(2,2)), ((2,3),(2,3)), \right.\\
\nonumber 
& \left.  ((3,0),(3,0)), ((3,1),(3,1)), ((3,2),(3,2)), ((3,3),(3,3)) \right\}\\
\nonumber
\end{align}
}

The Cartesian Product of the clustering $\mathcal{C}^{\left[1+j\right]}$ with itself, denoted by $\mathcal{D}^{\left[1+j\right]}$ can be represented by the Latin Square given in Fig. 8.\\

\section{}
\noindent \textit{Singularity removal constraints for the singular fade state $-1+j$:}\\
The singularity removal constraints for the singular fade state $-1+j$ are given in Fig. 17 in the next page.
\begin{figure*}
{\scriptsize
\begin{tabular}{||c||l|l|l||}\hline
        & ~~~~Singularity Removal Constraints for $\gamma e^{j \theta}=-1+j$ &~~~~~~~~ Latin Square Constraints for $\gamma e^{j \theta}=-1+j$ & Cluster \\\hline \hline

(1) &  $\left\{((0,0),(3,2)), ((3,2),(0,0)), ((0,0),(0,0)), ((3,2),(3,2)) \right\} $ &  $\left\{((0,3),(0,2)), ((3,0),(2,0)), ((0,0),(0,0)), ((3,3),(2,2)) \right\} $ & $\mathcal{L}_{1}$  \\\hline
(2) &  $\left\{((0,0),(0,1)), ((3,2),(3,3)),((3,2),(0,1)),((0,0),(3,3))\right\} $   &  $\left\{((0,0),(0,1)), ((3,3),(2,3)),((3,0),(2,1)),((0,3),(0,3))\right\}$ & $\mathcal{L}_{2}$  \\\hline
(3) &  $\left\{((0,0),(1,1)), ((3,2),(2,0)),((3,2),(1,1)),((0,0),(2,0))\right\} $   &  $\left\{((0,1),(0,1)), ((3,2),(2,0)),((3,1),(2,1)),((0,2),(0,0))\right\} $ &  $\mathcal{L}_{3}$ \\\hline
(4) &  $\left\{((0,0),(1,3)), ((3,2),(2,2)),((3,2),(1,3)),((0,0),(2,2))\right\} $   &  $\left\{((0,1),(0,3)), ((3,2),(2,2)),((3,1),(2,3)),((0,2),(0,2))\right\} $ &  $\mathcal{L}_{4}$  \\\hline
(5) &  $\left\{((0,1),(0,0)), ((3,3),(3,2)),((3,3),(0,0)),((0,1),(3,2))\right\} $   &  $\left\{((0,0),(1,0)), ((3,3),(3,2)),((3,0(,(3,0)),((0,3),(1,2))\right\} $ &   $\mathcal{L}_{4}$ \\\hline
(6) &  $\left\{((0,1),(1,1)), ((3,3),(2,0)),((3,3),(1,1)),((0,1),(2,0)) \right\} $ & $\left\{((0,1),(1,1)), ((3,2),(3,0)),((3,1),(3,1)),((0,2),(1,0)) \right\} $  &  $\mathcal{L}_{1}$  \\\hline
(7) &  $\left\{((0,1),(0,1)), ((3,3),(3,3)),((3,3),(0,1)),((0,1),(3,3)) \right\} $ & $\left\{((0,0),(1,1)), ((3,3),(3,3)),((3,0),(3,1)),((0,3),(1,3)) \right\}$ &  $\mathcal{L}_{3}$ \\\hline
(8) &  $\left\{((0,1),(1,3)), ((3,3),(2,2)),((3,3),(1,3)),((0,1),(2,2)) \right\} $  & $\left\{((0,1),(1,3)), ((3,2),(3,2)),((3,1),(3,3)),((0,2),(1,2)) \right\}$  &  $\mathcal{L}_{2}$ \\\hline
(9) &  $\left\{((1,1),(1,1)), ((2,0),(2,0)), ((2,0),(1,1)), ((1,1),(2,0)) \right\} $ & $\left\{((1,1),(1,1)), ((2,2),(0,0)), ((2,1),(0,1)), ((1,2),(1,0)) \right\}$ & $\mathcal{L}_{5}$  \\\hline
(10) &  $\left\{((1,1),(0,0)), ((2,0),(3,2)), ((2,0),(0,0)), ((1,1),(3,2)) \right\} $  & $\left\{((1,0),(1,0)), ((2,3),(0,2)), ((2,0),(0,0)), ((1,3),(1,2)) \right\}$  &  $\mathcal{L}_{6}$ \\\hline
(11)&  $\left\{((1,1),(0,1)), ((2,0),(3,3)), ((2,0),(0,1)), ((1,1),(3,3)) \right\} $  & $\left\{((1,0),(1,1)), ((2,3),(0,3)), ((2,0),(0,1)), ((1,3),(1,1)) \right\}$  &  $\mathcal{L}_{7}$ \\\hline
(12) &  $\left\{((1,1),(1,3)), ((2,0),(2,2)), ((2,0),(1,3)), ((1,1),(2,2)) \right\} $  &  $\left\{((1,1),(1,3)), ((2,2),(0,2)), ((2,1),(0,3)), ((1,2),(1,2)) \right\}$ &  $\mathcal{L}_{8}$ \\\hline
(13) &  $\left\{((1,3),(0,0)), ((2,2),(3,2)), ((2,2),(0,0)), ((1,3),(3,2)) \right\} $  &  $\left\{((1,0),(3,0)), ((2,3),(2,2)), ((2,0),(2,0)), ((1,3),(3,2)) \right\}$ & $\mathcal{L}_{5}$  \\\hline
(14) &  $\left\{((1,3),(0,1)), ((2,2),(3,3)), ((2,2),(0,1)), ((1,3),(3,3)) \right\} $  &  $\left\{((1,0),(3,1)), ((2,3),(2,3)), ((2,0),(2,1)), ((1,3),(3,3)) \right\}$ &  $\mathcal{L}_{8}$ \\\hline
(15) &  $\left\{((1,3),(1,1)), ((2,2),(2,0)), ((2,2),(1,1)), ((1,3),(2,0)) \right\} $  &  $\left\{((1,1),(3,1)), ((2,2),(2,0)), ((2,1),(2,1)), ((1,2),(3,0)) \right\}$ & $\mathcal{L}_{7}$  \\\hline
(16) &  $\left\{((1,3),(1,3)), ((2,2),(2,2)), ((2,2),(1,3)), ((1,3),(2,2)) \right\} $  & $\left\{((1,1),(3,3)), ((2,2),(2,2)), ((2,1),(2,3)), ((1,2),(3,2)) \right\}$ & $\mathcal{L}_{6}$ \\\hline
(17) &  $\left\{((0,0),(0,2)), ((3,2),(0,2)) \right\} $  & $\left\{((0,0),(0,2)), ((3,0),(2,2))\right\}$ & $\mathcal{L}_{7}$ \\\hline
(18) &  $\left\{((0,0),(0,3)), ((3,2),(0,3)) \right\} $  & $\left\{((0,0),(0,3)), ((3,0),(2,3))\right\}$ & $\mathcal{L}_{5}$ \\\hline
(19) &  $\left\{((0,0),(1,0)), ((3,2),(0,2)) \right\} $  & $\left\{((0,1),(0,0)), ((3,0),(2,2))\right\}$ & $\mathcal{L}_{8}$ \\\hline
(20) &  $\left\{((0,0),(1,2)), ((3,2),(1,2)) \right\} $  & $\left\{((0,1),(0,2)), ((3,1),(2,2))\right\}$ & $\mathcal{L}_{9}$ \\\hline
(21) &  $\left\{((0,0),(2,1)), ((3,2),(2,1)) \right\} $  & $\left\{((0,2),(0,1)), ((3,2),(2,1))\right\}$ & $\mathcal{L}_{6}$ \\\hline
(22) &  $\left\{((0,0),(2,3)), ((3,2),(2,3)) \right\} $  & $\left\{((0,2),(0,3)), ((3,2),(2,3))\right\}$ & $\mathcal{L}_{9}$  \\\hline
(23) &  $\left\{((0,0),(3,0)), ((3,2),(3,0)) \right\} $  & $\left\{((0,3),(0,0)), ((3,3),(2,0))\right\}$ & $\mathcal{L}_{9}$ \\\hline
(24) &  $\left\{((0,0),(3,1)), ((3,2),(3,1)) \right\} $  & $\left\{((0,3),(0,1)), ((3,3),(2,1))\right\}$ & $\mathcal{L}_{10}$ \\\hline
(25) &  $\left\{((0,2),(0,0)), ((0,2),(3,2)) \right\} $  & $\left\{((0,0),(2,0)), ((0,3),(2,2))\right\}$ & $\mathcal{L}_{8}$ \\\hline
(26) &  $\left\{((0,3),(0,0)), ((0,3),(3,2)) \right\} $  & $\left\{((0,0),(3,0)), ((0,3),(3,2))\right\}$ & $\mathcal{L}_{11}$ \\\hline
(27) &  $\left\{((1,0),(0,0)), ((1,0),(3,2)) \right\} $  & $\left\{((1,0),(0,0)), ((1,3),(0,2))\right\}$ & $\mathcal{L}_{2}$\\\hline
(28) &  $\left\{((1,2),(0,0)), ((1,2),(3,2)) \right\} $  & $\left\{((1,0),(2,0)), ((1,3),(2,2))\right\}$ & $\mathcal{L}_{10}$\\\hline
(29) &  $\left\{((2,1),(0,0)), ((2,1),(3,2)) \right\} $  & $\left\{((2,0),(1,0)), ((2,3),(1,2))\right\}$ & $\mathcal{L}_{3}$ \\\hline
(30) &  $\left\{((2,3),(0,0)), ((2,3),(3,2)) \right\} $  & $\left\{((2,0),(3,0)), ((2,3),(3,2))\right\}$ & $\mathcal{L}_{9}$\\\hline
(31) &  $\left\{((3,0),(0,0)), ((3,0),(3,2)) \right\} $  & $\left\{((3,0),(0,0)), ((3,3),(0,2))\right\}$ & $\mathcal{L}_{11}$ \\\hline
(32) &  $\left\{((3,1),(0,0)), ((3,1),(3,2)) \right\} $  & $\left\{((3,0),(1,0)), ((3,3),(1,2))\right\}$ & $\mathcal{L}_{12}$\\\hline
(33) &  $\left\{((0,1),(0,2)), ((3,3),(0,2)) \right\} $  & $\left\{((0,0),(1,2)), ((3,0),(3,2))\right\}$ & $\mathcal{L}_{10}$ \\\hline
(34) &  $\left\{((0,1),(0,3)), ((3,3),(0,3)) \right\} $  & $\left\{((0,0),(1,3)), ((3,0),(3,3))\right\}$ & $\mathcal{L}_{9}$ \\\hline
(35) &  $\left\{((0,1),(1,0)), ((3,3),(0,2)) \right\} $  & $\left\{((0,1),(1,0)), ((3,0),(3,2))\right\}$ & $\mathcal{L}_{10}$ \\\hline
(36) &  $\left\{((0,1),(1,2)), ((3,3),(1,2)) \right\} $  & $\left\{((0,1),(1,2)), ((3,1),(3,2))\right\}$ & $\mathcal{L}_{7}$ \\\hline
(37) &  $\left\{((0,1),(2,1)), ((3,3),(2,1)) \right\} $  & $\left\{((0,2),(1,1)), ((3,2),(3,1))\right\}$ & $\mathcal{L}_{10}$ \\\hline
(38) &  $\left\{((0,1),(2,3)), ((3,3),(2,3)) \right\} $  & $\left\{((0,2),(1,3)), ((3,2),(3,3))\right\}$ & $\mathcal{L}_{5}$  \\\hline
(39) &  $\left\{((0,1),(3,0)), ((3,3),(3,0)) \right\} $  & $\left\{((0,3),(1,0)), ((3,3),(3,0))\right\}$ & $\mathcal{L}_{13}$ \\\hline
(40) &  $\left\{((0,1),(3,1)), ((3,3),(3,1)) \right\} $  & $\left\{((0,3),(1,1)), ((3,3),(3,1))\right\}$ & $\mathcal{L}_{6}$ \\\hline
(41) &  $\left\{((0,2),(0,1)), ((0,2),(3,3)) \right\} $  & $\left\{((0,0),(2,1)), ((0,3),(2,3))\right\}$ & $\mathcal{L}_{12}$ \\\hline
(42) &  $\left\{((0,3),(0,1)), ((0,3),(3,3)) \right\} $  & $\left\{((0,0),(3,1)), ((0,3),(3,3))\right\}$ & $\mathcal{L}_{14}$ \\\hline
(43) &  $\left\{((1,0),(0,1)), ((1,0),(3,3)) \right\} $  & $\left\{((1,0),(0,1)), ((1,3),(0,3))\right\}$ & $\mathcal{L}_{1}$\\\hline
(44) &  $\left\{((1,2),(0,1)), ((1,2),(3,3)) \right\} $  & $\left\{((1,0),(2,1)), ((1,3),(2,3))\right\}$ & $\mathcal{L}_{11}$\\\hline
(45) &  $\left\{((2,1),(0,1)), ((2,1),(3,3)) \right\} $  & $\left\{((2,0),(1,1)), ((2,3),(1,3))\right\}$ & $\mathcal{L}_{4}$ \\\hline
(46) &  $\left\{((2,3),(0,1)), ((2,3),(3,3)) \right\} $  & $\left\{((2,0),(3,1)), ((2,3),(3,3))\right\}$ & $\mathcal{L}_{11}$\\\hline
(47) &  $\left\{((3,0),(0,1)), ((3,0),(3,3)) \right\} $  & $\left\{((3,0),(0,1)), ((3,3),(0,3))\right\}$ & $\mathcal{L}_{14}$ \\\hline
(48) &  $\left\{((3,1),(0,1)), ((3,1),(3,3)) \right\} $  & $\left\{((3,0),(1,1)), ((3,3),(1,3))\right\}$ & $\mathcal{L}_{15}$\\\hline
(49) &  $\left\{((1,1),(0,2)), ((2,0),(0,2)) \right\} $  & $\left\{((1,0),(1,2)), ((2,0),(0,2))\right\}$ & $\mathcal{L}_{13}$ \\\hline
(50) &  $\left\{((1,1),(0,3)), ((2,0),(0,3)) \right\} $  & $\left\{((1,0),(1,3)), ((2,0),(0,3))\right\}$ & $\mathcal{L}_{12}$ \\\hline
(51) &  $\left\{((1,1),(1,0)), ((2,0),(0,2)) \right\} $  & $\left\{((1,1),(1,0)), ((2,0),(0,2))\right\}$ & $\mathcal{L}_{14}$ \\\hline
(52) &  $\left\{((1,1),(1,2)), ((2,0),(1,2)) \right\} $  & $\left\{((1,1),(1,2)), ((2,1),(0,2))\right\}$ & $\mathcal{L}_{15}$ \\\hline
(53) &  $\left\{((1,1),(2,1)), ((2,0),(2,1)) \right\} $  & $\left\{((1,2),(1,1)), ((2,2),(0,1))\right\}$ & $\mathcal{L}_{9}$ \\\hline
(54) &  $\left\{((1,1),(2,3)), ((2,0),(2,3)) \right\} $  & $\left\{((1,2),(1,3)), ((2,2),(0,3))\right\}$ & $\mathcal{L}_{10}$  \\\hline
(55) &  $\left\{((1,1),(3,0)), ((2,0),(3,0)) \right\} $  & $\left\{((1,3),(1,0)), ((2,3),(0,0))\right\}$ & $\mathcal{L}_{15}$ \\\hline
(56) &  $\left\{((1,1),(3,1)), ((2,0),(3,1)) \right\} $  & $\left\{((1,3),(1,1)), ((2,3),(0,1))\right\}$ & $\mathcal{L}_{12}$ \\\hline
(57) &  $\left\{((0,2),(1,1)), ((0,2),(2,0)) \right\} $  & $\left\{((0,1),(2,1)), ((0,2),(2,0))\right\}$ & $\mathcal{L}_{13}$ \\\hline
(58) &  $\left\{((0,3),(1,1)), ((0,3),(2,0)) \right\} $  & $\left\{((0,1),(3,1)), ((0,2),(3,0))\right\}$ & $\mathcal{L}_{12}$ \\\hline
(59) &  $\left\{((1,0),(1,1)), ((1,0),(2,0)) \right\} $  & $\left\{((1,1),(0,1)), ((1,2),(0,0))\right\}$ & $\mathcal{L}_{4}$\\\hline
(60) &  $\left\{((1,2),(1,1)), ((1,2),(2,0)) \right\} $  & $\left\{((1,1),(2,1)), ((1,2),(2,0))\right\}$ & $\mathcal{L}_{16}$\\\hline
(61) &  $\left\{((2,1),(1,1)), ((2,1),(2,0)) \right\} $  & $\left\{((2,1),(1,1)), ((2,2),(1,0))\right\}$ & $\mathcal{L}_{2}$ \\\hline
(62) &  $\left\{((2,3),(1,1)), ((2,3),(2,0)) \right\} $  & $\left\{((2,1),(3,1)), ((2,2),(3,0))\right\}$ & $\mathcal{L}_{16}$\\\hline
(63) &  $\left\{((3,0),(1,1)), ((3,0),(2,0)) \right\} $  & $\left\{((3,1),(0,1)), ((3,2),(0,0))\right\}$ & $\mathcal{L}_{13}$ \\\hline
(64) &  $\left\{((3,1),(1,1)), ((3,1),(2,0)) \right\} $  & $\left\{((3,1),(1,1)), ((3,2),(1,0))\right\}$ & $\mathcal{L}_{11}$\\\hline
(65) &  $\left\{((1,3),(0,2)), ((2,2),(0,2)) \right\} $  & $\left\{((1,0),(3,2)), ((2,0),(2,2))\right\}$ & $\mathcal{L}_{17}$ \\\hline
(66) &  $\left\{((1,3),(0,3)), ((2,2),(0,3)) \right\} $  & $\left\{((1,0),(3,3)), ((2,0),(2,3))\right\}$ & $\mathcal{L}_{15}$ \\\hline
(67) &  $\left\{((1,3),(1,0)), ((2,2),(0,2)) \right\} $  & $\left\{((1,1),(3,0)), ((2,0),(2,2))\right\}$ & $\mathcal{L}_{11}$ \\\hline
(68) &  $\left\{((1,3),(1,2)), ((2,2),(1,2)) \right\} $  & $\left\{((1,1),(3,2)), ((2,1),(2,2))\right\}$ & $\mathcal{L}_{3}$ \\\hline
(69) &  $\left\{((1,3),(2,1)), ((2,2),(2,1)) \right\} $  & $\left\{((1,2),(3,1)), ((2,2),(2,1))\right\}$ & $\mathcal{L}_{15}$ \\\hline
(70) &  $\left\{((1,3),(2,3)), ((2,2),(2,3)) \right\} $  & $\left\{((1,2),(3,3)), ((2,2),(2,3))\right\}$ & $\mathcal{L}_{1}$  \\\hline
(71) &  $\left\{((1,3),(3,0)), ((2,2),(3,0)) \right\} $  & $\left\{((1,3),(3,0)), ((2,3),(2,0))\right\}$ & $\mathcal{L}_{14}$ \\\hline
(72) &  $\left\{((1,3),(3,1)), ((2,2),(3,1)) \right\} $  & $\left\{((1,3),(3,1)), ((2,3),(2,1))\right\}$ & $\mathcal{L}_{17}$ \\\hline
(73) &  $\left\{((0,2),(1,3)), ((0,2),(2,2)) \right\} $  & $\left\{((0,1),(2,3)), ((0,2),(2,2))\right\}$ & $\mathcal{L}_{14}$ \\\hline
(74) &  $\left\{((0,3),(1,3)), ((0,3),(2,2)) \right\} $  & $\left\{((0,1),(3,3)), ((0,2),(3,2))\right\}$ & $\mathcal{L}_{16}$ \\\hline
(75) &  $\left\{((1,0),(1,3)), ((1,0),(2,2)) \right\} $  & $\left\{((1,1),(0,3)), ((1,2),(0,2))\right\}$ & $\mathcal{L}_{17}$\\\hline
(76) &  $\left\{((1,2),(1,3)), ((1,2),(2,2)) \right\} $  & $\left\{((1,1),(2,3)), ((1,2),(2,2))\right\}$ & $\mathcal{L}_{13}$\\\hline
(77) &  $\left\{((2,1),(1,3)), ((2,1),(2,2)) \right\} $  & $\left\{((2,1),(1,3)), ((2,2),(1,2))\right\}$ & $\mathcal{L}_{18}$ \\\hline
(78) &  $\left\{((2,3),(1,3)), ((2,3),(2,2)) \right\} $  & $\left\{((2,1),(3,3)), ((2,2),(3,2))\right\}$ & $\mathcal{L}_{12}$\\\hline
(79) &  $\left\{((3,0),(1,3)), ((3,0),(2,2)) \right\} $  & $\left\{((3,1),(0,3)), ((3,2),(0,2))\right\}$ & $\mathcal{L}_{16}$ \\\hline
(80) &  $\left\{((3,1),(1,3)), ((3,1),(2,2)) \right\} $  & $\left\{((3,1),(1,3)), ((3,2),(1,2))\right\}$ & $\mathcal{L}_{14}$\\\hline
 %& {\tiny \textit{(continued on the next page)}} & & \\\hline
\end{tabular}
}
\label{case3}
\vspace{-.2 cm}
\caption{Singularity Removal Constraints Constraints for $\gamma e^{j \theta}=-1+j$}
\end{figure*}

\end{document}